\documentclass[10pt,journal,compsoc]{IEEEtran}

\usepackage{times}
\usepackage{color}
\usepackage[cmex10]{amsmath}
\usepackage{algorithmic}
\usepackage{siunitx}
\usepackage{dsfont}
\usepackage[linesnumbered,ruled,vlined]{algorithm2e}
\DecMargin{1em} 
\usepackage{url}
\usepackage{epsfig,tabularx,amssymb,subfigure,multirow,graphicx, hyphenat}
\usepackage{balance}
\usepackage{amsthm} 
\usepackage{enumitem}
\usepackage{hyperref}

\setlength{\intextsep}{0pt}

\long\def\comment#1{}

\newtheorem{theorem}{\bf Theorem}[section]
\newtheorem{lemma}{\bf Lemma}[section]
\newtheorem{corollary}{\bf Corollary}[section]

\theoremstyle{remark}

\theoremstyle{definition}
\newtheorem{definition}{\bf Definition}

\SetCommentSty{mycommfont}

\begin{document}

\title{FROG: A Fast and Reliable Crowdsourcing Framework (Technical Report)}

\author{
	Peng~Cheng, ~Xiang~Lian, ~Xun~Jian, ~Lei~Chen,~\IEEEmembership{Member,~IEEE}
	\IEEEcompsocitemizethanks{
		\IEEEcompsocthanksitem P. Cheng is with
		the Department of Computer Science and Engineering, Hong Kong
		University of Science and Technology, Kowloon, Hong Kong, China,
		Email: pchengaa@cse.ust.hk.\protect\\
		\IEEEcompsocthanksitem X. Lian is with the Department of Computer
		Science, Kent State University, Ohio, USA, Email: xlian@kent.edu.\protect\\
		\IEEEcompsocthanksitem X. Jian is with
		the Department of Computer Science and Engineering, Hong Kong
		University of Science and Technology, Kowloon, Hong Kong, China,
		Email: xjian@cse.ust.hk.\protect\\
		\IEEEcompsocthanksitem L. Chen is with the Department of Computer
		Science and Engineering, Hong Kong University of Science and
		Technology, Kowloon, Hong Kong, China, Email:
		leichen@cse.ust.hk.} 
}

\markboth{Technical Report}
{Shell \MakeLowercase{\textit{Cheng and Chen}}: Transaction on
	Knowledge and Data Engineering}

\IEEEcompsoctitleabstractindextext{
	
\begin{abstract}

	For decades, the crowdsourcing has gained much attention from both
	academia and industry, which outsources a number of tasks to human
	workers. Typically, existing crowdsourcing platforms include
	CrowdFlower, \textit{Amazon Mechanical Turk} (AMT), and so on, in
	which workers can autonomously select tasks to do. However, due to
	the unreliability of workers or the difficulties of tasks, workers
	may sometimes finish doing tasks either with incorrect/incomplete
	answers or with significant time delays. 
	Existing studies considered improving the task accuracy 
	through voting or learning methods, they usually did not fully take into account
	reducing the latency of the task completion. This is especially
	critical, when a task requester posts a group of tasks (e.g.,
	sentiment analysis), and one can only obtain answers of \textit{all}
	tasks after the last task is accomplished. As a consequence, the
	time delay of even one task in this group could delay the next step
	of the task requester's work from minutes to days, which is quite
	undesirable for the task requester.

	Inspired by the importance of the task accuracy and latency, in this
	paper, we will propose a novel crowdsourcing framework, namely
	\textit{\underline{F}ast and \underline{R}eliable
		cr\underline{O}wdsourcin\underline{G} framework} (FROG), which
	intelligently assigns tasks to workers, such that the latencies of
	tasks are reduced and the expected accuracies of tasks are met.
	Specifically, our FROG framework consists of two important
	components, \textit{task scheduler} and \textit{notification}
	modules. For the task scheduler module, we formalize a \textit{FROG
		task scheduling} (FROG-TS) problem, in which the server actively
	assigns workers to tasks to achieve high task reliability and low task latency. We
	prove that the FROG-TS problem is NP-hard. Thus, we design two
	heuristic approaches, request-based and batch-based scheduling. For
	the notification module, we define an \textit{efficient worker
		notifying} (EWN) problem, which only sends task invitations to those
	workers with high probabilities of accepting the tasks. To tackle
	the EWN problem, we propose a \textit{smooth kernel density
		estimation} approach to estimate the probability that a worker
	accepts the task invitation. Through extensive experiments, we
	demonstrate the effectiveness and efficiency of our proposed FROG
	platform on both real and synthetic data sets.
\end{abstract}

\begin{IEEEkeywords}
	crowdsourcing framework, scheduling algorithm, greedy algorithm,
	EM algorithm
\end{IEEEkeywords}
	
}

\maketitle

\section{Introduction}
\label{sec:introduction}

Nowadays, the crowdsourcing has become a very useful and practical
tool to process data in many real-world applications, such as the
sentiment analysis \cite{mohammad2013crowdsourcing}, image labeling
\cite{welinder2010online}, and entity resolution
\cite{wang2012crowder}. Specifically, in these applications, we may
encounter many tasks (e.g., identifying whether two photos have the
same person in them), which may look very simple to humans, but not
that trivial for the computer (i.e., being accurately computed by
algorithms). Therefore, the crowdsourcing platform is used to
outsource these so-called \textit{human intelligent tasks} (HITs) to
human workers, which has attracted much attention from both academia
\cite{fan2015icrowd, liu2012cdas, li2016crowdsourced} and industry
\cite{ipeirotis2010analyzing}.

Existing crowdsourcing systems (e.g., CrowdFlower \cite{crowdflower}
or \textit{Amazon Mechanical Turk} (AMT) \cite{amt}) usually wait
for autonomous workers to select tasks. As a result, some difficult
tasks may be ignored (due to lacking of the domain knowledge) and
left with no workers for a long period of time (i.e., with high
latency). What is worse, some high-latency (unreliable) workers may
hold tasks, but do not accomplish them (or finish them carelessly),
which would significantly delay the time (or reduce the quality) of
completing tasks. Therefore, it is rather challenging to guarantee
high accuracy and low latency of tasks in the crowdsourcing system,
in the presence of unreliable and high-latency workers.

Consider an example of auto emergency response on interest public places, which monitors the incidents happened at important places (e.g., crossroads). In such applications, results with low latencies and high accuracies are desired. However, due to the limitation of current computer vision and AI technology, computers cannot do it well without help from humans. For example, people can know one car may cause accidents only when they know road signs in pictures and the traffic regulations, what computers cannot do. Applications may embed ``human power'' into the system as a module, which assigns monitoring pictures to crowdsourcing workers and aggregates the answers (e.g., ``Normal'' or ``Accident'') from workers in almost real-time. Thus, the latency of the crowdsourcing module will affect the overall application performance. 

\noindent \textbf{Example 1 (Accuracy and Latency Problems in the
	Crowdsourcing System)} {\it The application above automatically selects and posts 5 pictures as 5 emergency reorganization tasks $t_1 \sim t_5$ at different timestamps,
	respectively, on a crowdsourcing platform. Assume that 3 workers, $w_1 \sim w_3$, from the crowdsourcing system autonomously accept some or all of the 5 tasks, $t_i$ (for
	$1\leq i\leq 5$, posted by
	the emergency response system.

	Table \ref{example_answer_table} shows the answers and time delays of tasks conducted by workers, $w_j$ ($1\leq
	j\leq 3)$, where the last column provides the correctness
	(``$\surd$'' or ``$\times$'') of the emergency reorganization answers against the
	ground truth. Due to the unreliability of workers and the
	difficulties of tasks, workers cannot always do the tasks correctly.
	That is, workers may be more confident to do specific categories of
	tasks (e.g., biology, cars, electronic devices, and/or sports), but
	not others. For example, in Table \ref{example_answer_table}, worker
	$w_2$ tags all pictures (tasks), $t_1\sim t_5$, with 3 wrong labels. Thus, in this case, it is rather challenging to guarantee
	the accuracy/quality of emergency reorganization (task) answers, in the presence of
	such unreliable workers in the crowdsourcing system.

	Furthermore, from Table \ref{example_answer_table},
	all the 5 tasks are completed by workers within 20 seconds, except
	for task $t_2$ which takes worker $w_2$ 5 minutes to finish
	(because of the difficulty of task $t_2$). Such a long latency is highly
	undesirable for the emergency response application, who needs to proceed with the
	emergency reorganization results for the next step.
	Therefore, with the existence of high latency workers in the crowdsourcing system, it
	is also important, yet challenging, to achieve low latency of the
	task completion. 
	
}

\begin{table}[t!] 
	\begin{center}
		\caption{ Answers of Tasks from Workers.} \label{example_answer_table}
		
		\begin{tabular}{c|c|c|c|c}
			{\bf Worker} & {\bf Task} & \textbf{Answer} & \textbf{Time Latency} & \textbf{Correctness} \\
			\hline \hline
			$w_1$ & $t_1$& Normal& 8 s & $\times$\\\hline
			$w_1$ & $t_2$& Accident& 9 s & $\times$\\\hline
			$w_1$ & $t_3$& Accident& 12 s & $\surd$\\\hline\hline
			
			$w_2$ & $t_1$ & Accident&  15 s & \underline{$\times$}\\ \hline
			$w_2$ & $t_2$ & Normal& {\bf 5 min} & \underline{$\surd$}\\ \hline
			$w_2$ & $t_3$ & Normal& 10 s & \underline{$\times$}\\ \hline
			$w_2$ & $t_4$ & Accident& 9 s & \underline{$\surd$}\\ \hline
			$w_2$ & $t_5$ & Accident& 14 s &  \underline{$\times$}\\ \hline\hline
			
			$w_3$ & $t_4$& Accident& 8 s &  $\surd$\\ \hline
			$w_3$ & $t_5$ & Normal& 11 s &  $\surd$\\ \hline\hline
			
		\end{tabular}
	\end{center}
\end{table}

Specifically, the FROG framework contains two important components,
\textit{task scheduler} and \textit{notification} modules. In the
task scheduler module, our FROG framework actively schedules tasks for workers,
considering both accuracy and latency. In particular, we formalize a
novel \textit{FROG task scheduling} (FROG-TS) problem, which finds ``good'' worker-and-task assignments that minimize
the maximal latencies for all tasks and maximize the accuracies
(quality) of task results. We prove that the FROG-TS problem is
NP-hard, by reducing it from the \textit{multiprocessor scheduling
	problem} \cite{gary1979computers}. As a result, FROG-TS is not
tractable. Alternatively, we design two heuristic approaches,
request-based and batch-based scheduling, to efficiently tackle the
FROG-TS problem.

Note that, existing studies on reducing the latency are usually
designed for specific tasks (e.g., filtering or resolving entities)
\cite{parameswaran2014optimal, verroios2015entity} by increasing
prices over time to encourage workers to accept tasks
\cite{gao2014finish}, which cannot be directly used for
general-purpose tasks under the budget constraint (i.e., the
settings in our FROG framework). Some other studies
\cite{haas2015clamshell, fan2015icrowd} removed low-accuracy or
high-latency workers, which may lead to idleness of workers and low
throughput of the system. In contrast, our task scheduler module
takes into account both factors, accuracy and latency, and can design a worker-and-task assignment strategy with high
accuracy, low latency,  and high throughput.

In existing crowdsourcing systems, workers can freely join or leave
the system. However, in the case that the system lacks of active
workers, there is no way to invite more offline workers to perform
online tasks. To address this issue, the notification module in our
FROG framework is designed to notify those offline workers via
invitation messages (e.g., by mobile phones). However, in order to
avoid sending spam messages, we propose an \textit{efficient worker
	notifying} (EWN) problem, which only sends task invitations to those
workers with high probabilities of accepting the tasks. To tackle
the EWN problem, we present a novel \textit{smooth kernel density
	estimation} approach to efficiently compute the probability that a
worker accepts the task invitation.

To summarize, in this paper, we have made the following
contributions.

\begin{itemize}
	\item We propose a new FROG framework for crowdsourcing, which
	consists of two important task scheduler and notification modules in
	Section \ref{sec:frame}.
	
	\item We formalize and tackle a novel worker-and-task scheduling problem in crowdsourcing, namely FROG-TS,
	which assigns tasks to suitable workers, with high reliability and
	low latency in Section \ref{sec:routing_module}.
	
	\item We propose a \textit{smooth kernel density model} to estimate the probabilities that
	workers can accept task invitations for the EWN problem in the
	notification module in Section \ref{sec:notification}.
	
	\item We conduct extensive experiments to verify the effectiveness and efficiency of our proposed FROG framework
	on both real and synthetic data sets in Section \ref{sec:exper}.
\end{itemize}

Section \ref{sec:related} reviews previous studies on the
crowdsourcing. Section \ref{sec:conclusion} concludes this paper.

\section{Problem Definition}
\label{sec:frame}

\subsection{The FROG Framework}

Figure \ref{fig:framework} illustrates our \textit{fast and reliable
	crowdsourcing} (FROG) framework, which consists of \textit{worker
	profile manager}, \textit{public worker pool}, \textit{notification
	module}, \textit{task scheduler module}, and \textit{quality
	controller}.

Specifically, in our FROG framework, the worker profile manager
keeps track of statistics for each worker in the system, including
the response time (or in other words, the latency) and the accuracy
of doing tasks in each category. These statistics are dynamically
maintained, and can be used to guide the task scheduling process (in
the task scheduler module).

Moreover, the public worker pool contains the information of online
workers who are currently available for doing tasks. Different from
the existing work \cite{haas2015clamshell} with exclusive retainer
pool, we use a shared public retainer pool, which shares workers for
different tasks. It can improve the global efficiency of the
platform, and benefit workers with more rewards by assigning with
multiple tasks (rather than one exclusive task for the exclusive
pool).

When the number of online workers in the public worker pool is
small, the notification module will send messages to offline workers
(e.g., via mobile devices), and invite them to join the platform.
Since offline workers do not want to receive too many (spam)
messages, in this paper, we will propose a novel \textit{smooth
	kernel density model} to estimate the probabilities that offline
workers will accept the invitations, especially when the number of
historical samples is small. This way, we will only send task
invitations to those offline workers with high probabilities of
accepting the tasks.

Most importantly, in FROG framework, the task scheduler module assigns tasks to suitable workers with the goals of reducing
the latency and enhancing the accuracy for tasks. In this module, we
formalize a novel \textit{FROG task scheduling} (FROG-TS) problem,
which finds good worker-and-task assignments to minimize the
maximal task latencies and to maximize the accuracies (quality) of
task results. Due to the NP-hardness of this FROG-TS
problem (proved in Section \ref{sec:np-hard_FRC}), we design
two approximate approaches, request-based and batch-based
scheduling approaches.

Finally, the quality controller is in charge of the quality
management during the entire process of the FROG framework. In
particular, before workers are assigned to do tasks, we require that
each worker need to register/subscribe one's expertise categories of
tasks. To verify the correctness of subscriptions, the quality
controller provides workers with some qualification tests, which
include several sample tasks (with ground truth already known).
Then, the system will later assign them with tasks in their
qualified categories. Furthermore, after workers submit their
answers of tasks to the system, the quality controller will check
whether each task has received enough answers. If the answer is yes,
it will aggregate answers of each task (e.g., via voting
methods), and then return final results.

\begin{figure}[t!]\centering
	\scalebox{0.3}[0.3]{\includegraphics{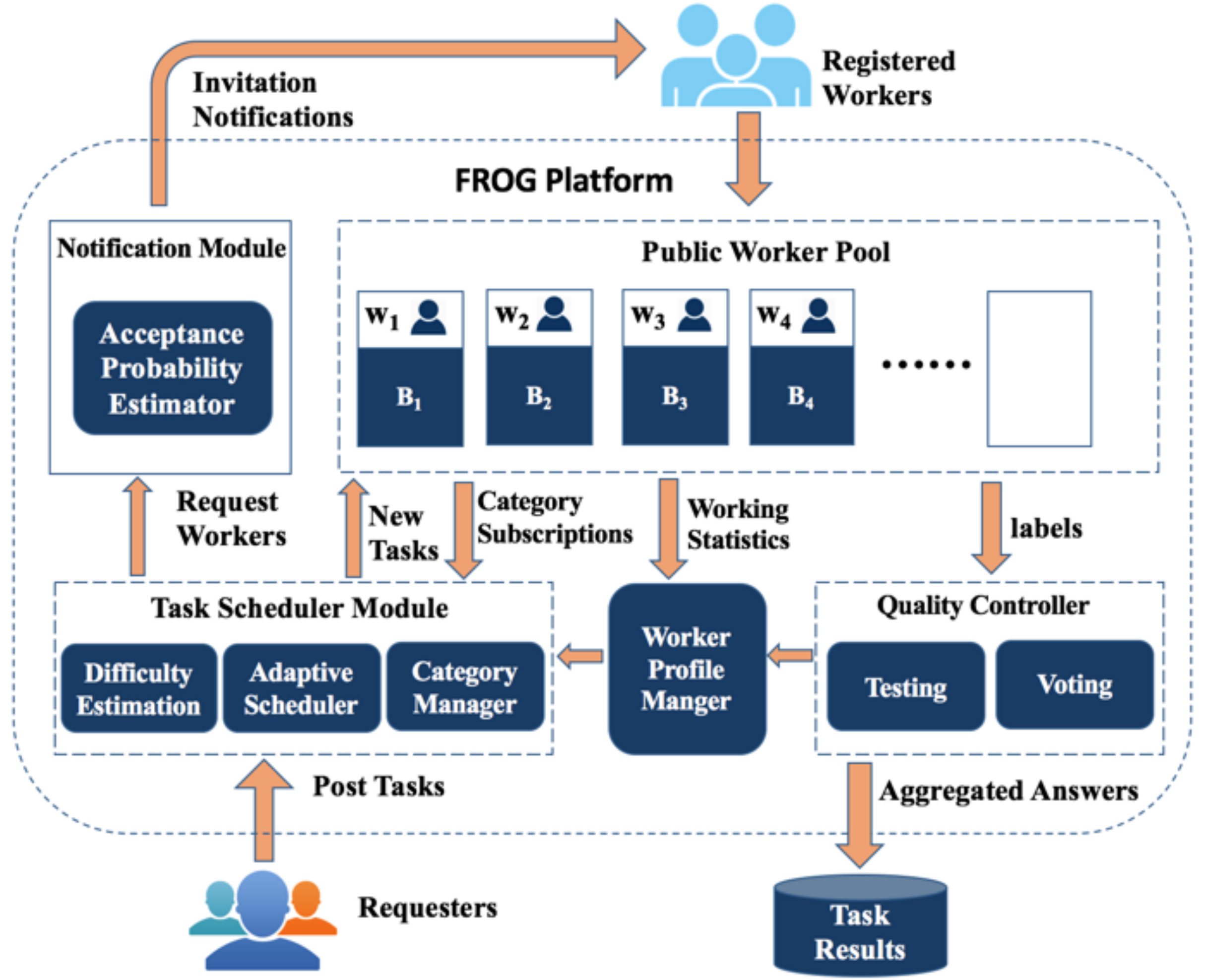}}
	\caption{ An Illustration of the FROG Framework.}
	\label{fig:framework}
\end{figure}

In this paper, we will focus on general functions of two important
modules, task scheduler and notification modules, in the FROG
framework (as depicted in Figure \ref{fig:framework}), which will be
formally defined in the next two subsections. Note that the notification module is only an optional component of our framework. If we remove the notification module from our FROG framework, our FROG framework can continue to work on crowdsourcing tasks (e.g., worker pool management, task scheduling and quality controlling). For example, in \cite{fan2015icrowd} and \cite{haas2015clamshell}, the authors use the ExternalQuestion mechanism of AMT \cite{amt} to manage microtasks on their own Web server and take full control of microtask assignments. On the other hand, if sending messages through mobile phones or notification mechanisms are allowed in our framework, our system can do better by inviting reliable workers.
In addition, we implement our framework and use WeChat \cite{wechat} as its client to send tasks to workers and receive answers from workers. Workers will get paid by WeChat red packets after they contribute to the tasks. Other message Apps such as Whatsapp \cite{whatsapp} and Skype\cite{skype} can also be used as clients.

\subsection{The Task Scheduler Module}

The task scheduler module focuses on finding a good worker-and-task
assignment strategy with low latency (i.e., minimizing the maximum
latency of tasks) and high reliability (i.e., satisfying the
required quality levels of tasks).

\noindent {\bf Tasks and Workers.} We first give the definitions for
tasks and workers in the FROG framework. Specifically, since our
framework is designed for general crowdsourcing platforms, we
predefine a set, $C$, of $r$ categories for tasks, that is,
$C=\{c_1, c_2, ..., c_r\}$, where each task belongs to one category
$c_l \in C$ ($1\leq l\leq r$). Here, each category can be the
subject of tasks, such as cars, food, aerospace, or politics.

\begin{definition}
	$($Tasks$)$ Let $T=\{t_1, t_2, ..., t_m\}$ be a set of $m$ tasks in
	the crowdsourcing platform, where each task $t_i$ ($1\leq i\leq m$)
	belongs to a task category, denoted by $t_i.c\in C$, and arrives at
	the system at the starting time $s_i$. Moreover, each task $t_i$ is
	associated with a user-specified quality threshold $q_i$, which is
	the expected probability that the final result for task $t_i$ is
	correct.
	\label{definition:task}
\end{definition}

Assume that task $t_i$ is accomplished at the completion time $f_i$.
Then, the latency, $l_i$, of task $t_i$ can be given by: $l_i = f_i
- s_i$, where $s_i$ is the starting time (defined in Definition
\ref{definition:task}). Intuitively, the smaller the latency $l_i$
is, the better the performance of the crowdsourcing platform is.

\begin{definition} $($Workers$)$ Let $W=\{w_1, w_2, ..., w_n\}$ be a
	set of $n$ workers. For tasks in category $c_l$, each worker $w_j$ ($1\leq j\leq
	n$) is associated with an accuracy, $\alpha_{jl}$, that $w_j$ do tasks in
	category $c_l$, and a response time, $r_{jl}$.
	\label{definition:worker}
\end{definition}

As given in Definition \ref{definition:worker}, the category
accuracy $\alpha_{jl}$ is the probability that worker $w_j$ can
correctly accomplish tasks in category $c_l$. Here, the response
time $r_{jl}$ measures the period length from the timestamp that
worker $w_j$ receives a task $t_i$ (in category $c_l$) to the time
point that he/she submits $t_i$'s answer to the server.

In the literature, in order to tackle the intrinsic error rate (unreliability) of
workers, there are some existing voting methods for result
aggregations in the crowdsourcing system, such as the majority
voting \cite{cao2012whom}, weighted majority voting
\cite{fan2015icrowd}, half voting\cite{mo2013optimizing}, and
Bayesian voting \cite{liu2012cdas}. For the ease of presentation, in this paper, we use the
majority voting for the result aggregation, which has been well
accepted in many crowdsourcing studies \cite{cao2012whom}. Assuming
that the count of answering task $t_i$ is odd, if the majority
workers (not less than $\lceil\frac{k}{2} \rceil$ workers) vote for
a same answer (e.g., Yes), we take this answer as the final
result of task $t_i$. 
Denote $W_i$ as the set of $k$ workers that do task $t_i$, and $c_l$ as
the category that task $t_i$ belongs to. Then, we have the expected
accuracy of task $t_i$ as follows:
\begin{equation}
\Pr(W_i, c_l) = \sum_{x = \lceil\frac{k}{2} \rceil}^{k}\sum_{W_{i,x}}\Big(\prod_{w_j \in W_{i,x}}\alpha_{jl}\prod_{w_j \in W_i - W_{i,x}}(1-\alpha_{jl})\Big),
\label{eq:task_accuracy}
\end{equation}

\noindent where $W_{i,x}$ is a subset of $W_i$ with $x$ elements.

Specifically, the expected task accuracy, $\Pr(W_i, c_l)$,
calculated with Eq. (\ref{eq:task_accuracy}) is the probability that
more than half of the workers in $W_i$ can answer $t_i$ correctly.
In the case of voting with multiple choices (other than 2 choices, like YES/NO), please refer to \textbf{Appendix A} of supplementary materials
for the equations of the expected accuracy of task $t_i$ with
majority voting or other voting methods.
Table \ref{table0} summarizes the commonly used symbols.
\begin{table}
	\begin{center}
		\caption{Symbols and descriptions.} \label{table0}
		\begin{tabular}{l|l}
			{\bf Symbol} & {\bf \qquad \qquad \qquad Description} \\
			\hline \hline
			$C$ & a set of task categories $c_l$\\
			$T$   & a set of tasks $t_i$\\
			$W$   & a set of workers $w_j$\\
			$t_i.c$ & the category of task $t_i$\\
			$q_i$   & a specific quality value of task $t_i$ \\
			$s_i$   & the start time of task $t_i$\\
			$f_i$   &  the finish time of task $t_i$\\
			$\alpha_{jl}$ & the category accuracy of worker $w_j$ on tasks in category $c_l$\\
			$r_{jl}$ & the response time of worker $w_j$ on tasks in category $c_l$\\
			\hline \hline
		\end{tabular}
	\end{center}
\end{table}

\noindent {\bf The FROG Task Scheduling Problem.} In the task
scheduler module, one important problem is on how to route tasks to
workers in the retainer pool with the guaranteed low latency and
high accuracy. Next, we will formally define the problem of
\textit{FROG Task Scheduling} (FROG-TS) below.

\begin{definition} $($FROG Task Scheduling Problem$)$ Given a set $T$ of $m$ crowdsourcing tasks, and $n$ workers in $W$,
	the problem of \textit{FROG task scheduling} (FROG-TS) is to assign
	workers $w_j\in W$ to tasks $t_i \in T$, such that:
	\begin{enumerate}[leftmargin=*]
		\item the accuracy $\Pr(W_i, c_l)$ (given in Eq. (\ref{eq:task_accuracy})) of task $t_i$ is not lower than the required accuracy threshold $q_i$, 
		
		\item the maximum latency $\max(l_i)$ of tasks in $T$ is minimized,
	\end{enumerate}
	\noindent where $l_i = f_i - s_i$ is the latency of task $t_i$, that is, the duration from the time $s_i$ task $t_i$ is posted in the system
	to the time, $d_i$, task $t_i$ is completed. 
	
	\label{definition:crowdsourcing}
\end{definition}

We will later prove that the FROG-TS problem is NP-hard (in Section
\ref{sec:np-hard_FRC}), and propose two effective approaches,
request-based and batch-based scheduling, to solve this problem in
Section \ref{sec:adaptive_framework}.

\subsection{The Notification Module}

The notification module is in charge of sending notifications to
those offline workers with high probabilities of being available and
accepting the invitations (when the retainer pool needs more
workers). In general, some workers may join the retainer pool
autonomously, but it cannot guarantee that the retainer pool will be
fulfilled quickly. Thus, the notification module will invite more
offline workers to improve the fulfilling speed of the retainer
pool.

Specifically, in our FROG framework, the server side maintains a
public worker pool to support all tasks from the requesters. When
\textit{autonomous workers} join the system with a low rate, the
system needs to invite more workers to fulfill the worker pool, and
guarantees high task processing speed of the platform. One
straightforward method is to notify all the offline workers.
However, this broadcast method may disturb workers, when they are
busy with other jobs (i.e., the probabilities that they accept
invitations may be low). For example, assume that the system has
10,000 registered workers, and only 100 workers may potentially
accept the invitations. With the broadcast method, all 10,000
workers will receive the notification message, which is inefficient
and may potentially damage the user experience. A better strategy is
to send notifications only to those workers who are very likely to
join the worker pool. Moreover, we want to invite workers with high-accuracy and low-latency.
Therefore, we formalize this problem as the \textit{efficient worker
	notifying} (EWN) problem.

\begin{definition} $($Efficient Worker Notifying Problem$)$ Given a timestamp $ts$,
	a set $W$ of $n$ offline workers, the historical online records
	$E_j=\{e_1, e_2,..., e_n\}$ of each worker $w_j$, and the number,
	$u$, of workers that we need to recruit for the public worker pool,
	the problem of \textit{efficient worker notifying} (EWN) is to
	select a subset of workers in $W$ with high accuracies and low latencies to send invitation messages, such
	that:

	\begin{enumerate}
		\item the expected number, $E(P_{ts}(W))$, of workers who accept the invitations is greater than $u$, and
		
		\item the number of workers in $W$, to whom we send notifications, is minimized,
	\end{enumerate}
	
	\noindent where $P_{ts}(.)$ is the probability of workers to accept invitations and log in the worker pool at timestamp $ts$. 
	\label{definition:notification}
\end{definition}

In Definition \ref{definition:notification}, it is not trivial to
estimate the probability, $P_{ts}(W)$, that a worker prefers to log in
the platform at a given timestamp, especially when we are lacking of
his/her historical records. However, if we notify too many workers,
it will disturb them, and in the worst case drive them away from our
platform forever. To solve the EWN problem, we propose an effective
model to efficiently do the estimation in Section
\ref{sec:notification}, with which we select workers with high
acceptance probabilities, $P_{ts}(.)$, to send invitation messages
such that the worker pool can be fulfilled quickly.
Moreover, since we want to invite workers with high acceptance
probabilities, low response times, and high accuracies, we define
the worker dominance below to select good worker candidates.

\begin{definition} (Worker Dominance) Given two worker candidates $w_x$ and $w_y$,
	we say worker $w_x$ dominates worker $w_y$, if it holds that: (1)
	$P_{ts}(w_x) > P_{ts}(w_y)$, (2) $\alpha_x \geq \alpha_y$, and (3)
	$r_x \leq r_y$, where $P_{ts}(w_j)$ is the probability that worker
	$w_j$ is available, and $\alpha_x$ and $r_x$ are the average
	accuracy and response time of worker $w_x$ on his/her subscribed
	categories, respectively. 
	\label{definition:worker_dominance}
\end{definition}

Then, our notification module will invite those offline workers, $w_j$, with high ranking scores (i.e., defined as the number of workers dominated by worker $w_j$ \cite{yiu2007efficient}). We will discuss the details of the ranking later in Section 4.

\section{The Task Scheduler Module}
\label{sec:routing_module}

The task scheduler module actively routes tasks to workers, such
that tasks can be completed with small latency and the quality
requirement of each task is satisfied. In order to improve the
throughput of the FROG platform, in this section, we will estimate
the difficulties of tasks and response times (and accuracies as
well) of workers, based on records of recent answering. In
particular, we will first present effective approaches to estimate
worker and task profiles, and then tackle the \textit{fast and
	reliable crowdsourcing task scheduling} (FROG-TS) problem, by
designing two efficient heuristic-based approaches (due to its
NP-hardness).

\subsection{Worker Profile Estimation}

We first present the methods to estimate the
category accuracy and the response time of a worker, which can be
used for finding good worker-and-task assignments in the FROG-TS
problem.

\noindent \textbf{The Estimation of the Category
	Accuracy.} In the FROG framework, before each worker $w_j$ joins the
system, he/she needs to subscribe some task categories, $c_l$,
he/she would like to contribute to. Then, worker $w_j$ will complete
a set of qualification testing tasks $T_c=\{t_1, t_2, ..., t_m\}$ of category $c_l$,
by returning his/her answers, $A_j=\{a_{j1}, a_{j2},$ $...,
a_{jm}\}$, respectively. Here, the system has the ground truth of
the testing tasks in $T_c$, denoted as $G=\{g_1, g_2, ..., g_m\}$.

Note that, at the beginning, we do not know the difficulties of the
qualification testing tasks. Therefore, we initially treat all
testing tasks with equal difficulty (i.e., 1). Next, we estimate the
category accuracy, $\bar{\alpha}_{jl}$, of worker $w_j$ on category
$c_l$ as follows:

\begin{equation}
\bar{\alpha}_{jl} = \frac{\sum_{i=1}^{|T_c|} \mathds{1} (a_{ji}=g_i)}{|T_c|}
\label{eq:naive_accuracy}
\end{equation}

\noindent where $\mathds{1} (v)$ is an indicator function (i.e., if
$v$ is $true$, we have $\mathds{1} (v) = 1$; otherwise, $\mathds{1} (v)
= 0$), and $|T_c|$ is the number of qualification testing tasks. Note that, $t_i.c = c_l, \forall t_i\in T_c$.
Intuitively, Eq.~(\ref{eq:naive_accuracy}) calculates the percentage
of the correctly answered tasks (i.e., $a_{ji}=g_i$) by worker $w_j$
(among all testing tasks).

In practice, the difficulties of testing tasks can be different.
Intuitively, if more workers provide wrong answers for a task, then
this task is more difficult; similarly, if a high-accuracy worker
fails to answer a task, then this task is more likely to be
difficult.

Based on the intuitions above, we can estimate the difficulty of a
testing task as follows. Assume that we have a set, $W_c$, of
workers $w_j$ (with the current category accuracies  $\alpha_{jl}$)
who have passed the qualification test. Then, we give the definition
of the difficulty $\beta_i$ of a testing task $t_i$ below: 
\begin{equation}
\beta_i = \frac{\sum_{j=1}^{|W_c|} \big(\mathds{1} (a_{ji} \neq
	g_i)\cdot \bar{\alpha}_{jl}\big)}{\sum_{j=1}^{|W_c|}
	\bar{\alpha}_{jl}} \label{eq:difficulty}
\end{equation}

\noindent where $\mathds{1} (v)$ is an indicator function, and
$|W_c|$ is the number of workers who passed the qualification test.

In Eq.~(\ref{eq:difficulty}), the numerator (i.e.,
$\sum_{j=1}^{|W_c|} \big(\mathds{1} (a_{ji} \neq g_i) \cdot
\bar{\alpha}_{jl}\big)$) computes the weighted count of wrong
answers by workers in $W_c$ for a testing task $t_i$. The
denominator (i.e., $\sum_{j=1}^{|W_c|} \bar{\alpha}_{jl}$) is used
to normalize the weighted count, such that the difficulty $\beta_i$
of task $t_i$ is within the interval $[0, 1]$. The higher $\beta_i$
is, the more difficult $t_i$ is.
In turn, we can treat the difficulty $\beta_i$ of task $t_i$ as a
weight factor, and rewrite the category accuracy,
$\bar{\alpha}_{jl}$, of worker $w_j$ on category $c_l$ in
Eq.~(\ref{eq:naive_accuracy}) as:
\begin{equation}
\bar{\alpha}_{jl} = \frac{\sum_{i=1}^{|T_c|} \big(\mathds{1}
	(a_{ji}=g_i) \cdot \beta_i\big)}{\sum_{i=1}^{|T_c|} \beta_i}.
\end{equation}

\vspace{0.5ex}\noindent \textbf{The Update of the Category
	Accuracy.} 
After worker $w_j$ passes the qualification test of task category
$c_l$, he/she will be assigned with tasks in that category.
However, the category accuracy of a worker may vary over time. For
example, on one hand, the worker may achieve more and more accurate
results, as he/she is more experienced in doing specific tasks. On
the other hand, the worker may become less accurate, since he/she is
tired after a long working day. To keep tracking the varying accuracies of workers, we may update their accuracy based on their performance on their latest $k$ tasks $T_r$ in category $c_l$.

Assume that the aggregated results for $k$ latest real tasks in $T_r$ is $\{g'_1,
g'_2, ...,$ $g'_k\}$, and answers provided by $w_j$ are $\{a'_{j1},
a'_{j2}, ..., a'_{jk}\}$, respectively. Then, we update the
category accuracy $\alpha_{jl}$ of worker $w_j$ on category $c_l$ as
follows:
\begin{equation}
\alpha_{jl} = \theta_j \cdot \bar{\alpha}_{jl} + (1-\theta_j) \cdot \frac{\sum_{i=1}^{k}
	\mathds{1} (a'_{ji}=g'_i)}{k},
\label{eq:category_accuracy}
\end{equation}

\noindent where $\theta_j = \frac{|W_c|}{|W_c| + k}$ is a balance parameter to combine the performance of each worker in testing tasks and real tasks.
We can use the aggregated results of the real tasks in the latest 10$\sim$20 minutes to update the workers' category accuracy with Eq. (\ref{eq:category_accuracy}).

\noindent \textbf{The Estimation of the Category
	Response Time.} In reality, since different workers may have
different abilities, skills, and speeds, their response times could
be different, where the response time is defined as the length of
the period from the timestamp that the task is posted to the time
point that the worker submits the answer of the task to the server.

Furthermore, the response time of each worker may change temporally
(i.e., with temporal correlations). To estimate the response time,
we utilize the latest $\eta$ response records of worker $w_j$ for
answering tasks in category $c_l$, and apply the
\textit{least-squares method} \cite{leon1980linear} to predict the
response time, $r_{jl}$, of worker $w_j$ in a future timestamp. The input of the least-squares method is the $\eta$ latest (timestamp, response time) pairs. The
least-squares method can minimize the summation of the squared
residuals, where the residuals are the differences between the
recent $\eta$ historical values and the fitted values provided by
the model. We use the fitted line to estimate the category
response time in a future timestamp.

The value of $\eta$ may affect the sensitiveness and stability of the estimation of the category response time. Small $\eta$ may lead the estimation sensitive about the response times of workers, however, the estimated value may vary a lot. Large $\eta$ causes the estimation stable, but insensitive. In practice, we can set $\eta$ as the number of responses of worker $w_j$ to the tasks in category $c_l$ in recent 10 $\sim$ 20 minutes.

\subsection{Task Profile Estimation}

In this subsection, we discuss the task difficulty, which may affect
the latency of accomplishing tasks.

\noindent \textbf{The Task Difficulty.} Some tasks in the
crowdsourcing system are in fact more difficult than others. In AMT \cite{amt},
autonomous workers pick tasks by themselves. As a consequence,
difficult tasks will be left without workers to conduct. In
contrast, in our FROG platform, the server can designedly
assign/push difficult tasks to reliable and low-latency workers to
achieve the task quality and reduce the time delays.

For a given task $t_i$ in category $c_l$ with $R$ possible answer
choices ($R=2$, in the case of YES/NO tasks), assume that $|W_i|$
workers are assigned with this task. Since some workers may skip the task
(without completing the task), we denote $\gamma_i$ as the number of
workers who skipped task $t_i$, and $\Omega_i$ as the set of received
answers, where $|\Omega_i|+ \gamma_i = |W_i|$. Then, we can estimate the
difficulty $d_i$ of task $t_i$ as follows:
\begin{equation}
d_i = \frac{\gamma_i}{|W_i|} + \frac{|\Omega_i|}{|W_i|} \cdot
\frac{Entropy(t_i, \Omega_i)}{MaxEntropy(R)} + \epsilon,
\label{eq:task_difficulty}
\end{equation}
\noindent where $\epsilon$ is a small constant representing the base difficulty of tasks. Here, in Eq.~(\ref{eq:task_difficulty}), we have:
\begin{equation}
Entropy(t_i, \Omega_i) = \sum_{r=1, W_{i,r} \neq \emptyset}^{R} - \frac{\sum_{w_j \in W_{i,r}} \alpha_{jl} }{\sum_{w_j \in
		W_i} \alpha_{jl}} \log
\left(\frac{\sum_{w_j \in W_{i,r}} \alpha_{jl}}{\sum_{w_j \in
		W_i} \alpha_{jl}}\right), \label{eq:task_entropy}
\end{equation}
\begin{equation}
MaxEntropy(R) = R\cdot \left( -\frac{1}{R}\cdot
\log\left(\frac{1}{R}\right)\right) = \log(R),
\label{eq:upper_entropy}
\end{equation}

\noindent where $W_{i,r}$ is the set of workers who select the $r$-th
possible choice of task $t_i$ and $|\Omega_i|$ is the number of
received answers. Note that, when at the beginning no worker answers task $t_i$, we assume its entropy $Entropy(t_i, \Omega_i) = 0$.

\noindent {\bf Discussions on the Task Difficulty.} The task
difficulty $d_i$ in Eq.~(\ref{eq:task_difficulty}) is estimated
based on the performance of workers $W_i$ on doing task $t_i$. Those workers who
skipped the task treat tasks as being the most difficult (i.e., with
difficulty equal to 1), whereas for those who completed the task, we use
the normalized entropy (or the diversity) of their answers to
measure the task difficulty.

Specifically, the first term (i.e., $\frac{\gamma_i}{|W_i|}$) in
Eq.~(\ref{eq:task_difficulty}) indicates the percentage of workers
who skipped task $t_i$. Intuitively, when a task is skipped by more
percentage of workers, it is more difficult.

The second term in Eq.~(\ref{eq:task_difficulty}) is to measure the
task difficulty based on answers from those $\frac{|\Omega_i|}{|W_i|}$
percent of workers (who did the task). Our observation is as
follows. When the answers of workers are spread more evenly (i.e.,
more diversely), it indicates that it is harder to obtain a final
convincing answer of the task with high confidence. In this paper,
to measure the diversity of answers from workers, we use the
entropy \cite{shannon2001mathematical}, $Entropy(t_i, \Omega_i)$ (as given in
Eq.~(\ref{eq:task_entropy})), of answers, with respect to the
accuracies of workers. Intuitively, when a task is difficult to
complete, workers will get confused, and eventually select diverse
answers, which leads to high entropy value. Therefore, larger
entropy implies higher task difficulty. Moreover, we also normalize this
entropy in Eq.~(\ref{eq:task_difficulty}), that is, dividing it by
the maximum possible entropy value, $MaxEntropy(R)$ (as given in
Eq.~(\ref{eq:upper_entropy})).

For the subjective tasks, such as image labeling, translation and knowledge acquisition, we may use other data mining or machine learning methods \cite{liu2006image, dascalu2014analyzing} to estimate their difficulties. For example, the number of words and the level of words of a sentence can be used as features to estimate its difficulty for workers to translate.

\subsection{Hardness of the FROG-TS Problem}
\label{sec:np-hard_FRC}

We prove that the FROG-TS problem is NP-hard, by
reducing it from the \textit{multiprocessor scheduling problem}
(MSP) \cite{gary1979computers}.

\begin{theorem} (Hardness of the FROG-TS Problem) The problem of FROG Task Scheduling (FROG-TS) is NP-hard. \label{lemma:expensive_worker}
\end{theorem}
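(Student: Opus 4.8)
The plan is to reduce the decision version of the classical \emph{multiprocessor scheduling problem} (MSP) to FROG-TS. An MSP instance consists of $N$ jobs with positive processing times $p_1,\dots,p_N$, a number $M$ of identical machines, and a bound $D$; the question is whether the jobs can be partitioned among the $M$ machines so that the load (total processing time) of every machine is at most $D$. This is NP-complete \cite{gary1979computers}, already for $M=2$. Given such an instance, I would construct a FROG-TS instance in polynomial time: create $M$ workers $w_1,\dots,w_M$ and $N$ tasks $t_1,\dots,t_N$; introduce $N$ categories $c_1,\dots,c_N$ and set $t_i.c=c_i$; set every start time $s_i=0$ and every quality threshold $q_i=1$; and for each worker $w_j$ and category $c_i$ set the category accuracy $\alpha_{ji}=1$ and the category response time $r_{ji}=p_i$. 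The target maximum latency is $D$.

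Two structural facts then need to be verified. First, since every worker answers every category correctly with probability $1$, assigning a single worker $w_j$ to a task $t_i$ already yields $\Pr(W_i,c_l)=\alpha_{ji}=1\ge q_i$ (here $k=1$ is odd, so Eq.~(\ref{eq:task_accuracy}) applies directly), so the quality constraint is met by any assignment that gives each task at least one worker --- and feasibility forces at least one worker per task, since an unserved task never completes. Second, under the model's assumption that a worker handles its assigned tasks one at a time, adding extra workers to a task can only push completion times later, so some optimal schedule assigns exactly one worker to each task. Under that normalization an assignment is just a map from tasks to workers; the tasks routed to $w_j$ are processed in some order, and the last of them completes at time $\sum_{i:\,t_i\mapsto w_j} r_{ji}=\sum_{i:\,t_i\mapsto w_j} p_i$, i.e.\ the load of machine $j$ in the corresponding MSP schedule. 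Consequently $\max_i l_i\le D$ for the FROG-TS instance if and only if the MSP instance has a schedule of makespan at most $D$.

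Since the construction is polynomial (it produces $M+N$ entities and $O(MN)$ numerical parameters) and the equivalence above shows that any algorithm solving FROG-TS optimally would decide MSP, FROG-TS is NP-hard. I expect the real work to lie not in the reduction itself but in the two ``without loss of generality'' arguments tied to the voting and completion semantics: one must argue carefully that, with $\alpha_{ji}=1$ and $q_i=1$, an optimal scheduler never benefits from redundant assignments, and that a task counts as completed once its assigned workers have returned enough answers to meet $q_i$, so that each worker's completion time collapses to a plain sum of response times and the objective becomes exactly the MSP makespan. A smaller point to make explicit is why one category per task is legitimate: response times in Definition~\ref{definition:worker} are indexed by (worker, category), so distinct per-job processing times are encoded by distinct categories while all accuracies stay equal to $1$.
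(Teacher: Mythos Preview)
Your proposal is correct and follows essentially the same reduction as the paper: one category per job, identical per-category response times across workers so that processing time depends only on the task, and a quality constraint that is trivially met by a single answer. The only cosmetic difference is that the paper sets each $q_i$ strictly below the minimum worker accuracy (so one answer suffices), whereas you set $\alpha_{ji}=q_i=1$; your version is a bit more explicit about the ``one worker per task'' normalization, which the paper asserts without spelling out.
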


\begin{proof}
	We prove the lemma by a reduction from the multiprocessor scheduling problem (MSP).
	A multiprocessor scheduling problem can be described as follows: Given a set $J$ of $m$
	jobs where job $j_i$ has length $l_i$ and a number of $n$ processors, the multiprocessor
	scheduling problem is to schedule all jobs in $J$ to $n$ processors without overlapping
	such that the time of finishing all the jobs is minimized.
	
	For a given multiprocessor scheduling problem, we can transform it to an instance of
	FROG-TS problem as follows: we give a set $T$ of $m$ tasks and each task $t_i$
	belongs to a different category $c_i$ and the specified accuracy is lower than the lowest
	category accuracy of all the workers, which means each task just needs to be answered by one worker.
	For $n$ workers, all the workers have the same response time $r_i = l_i$ for the
	tasks in category $c_i$, which leads to the processing time of any task $t_i$ is
	always $r_{i}$ no matter which worker it is assigned to.
	
	As each task just needs to be assigned to one worker, this FROG-TS problem instance is to minimize the maximum completion time of task $t_i$ in $T$, which is identical to minimize the time of finishing
	all the jobs in the given multiprocessor scheduling problem. With this mapping
	it is easy to show that the multiprocessor scheduling problem instance can be
	solved if and only if the transformed FROG-TS problem can be solved.
	
	This way, we reduce MSP to the FROG-TS problem. Since MSP is known
	to be NP-hard \cite{gary1979computers}, FROG-TS is also
	NP-hard, which completes our proof.
\end{proof}

The FROG-TS problem focuses on completing multiple tasks that
satisfy the required quality thresholds, which requires that each
task is answered by multiple workers. Thus, we cannot directly use
the existing approximation algorithms for the MSP problem (or its
variants) to solve the FROG-TS problem.
Due to the NP-hardness of our FROG-TS problem, in the next
subsection, we will introduce an adaptive task routing approach with
two worker-and-task scheduling algorithms, \textit{request-based}
and \textit{batch-based scheduling} approaches to efficiently
retrieve the FROG-TS answers.

\subsection{Adaptive Scheduling Approaches}
\label{sec:adaptive_framework}

In this subsection, we first estimate the delay probability of each task. The higher the delay probability is, the more
likely the task will be delayed. Then we propose two adaptive scheduling strategies, request-based scheduling and batch-based scheduling, to iteratively assign workers to the task with the highest delay probability such that the maximum processing time of tasks is minimized.

\subsubsection{The Delay Probability}

As mentioned in the second criterion of the FROG-TS problem (i.e.,
in Definition \ref{definition:crowdsourcing}, we want to minimize
the maximum latency of tasks in $T$. In order to achieve this goal,
we will first calculate the delay probability, $L(t_i)$, of task
$t_i$ in $T$, and then assign workers to those tasks with high delay
probabilities first, such that the maximum latency of tasks can be
greedily minimized.

We denote the current timestamp as $\bar{\epsilon}$. Let $\epsilon_i$ ($= \min \{\bar{\epsilon} - s_i, l_i\}$) be the time lapse of task $t_i$ and $\epsilon_{max}$ ($=\max_{t_i \in T} \epsilon_i$) be the current maximum time lapse and $\bar{r}_l$ be the average response time of task $t_i$ in category $c_l$. Then, $\lceil\frac{\epsilon_{max} -\epsilon_i}{\bar{r}_l}\rceil$ is the number of more rounds for task $t_i$ to enlarge the maximum time lapse. We denote $P(t_i, \omega_k)$ as the probability of that the task $t_i$  will not meet its accuracy requirement in a single round $\omega_k$. As a difficult task will result in evenly distributed answers according to the definition of $d_i$ in Eq. (\ref{eq:task_difficulty}), a task $t_i$ having a larger $d_i$ will have a higher probability $P(t_i, \omega_k)$. Moreover, a task $t_i$ with a higher specific quality $q_i$ will also have a higher probability $P(t_i, \omega_k)$. Then, we have the theorem below.

\begin{theorem}
	We assume $P(t_i, \omega_k)$ is positively related to the difficulty $d_i$ of task $t_i$, which is:
	$$P(t_i, \omega_k)\propto d_i\cdot q_i.$$
	Then, the delay probability $L(t_i)$ of task $t_i$ can be estimated by 
	\begin{equation}
	L(t_i) \propto (d_i\cdot q_i)^{\lceil\frac{\epsilon_{max} -\epsilon_i}{\bar{r}_l}\rceil}\label{eq:delay_possibility}
	\end{equation}
	
	\noindent where $d_i$
	is the difficulty of task $t_i$ given by
	Eq.~(\ref{eq:task_difficulty}) and $\epsilon_i$ is the time lapse of task $t_i$.
\end{theorem}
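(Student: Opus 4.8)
The plan is to read the event ``$t_i$ becomes the latency bottleneck'' as the event that $t_i$ still needs more answering rounds than the currently slowest task has outstanding, and then to decompose that event into a conjunction of per-round failure events whose probability the hypothesis already pins down. First I would fix the bookkeeping: at the current timestamp $\bar{\epsilon}$ the task attaining the maximum time lapse has consumed $\epsilon_{max}$ time units while $t_i$ has consumed only $\epsilon_i$, and each further answering round for a task in category $c_l$ costs, on average, $\bar{r}_l$. Hence the number of extra rounds $t_i$ can absorb before it would by itself dictate a new maximum latency is $N_i := \lceil \frac{\epsilon_{max}-\epsilon_i}{\bar{r}_l}\rceil$, the exact exponent appearing in Eq.~(\ref{eq:delay_possibility}). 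Consequently $t_i$ delays the batch only if in each of these rounds $\omega_1,\dots,\omega_{N_i}$ the answers aggregated so far still fail to reach the accuracy threshold $q_i$; if at any intermediate round the threshold is met, $t_i$ terminates and never enlarges the makespan.

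Next I would invoke the modelling assumption that the per-round failure events are (approximately) independent and identically distributed, each occurring with probability $P(t_i,\omega_k)$; this is precisely what licenses multiplying them. Under this assumption, $L(t_i)=\Pr[\,t_i\text{ fails all }N_i\text{ rounds}\,]=\prod_{k=1}^{N_i}P(t_i,\omega_k)=P(t_i,\omega_k)^{N_i}$. Substituting the theorem's hypothesis $P(t_i,\omega_k)\propto d_i\cdot q_i$ — which is itself motivated by the observation that a larger difficulty $d_i$ spreads worker answers more evenly (cf.\ the entropy term in Eq.~(\ref{eq:task_difficulty})) and a higher required quality $q_i$ is harder to clear in a single round — immediately yields $L(t_i)\propto (d_i\cdot q_i)^{N_i}$, i.e.\ Eq.~(\ref{eq:delay_possibility}).

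The main obstacle is conceptual rather than computational: the i.i.d.-across-rounds hypothesis is an idealization, since the available workers, their accuracies, and the partial vote tally all drift between rounds, and the identification of ``being the bottleneck'' with ``failing every one of the next $N_i$ rounds'' ignores that competing tasks may also keep growing. I would therefore keep the statement at the level of an estimate — consistent with the ``$\propto$'' and ``can be estimated by'' phrasing — and observe that the scheduler only uses $L(t_i)$ to \emph{rank} tasks, so the unknown proportionality constant and any slowly varying per-round dependence are harmless as long as $L(t_i)$ is monotone increasing in $d_i$, $q_i$, and $N_i$, which the derived form manifestly is.
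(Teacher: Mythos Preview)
Your proposal is correct and follows essentially the same line as the paper's own proof: identify $N_i=\lceil(\epsilon_{max}-\epsilon_i)/\bar r_l\rceil$ as the number of remaining rounds, write $L(t_i)$ as the probability that $t_i$ fails in every one of those rounds, factor under the independence assumption, and substitute $P(t_i,\omega_k)\propto d_i\cdot q_i$. Your discussion of the i.i.d.\ idealization and the ranking-only use of $L(t_i)$ is additional commentary not present in the paper, but the underlying argument is the same.
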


\begin{proof}
	Task $t_i$ will be delayed if it will not finish in the next $\lceil\frac{\epsilon_{max} -\epsilon_i}{\bar{r}_l}\rceil$ rounds. Then, we have:
	
	\begin{eqnarray}
	L(t_i) =&& P\{t_i \text{ is not finished in round } \omega_{x_1}, \omega_{x_2},..., \omega_{x_{\lceil\frac{\epsilon_{max} -\epsilon_i}{\bar{r}_l}\rceil}} \}\notag\\
	=&& \prod_{k=1}^{\lceil\frac{\epsilon_{max} -\epsilon_i}{\bar{r}_l}\rceil} P\{t_i \text{ is not finished in round } \omega_{x_k} \}\notag\\
	=&& \prod_{k=1}^{\lceil\frac{\epsilon_{max} -\epsilon_i}{\bar{r}_l}\rceil} P(t_i, \omega_{x_k})\notag\\
	\propto&& (d_i\cdot q_i)^{\lceil\frac{\epsilon_{max} -\epsilon_i}{\bar{r}_l}\rceil}\label{eq:indepent_prob}
	\end{eqnarray}
	Eq.~(\ref{eq:indepent_prob}) holds since we assume the probability $P(t_i, \omega_k)$ is positively related to the difficulty and specific quality of  $t_i$. 
\end{proof}

Note that, in this work, we take two major factors: the difficulty and  specific quality of a task, into consideration to build our framework and ignore other minor factors (e.g., spammers and copying workers). We will consider these minor factors as our future work.

\subsubsection{Request-based Scheduling (RBS) Approach}

With the estimation of the delay probabilities of tasks, we propose
a \textit{request-based scheduling} (RBS) approach. In this
approach, when a worker becomes available, he/she will send a
request for the next task to the server. Then, the server calculates
the delay probabilities of the on-going tasks on the platform, and
greedily return the task with the highest delay probability to the
worker.

{
	\begin{algorithm}[th]
		\KwIn{A worker $w_j$ requesting for his/her next task and a set $T= \{t_1, t_2, ..., t_v\}$ of $v$ uncompleted tasks}
		\KwOut{Returned task $t_i$}
		\ForEach{task $t_i$ \textbf{in} $T$} {
			calculate the delay possibility value of $t_i$ with Eq.~(\ref{eq:delay_possibility})\;
		}
		
		select one task $t_i$ with the highest delay probability\;
		
		\uIf{the expected accuracy of $t_i$ is higher than $q_i$}{
			Remove $t_i$ from $T$\;
		}
		
		\Return{$t_i$}\;
		\caption{GreedyRequest($W$, $T$)}
		\label{alg:greedy_request}
	\end{algorithm}
}

The pseudo code of our request-based scheduling approach, namely
{\sf GreedyRequest}, is shown in Algorithm \ref{alg:greedy_request}.
It first calculates the delay probability of each uncompleted task
in $T$ (lines 1-2). Then, it selects a suitable task $t_i$ with the
highest delay probability (line 3). If we find the expected accuracy
of task $t_i$ (given in Eq.~(\ref{eq:task_accuracy})) is higher than
the quality threshold $q_i$, then we will remove task $t_i$ from
$T$. Finally, we return/assign task $t_i$ to worker $w_j$, who is requesting his/her next task.

\noindent \textbf{The Time Complexity of RBS.} We next analyze the
time complexity of the request-based scheduling approach, {\sf
	GreedyRequest}, in Algorithm \ref{alg:greedy_request}. We assume that each task has received $h$ answers. For each task
$t_i$, to compute its difficulty, the time complexity is
$O(h)$. Thus, the time complexity of computing delay probabilities
for all $v$ uncompleted tasks is given by $O(v\cdot h)$ (lines 1-2). Next,
the cost of selecting the task $t_i$ with the highest delay
probability is $O(v)$ (line 3). The cost of checking the
completeness for task $t_i$ and removing it from $T$ is given by
$O(1)$. As a result, the time complexity of our request-based
scheduling approach is given by $O(v\cdot h)$.

\subsubsection{Batch-based Scheduling (BBS) Approach}

Although the RBS approach can easily and quickly respond to each
worker's request, it in fact does not have the control on workers in
this request-and-answer style. Next, we will propose an orthogonal
\textit{batch-based scheduling} (BBS) approach, which assigns each
worker with a list of suitable tasks in a batch, where the length of
the list is determined by his/her response speed.

The intuition of our BBS approach is as follows. If we can assign
high-accuracy workers to difficult and urgent tasks and low-accuracy
workers with easy and not that urgent tasks, then the worker labor
will be more efficient and the throughput of the platform will
increase.

Specifically, in each round, our BBS approach iteratively picks a
task with the highest delay probability (among all the remaining
tasks in the system), and then greedily selects a minimum set of
workers to complete this task. Algorithm \ref{alg:greedy_assign}
shows the pseudo code of the BBS algorithm, namely {\sf
	GreedyBatch}. In particular, since no worker-and-task pair is
assigned at the beginning, we initialize the assignment set
$\mathbb{A}$ as an empty set (line 1). Then, we calculate the delay
probability of each unfinished task (given in
Eq.~(\ref{eq:delay_possibility})) (lines 2-3). Thereafter, we
iteratively assign workers for the next task $t_i$ with the highest
delay probability (lines 4-6). Next, we invoke Algorithm {\sf
	MinWorkerSetSelection}, which selects a minimum set, $W_o$, of
workers who satisfy the required accuracy threshold of task $t_i$
(line 7). If $W_o$ is not empty, then we insert task-and-worker
pairs, $\langle t_i, w_j\rangle$, into set $\mathbb{A}$ (lines
8-10). If each worker $w_i$ cannot be assigned with more tasks, then
we remove him/her from $W$ (lines 11-12). Here, we decide whether a
worker $w_j$ can be assigned with more tasks, according to his/her
response times on categories, his/her assigned tasks, and the round
interval of the BBS approach. That is, if the summation of response
times of the assigned tasks is larger than the round interval, then
the worker cannot be assigned with more tasks; otherwise, we can
still assign more tasks to him/her. 

\begin{algorithm}[h]
	\KwIn{A set, $T= \{t_1, t_2, ..., t_m\}$, of $m$ unfinished tasks and a set, $W=\{w_1, w_2, ..., w_n\}$, of $n$ workers}
	\KwOut{Assignment $\mathbb{A}=\{\langle t_i, w_j\rangle\}$}
	$\mathbb{A} \gets \emptyset$\;
	\ForEach{task $t_i$ \textbf{in} $T$} {
		calculate the delay possibility value of $t_i$ with Eq.~(\ref{eq:delay_possibility})\;
	}
	
	\While{$T \neq \emptyset$ \textbf{and} $W \neq \emptyset$} {
		select task $t_i$ with the highest delay probability value\;
		remove $t_i$ from $T$\;
		$W_o \gets$ MinWorkerSetSelection($t_i$, $W$, $W_i$)\;
		\If{$W_o \neq \emptyset$}{
			\ForEach{$w_j \in W_o$}{
				Insert $\langle t_i, w_j\rangle$ into $\mathbb{A}$\;
				\If{$w_j$ cannot be assigned with more tasks}{
					Remove $w_j$ from $W$\;
				}
			}
		}
	}
	
	\Return{$\mathbb{A}$}\;
	\caption{GreedyBatch($W$, $T$)}
	\label{alg:greedy_assign}
\end{algorithm}

\noindent\textbf{Minimum Worker Set Selection.} In line 7 of
Algorithm 2 above, we mentioned a {\sf MinWorkerSetSelection}
algorithm, which selects a minimum set of workers satisfying the
constraint of the quality threshold $q_i$ for task $t_i$. We will
discuss the algorithm in detail, and prove its correctness below.

Before we provide the algorithm, we first present one property of
the expected accuracy of a task.

\begin{lemma} Given a set of workers, $W_i$, assigned to task $t_i$ in category $c_l$, the expected accuracy of task $t_i$ can be calculated as
	follows:
	\begin{eqnarray}
	\Pr(W_i, c_l) &=& \Pr(W_i-\{w_j\}, c_l) \label{eq:rewrite_expect_accuracy}\\
	&+ &\alpha_{jl}\Big(\sum_{U}\big(\prod_{w_o \in U}\alpha_{ol}\prod_{w_o \in W_i - U - \{w_j\}}(1-\alpha_{ol})\big)\Big)\notag
	\end{eqnarray}
	\noindent where {$U = W_{i,\lceil\frac{k}{2} \rceil} - \{w_j\}$} and { $\Pr(W_i, c_l)$} is defined in Eq.~(\ref{eq:task_accuracy}).
	\label{lemma:rewrite_expect_accuracy}
\end{lemma}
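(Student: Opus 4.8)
The plan is to prove the identity by conditioning on whether worker $w_j$ answers task $t_i$ correctly, and then rewriting the resulting ``at least $\lceil k/2\rceil-1$ correct among the others'' event as an ``at least $\lceil k/2\rceil$ correct'' event plus a single boundary term. Throughout write $k=|W_i|$, and recall from Eq.~(\ref{eq:task_accuracy}) that $\Pr(W_i,c_l)$ is exactly the probability that at least $\lceil k/2\rceil$ of the $k$ workers in $W_i$ answer $t_i$ correctly, the workers answering independently. For a threshold $t$ let $E_{\ge t}$ and $E_{=t}$ denote, respectively, the events that at least $t$ and exactly $t$ of the $k-1$ workers in $W_i-\{w_j\}$ answer $t_i$ correctly.

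First I would split the sample space according to the indicator that $w_j$ is correct; this event is independent of the answers of the other $k-1$ workers and has probability $\alpha_{jl}$. If $w_j$ is correct, $w_j$ already contributes one correct vote, so the majority-correct event for $W_i$ holds exactly when $E_{\ge\lceil k/2\rceil-1}$ holds; if $w_j$ is wrong, it holds exactly when $E_{\ge\lceil k/2\rceil}$ holds. The law of total probability then gives
\[
\Pr(W_i,c_l)=\alpha_{jl}\Pr\!\big(E_{\ge\lceil k/2\rceil-1}\big)+(1-\alpha_{jl})\Pr\!\big(E_{\ge\lceil k/2\rceil}\big).
\]

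Next I would use the disjoint decomposition $E_{\ge\lceil k/2\rceil-1}=E_{\ge\lceil k/2\rceil}\cup E_{=\lceil k/2\rceil-1}$ (the two events are mutually exclusive), so that $\Pr(E_{\ge\lceil k/2\rceil-1})=\Pr(E_{\ge\lceil k/2\rceil})+\Pr(E_{=\lceil k/2\rceil-1})$. Substituting this into the previous display and collecting the $\Pr(E_{\ge\lceil k/2\rceil})$ coefficients yields
\[
\Pr(W_i,c_l)=\Pr\!\big(E_{\ge\lceil k/2\rceil}\big)+\alpha_{jl}\,\Pr\!\big(E_{=\lceil k/2\rceil-1}\big).
\]
It then remains to read off the two terms on the right: $\Pr(E_{\ge\lceil k/2\rceil})$ is $\Pr(W_i-\{w_j\},c_l)$ (Eq.~(\ref{eq:task_accuracy}) applied to the worker set $W_i-\{w_j\}$ with the majority threshold $\lceil k/2\rceil$ inherited from $|W_i|=k$), and $\Pr(E_{=\lceil k/2\rceil-1})=\sum_U\big(\prod_{w_o\in U}\alpha_{ol}\prod_{w_o\in W_i-U-\{w_j\}}(1-\alpha_{ol})\big)$, because each elementary outcome in $E_{=\lceil k/2\rceil-1}$ is determined by its set $U$ of correct workers, a size-$(\lceil k/2\rceil-1)$ subset of $W_i-\{w_j\}$, which is precisely $W_{i,\lceil k/2\rceil}-\{w_j\}$ as $W_{i,\lceil k/2\rceil}$ ranges over the $\lceil k/2\rceil$-subsets of $W_i$ containing $w_j$.

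I expect the only delicate point to be the bookkeeping of the majority threshold as the worker set shrinks from $k$ to $k-1$ elements: one must verify that the threshold appearing inside $\Pr(W_i-\{w_j\},c_l)$ is the same $\lceil k/2\rceil$ used above (in particular, agreeing on how a tie is resolved when $k-1$ is even), and that the index family $\{U=W_{i,\lceil k/2\rceil}-\{w_j\}\}$ in the statement is exactly the set of $(\lceil k/2\rceil-1)$-subsets of $W_i-\{w_j\}$. The independence of the workers' answers --- already built into Eq.~(\ref{eq:task_accuracy}) --- is the only probabilistic ingredient, so no further tools are required.
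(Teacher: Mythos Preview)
Your proposal is correct and follows essentially the same idea as the paper's proof: both isolate the contribution of worker $w_j$ and split the majority event accordingly. The paper phrases this algebraically, pairing each addend in Eq.~(\ref{eq:task_accuracy}) for a subset $V_i\not\ni w_j$ with the addend for $V_i\cup\{w_j\}$ and collecting the unpaired boundary terms, whereas you phrase the same decomposition probabilistically via the law of total probability; your explicit remark about the inherited threshold $\lceil k/2\rceil$ for the $(k-1)$-worker set is exactly the bookkeeping the paper handles by invoking that $k$ is odd.
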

\begin{proof}
	For a task $t_i$ in category $c_l$, assume a set of $k$ workers
	$W_i$ are assigned to it. As the definition of the expected accuracy
	of task $t_i$ in Eq.~(\ref{eq:task_accuracy}) shows, for any subset
	$V'_i \subseteq W_i$ and $|V_i| \geq \lceil\frac{k}{2} \rceil$, when
	worker $w_j \in W_i$ is not in $V_i$, we can find an addend $A$ of
	$$\big(\prod_{w_o \in V_i}\alpha_{ol}\prod_{w_o \in W_i - V_i - {w_j}}(1-\alpha_{ol})\big)\cdot (1-\alpha_{jl})$$
	\noindent in Eq.~(\ref{eq:task_accuracy}). As the
	Eq.~(\ref{eq:task_accuracy}) enumerates all the possible subsets of
	$W_i$ with more than $\lceil\frac{k}{2} \rceil$ elements, we can
	find a subset $V'_i = V_i + \{w_j\}$, which represents another
	addend $A'$ of
	$$\alpha_{jl}\Big(\prod_{w_o \in V'_i-\{w_j\}}\alpha_{ol}\prod_{w_o \in W_i - V'_i}(1-\alpha_{ol})\Big)$$
	\noindent in Eq.~(\ref{eq:task_accuracy}). Then, we have:
	$$A+A' = \prod_{w_o \in V_i}\alpha_{ol}\prod_{w_o \in W_i - V_i - {w_j}}(1-\alpha_{ol}).$$
	
	After we combine all these kind of pairs of addends of worker $w_j$,
	we can obtain:
	\begin{eqnarray}
	&&\Pr(W_i, c_l) \notag\\
	&=& \sum_{x = \lceil\frac{k}{2} \rceil -1}^{k-1}\sum_{W'_{i,x}}\Big(\prod_{w_o \in W'_{i,x}}\alpha_{ol}\prod_{w_o \in W'_i - W'_{i,x}}(1-\alpha_{ol})\Big) \notag
	\\&&+\alpha_{jl}\Big(\sum_{U}\big(\prod_{w_o \in U}\alpha_{ol}\prod_{w_o \in W_i - U - \{w_j\}}(1-\alpha_{ol})\big)\Big)\notag\\
	&=&\Pr(W'_i, c_l) \label{eq:accuracy_present}\\
	&&+\alpha_{jl}\Big(\sum_{U}\big(\prod_{w_o \in U}\alpha_{ol}\prod_{w_o \in W_i - U - \{w_j\}}(1-\alpha_{ol})\big)\Big),\notag
	\end{eqnarray}

	\noindent where $W'_i = W_i - \{w_j\}$ and $W'_{i,x}$ is a subset of
	$W'_i$ with $x$ elements, and $U=W_{i,\lceil\frac{k}{2} \rceil} -
	\{w_j\}$. Eq.~(\ref{eq:accuracy_present}) holds as $k$ is always odd
	to ensure the majority voting can get a final result. Note that, the accuracy $\alpha_{jl}$ of worker $w_j$ towards task category $c_l$ is larger than 0.5. The reason is when $\alpha_{jl}$ is smaller than 0.5, we can always treat the answer of worker $w_j$ to $c_l$ as the opposite answer, then the accuracy may become $\alpha'_{jl} = 1 - \alpha_{jl} > 0.5$.
\end{proof}

We can derive two corollaries below.
\begin{corollary} For a task $t_i$ in category $c_l$ with a set of $k$ assigned workers $W_i$, if the category accuracy $\alpha_{jl}$ of any worker $w_j \in W_i$ increases, the expected accuracy $\Pr(W_i, c_l)$ of task $t_i$ will increase (until reaching 1).
	\label{coro:accuracy_increaes}
\end{corollary}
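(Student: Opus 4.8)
The plan is to read off the monotonicity directly from the rewriting established in Lemma~\ref{lemma:rewrite_expect_accuracy}. Equation~(\ref{eq:rewrite_expect_accuracy}) expresses the expected accuracy as
\[
\Pr(W_i,c_l)=\Pr(W_i-\{w_j\},c_l)+\alpha_{jl}\cdot S,\qquad S:=\sum_{U}\Big(\prod_{w_o\in U}\alpha_{ol}\prod_{w_o\in W_i-U-\{w_j\}}(1-\alpha_{ol})\Big),
\]
where $U$ ranges over the subsets of $W_i-\{w_j\}$ of size $\lceil\frac{k}{2}\rceil-1$ (the ``pivotal'' configurations in which $w_j$'s vote is decisive). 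The key observation is that this is an \emph{affine} function of the single variable $\alpha_{jl}$: neither the constant term $\Pr(W_i-\{w_j\},c_l)$ nor the coefficient $S$ involves $\alpha_{jl}$, since both are built only from the accuracies $\alpha_{ol}$ of the other workers $w_o\in W_i-\{w_j\}$.

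Next I would argue that the slope $S$ is non-negative, and in fact strictly positive under the standing assumption (recorded at the end of the proof of Lemma~\ref{lemma:rewrite_expect_accuracy}) that every category accuracy lies in $(0.5,1)$: $S$ is a sum of products of the strictly positive numbers $\alpha_{ol}$ and $1-\alpha_{ol}$, and its index set is non-empty because $0\le\lceil\frac{k}{2}\rceil-1\le k-1$. Hence the derivative $\partial\,\Pr(W_i,c_l)/\partial\alpha_{jl}=S>0$, so $\Pr(W_i,c_l)$ is strictly increasing in $\alpha_{jl}$; raising the category accuracy $\alpha_{jl}$ of any single worker $w_j\in W_i$ raises the expected task accuracy, which is exactly the claim.

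Finally, the parenthetical ``until reaching~$1$'' is automatic: $\Pr(W_i,c_l)$ is by definition a probability (the chance that a majority of $W_i$ answers correctly), so it can never exceed~$1$; at $\alpha_{jl}=1$ it equals $\Pr(W_i-\{w_j\},c_l)+S$, which is simply the probability that at least $\lceil\frac{k}{2}\rceil-1$ of the remaining $k-1$ workers answer correctly and therefore stays within $[0,1]$. I expect no genuine obstacle here — the only point needing a little care is matching the summation index $U$ of Lemma~\ref{lemma:rewrite_expect_accuracy} to the decisive-vote event, so that one sees at a glance that $S$ is both independent of $\alpha_{jl}$ and positive; the remainder is bookkeeping.
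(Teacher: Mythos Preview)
Your proposal is correct and follows essentially the same route as the paper: both invoke the decomposition of Lemma~\ref{lemma:rewrite_expect_accuracy} and observe that the first summand $\Pr(W_i-\{w_j\},c_l)$ does not depend on $\alpha_{jl}$ while the second summand grows with $\alpha_{jl}$. Your version is in fact more explicit than the paper's brief argument, spelling out the affine dependence and the strict positivity of the coefficient $S$ under the standing assumption that accuracies lie in $(0.5,1)$.
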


\begin{proof}
	In Eq.~(\ref{eq:rewrite_expect_accuracy}), when the accuracy
	$\alpha_{jl}$ of worker $w_j$ increases,
	the first factor $\Pr(W_i-\{w_j\}, c_l)$ will not be affected, and the second factor will increase. Note that, when all the workers are 100 $\%$ accurate, $\Pr(W_i-\{w_j\}, c_l) = 1$ and the second factor equals to 0, which leads to that the expected accuracy stays at 1. Thus, the corollary is proved.
\end{proof}

\begin{corollary} For a task $t_i$ in category $c_l$ with a set of $k$ assigned workers $W_i$, if we assign a new worker $w_j$ to task $t_i$, the expected accuracy of task $t_i$ will increase.
	\label{coro:worker_increaes}
\end{corollary}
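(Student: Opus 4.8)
The plan is to establish the refined identity
\[
\Pr(W_i\cup\{w_j\},c_l)=\Pr(W_i,c_l)+\alpha_{jl}\cdot\Pr\bigl(\text{exactly }\lceil k/2\rceil-1\text{ of the workers in }W_i\text{ answer }t_i\text{ correctly}\bigr),
\]
and then read the claim off from the sign of the extra term. This is the same bookkeeping device used for Lemma~\ref{lemma:rewrite_expect_accuracy}, except that here I would single out the \emph{new} worker $w_j$ inside the enlarged set $W_i\cup\{w_j\}$, rather than a pre-existing worker.

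First I would pin down the parities. Because the number of answers of a task is assumed odd, $k=|W_i|$ is odd, so the majority threshold of $W_i$ is $m:=\lceil k/2\rceil=\frac{k+1}{2}$; and since $|W_i\cup\{w_j\}|=k+1$ with $k$ odd, the threshold $\lceil\frac{k+1}{2}\rceil$ of the enlarged set is \emph{also} $m$. I expect this coincidence of thresholds to be the only subtle point of the argument---it is exactly what prevents adding a worker from changing the voting rule, and in fact if $k$ were even the statement would fail.

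Then comes the core computation: let $X$ be the number of workers in $W_i$ who answer $t_i$ correctly, so that $\Pr(W_i,c_l)=\Pr(X\ge m)$ by Eq.~(\ref{eq:task_accuracy}). Conditioning on whether $w_j$ answers correctly (with probability $\alpha_{jl}$) gives
\begin{align*}
\Pr(W_i\cup\{w_j\},c_l)&=\alpha_{jl}\,\Pr(X\ge m-1)+(1-\alpha_{jl})\,\Pr(X\ge m)\\
&=\Pr(X\ge m)+\alpha_{jl}\,\Pr(X=m-1),
\end{align*}
using $\Pr(X\ge m-1)=\Pr(X\ge m)+\Pr(X=m-1)$; since $\Pr(X\ge m)=\Pr(W_i,c_l)$ this is the claimed identity.

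Finally I would conclude. By the remark in the proof of Lemma~\ref{lemma:rewrite_expect_accuracy} we may take $\alpha_{jl}>\tfrac12>0$, and $\Pr(X=m-1)\ge0$, so $\Pr(W_i\cup\{w_j\},c_l)\ge\Pr(W_i,c_l)$; the inequality is strict unless $\Pr(X=m-1)=0$, which happens precisely when $\Pr(W_i,c_l)$ already equals $1$ (the ``until reaching $1$'' degenerate case already flagged in Corollary~\ref{coro:accuracy_increaes}). An alternative, more algebraic route is to apply Lemma~\ref{lemma:rewrite_expect_accuracy} verbatim to the set $W_i\cup\{w_j\}$ with $w_j$ as the singled-out worker, so that $\Pr\bigl((W_i\cup\{w_j\})-\{w_j\},c_l\bigr)=\Pr(W_i,c_l)$ and the residual term is $\alpha_{jl}$ times a sum of nonnegative products; there the main obstacle is, once more, checking that the majority threshold of the enlarged set matches that of $W_i$, i.e.\ the parity observation above.
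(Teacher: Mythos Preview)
Your proposal is correct and follows essentially the same route as the paper: the paper's proof simply applies Lemma~\ref{lemma:rewrite_expect_accuracy} to the enlarged set $W_i\cup\{w_j\}$ with $w_j$ singled out, observing that the residual term is $\alpha_{jl}$ times a nonnegative sum (and that $\alpha_{jl}>\tfrac12$), which is exactly your ``alternative, more algebraic route.'' Your primary conditioning argument is just a probabilistic rephrasing of the same decomposition, and you are in fact more careful than the paper in making the parity point explicit---the paper's proof does not spell out that the majority threshold for $k+1$ workers coincides with that for $k$ when $k$ is odd.
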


\begin{proof}
	With Lemma \ref{lemma:rewrite_expect_accuracy}, we can see that when adding one more worker to a task $t_i$, the expected accuracy of task $t_i$ will increase. In Eq. (\ref{lemma:rewrite_expect_accuracy}), the first factor $\Pr(W_i-\{w_j\}, c_l)$ is the expected accuracy of task $t_i$ before adding worker $w_j$. The second factor is larger than 0 as the accuracy $\alpha_{jl}$ of worker $w_j$ towards task category $c_l$ is larger than 0.5. When $\alpha_{jl}$ is smaller than 0.5, we can always treat the answer of worker $w_j$ to $c_l$ as the opposite answer, then the accuracy  becomes $\alpha'_{jl} = 1 - \alpha_{jl} > 0.5$ .
\end{proof}

With Corollaries \ref{coro:accuracy_increaes} and \ref{coro:worker_increaes}, to increase the expected accuracy of a task $t_i$, we can use workers with higher category accuracies or assign more workers to task $t_i$. When the required expected accuracy of a task $t_i$ is given, we can finish task $t_i$ with a smaller number of high-accuracy workers.
To accomplish
as many tasks as possible, we aim to greedily pick the least number
of workers to finish each task iteratively.

\begin{algorithm}
	\KwIn{A set $W=\{w_1, w_2, ..., w_n\}$ of available workers, a task $t_i$ in category $c_l$ with a set of already assigned workers $W_i$}
	\KwOut{A minimum set of workers assigned to task $t_i$}
	$W_o \gets W_i$\;
	
	\While{$Pr(W_o, c_l) < q_i$ and $|W - W_o| > 0$} {
		choose a new worker $w_j$ with the highest accuracy $\alpha_{jl}$;
		$W$.remove($w_j$)\;
		$W_o$.add($w_j$)\;
	}
	\uIf{$Pr(W_o, c_l) \geq q_i$}{
		\Return{$W_o - W_i$}\;
	}
	\Else{
		\Return{$\emptyset$}\;
	}
	
	\caption{MinWorkerSetSelection($t_i$, $W$, $W_i$)}
	\label{alg:min_worker_selection}
\end{algorithm}

Algorithm \ref{alg:min_worker_selection} exactly shows the procedure
of {\sf MinWorkerSetSelection}, which selects a minimum set, $W_o$,
of workers to conduct task $t_i$. In each iteration, we greedily
select a worker $w_j$ (who has not been assigned to task $t_i$) with
the highest accuracy in the category of task $t_i$, and assign
workers to task $t_i$ (lines 2-4). If such a minimum worker set
exists, we return the newly assigned worker set; otherwise, we
return an empty set (lines 5-8). The correctness of Algorithm
\ref{alg:min_worker_selection} is shown below.

\begin{lemma} The number of workers in the set $W_o$ returned by Algorithm \ref{alg:min_worker_selection} is minimum, if $W_o$ exists.
	\label{lemma:greedy_accurate_worker}
\end{lemma}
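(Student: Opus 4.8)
The plan is to prove minimality by combining the monotonicity statements already established with a classical exchange argument on the sorted list of worker accuracies. Write $W_o = W_i \cup \{w_{j_1}, w_{j_2}, \ldots, w_{j_p}\}$, where $w_{j_1}, \ldots, w_{j_p}$ are exactly the new workers picked by Algorithm~\ref{alg:min_worker_selection}, listed in the order they were chosen — hence in non-increasing order of their category accuracy $\alpha_{j_k l}$. Since $W_i$ is already committed to $t_i$ and cannot be revoked, minimizing $|W_o|$ is the same as minimizing the number of new workers added on top of $W_i$. So it suffices to show that for every $t < p$ and every size-$t$ subset $S$ of the workers still available in $W$, we have $\Pr(W_i \cup S, c_l) < q_i$.

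The key intermediate claim I would establish first is: among all size-$t$ subsets $S$ of the available workers, the greedy prefix $S^\star_t = \{w_{j_1}, \ldots, w_{j_t}\}$ maximizes $\Pr(W_i \cup S, c_l)$. I prove this by a single-swap exchange argument. Suppose $S$ is a size-$t$ set with $\Pr(W_i\cup S,c_l)$ maximal but $S \neq S^\star_t$; then there exist $w_a \in S \setminus S^\star_t$ and $w_b \in S^\star_t \setminus S$ with $\alpha_{bl} \geq \alpha_{al}$, because $S^\star_t$ collects the $t$ highest-accuracy available workers. Apply Lemma~\ref{lemma:rewrite_expect_accuracy} to the set $W_i \cup S$ with the distinguished worker $w_j = w_a$: the decomposition in Eq.~(\ref{eq:rewrite_expect_accuracy}) expresses $\Pr(W_i \cup S, c_l)$ as $\Pr((W_i \cup S)\setminus\{w_a\}, c_l)$ plus $\alpha_{al}$ times a bracketed quantity that depends only on the remaining set $(W_i \cup S)\setminus\{w_a\}$, not on $w_a$ itself. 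Replacing $w_a$ by $w_b$ in $S$ leaves that remaining set unchanged, so the expected accuracy changes by exactly $(\alpha_{bl}-\alpha_{al})$ times that quantity, which is nonnegative (equivalently, this is Corollary~\ref{coro:accuracy_increaes} applied after noting that $\Pr$ depends only on the multiset of accuracies of the assigned workers). Iterating such swaps transforms $S$ into $S^\star_t$ without ever decreasing $\Pr(W_i\cup\cdot,c_l)$, which proves the claim. As in the proof of Lemma~\ref{lemma:rewrite_expect_accuracy}, I use $\alpha_{jl} > 1/2$ for every worker — justified by relabelling a worker's answers to their complements when necessary — so the bracketed quantity is strictly positive and the monotonicity is genuine.

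With this claim the lemma follows quickly. Since Algorithm~\ref{alg:min_worker_selection} did not terminate before adding $w_{j_p}$, we have $\Pr(W_i \cup S^\star_t, c_l) < q_i$ for every $t < p$; by the maximality claim, $\Pr(W_i \cup S, c_l) \leq \Pr(W_i \cup S^\star_t, c_l) < q_i$ for every size-$t$ subset $S$ of the available workers, so no feasible assignment can add fewer than $p$ new workers to $W_i$. By Corollary~\ref{coro:worker_increaes}, $W_o$ itself is feasible (adding workers only raises the expected accuracy), and any candidate worker set of smaller cardinality still has to contain $W_i$ and therefore is infeasible; hence $|W_o| = |W_i| + p$ is minimum. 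The remaining case, in which the while loop exhausts $W$ without the accuracy ever reaching $q_i$, is precisely the case ``$W_o$ does not exist'' that the statement excludes.

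The main obstacle is the exchange step: one must verify that the decomposition of Eq.~(\ref{eq:rewrite_expect_accuracy}) genuinely isolates the contribution of a single worker as (that worker's accuracy) $\times$ (a factor independent of that worker), so that swapping one worker for another of at least equal accuracy changes $\Pr(W_i\cup\cdot,c_l)$ in a sign-definite way. Once that monotone-in-the-swapped-coordinate fact is nailed down, the reduction to comparing equal-cardinality sets, the feasibility of $W_o$, and termination are all routine.
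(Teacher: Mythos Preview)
Your proof is correct and follows essentially the same strategy as the paper: both argue by comparison that any candidate set of size $|W_o|-1$ has expected accuracy at most $\Pr(W_o\setminus\{w_j\},c_l)<q_i$, invoking the monotonicity of $\Pr(\cdot,c_l)$ in each worker's accuracy (Corollary~\ref{coro:accuracy_increaes}) to justify that the greedy prefix dominates any other equal-size selection.

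The main difference is rigor. The paper's proof is terse and somewhat imprecise: it asserts that ``any worker $w_k\in W_o$ will have higher accuracy than any worker $w'_k\in W'$,'' which is false as written (the sets may overlap, and $W_i\subseteq W_o$ may contain low-accuracy workers already committed to the task), and it never explicitly separates the fixed prefix $W_i$ from the newly added workers. Your exchange argument makes this clean: you fix $W_i$, compare only the added size-$t$ subsets of available workers, and use the decomposition of Lemma~\ref{lemma:rewrite_expect_accuracy} to show that swapping a lower-accuracy worker for a higher-accuracy one never decreases $\Pr(W_i\cup S,c_l)$. This is exactly the content the paper intends but does not fully spell out; your version is more careful and closes the small gaps, but the underlying idea is identical.
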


\begin{proof}
	Let set $W_o$ be the returned by Algorithm \ref{alg:min_worker_selection}
	to satisfy the quality threshold $q_i$ and worker $w_j$ is the last one
	added to set $W_o$.  Assume there is a subset of workers $W' \subseteq W$ such that
	$|W'| = |W_o| - 1$ and $Pr(W', c_l) \geq q_i$.
	
	Since each worker in $W_o$ is greedily picked with the highest current
	accuracy in each iteration of lines 2-4 in Algorithm \ref{alg:min_worker_selection},
	for any worker, $w_k \in W_o$ will have higher accuracy than any
	worker $w'_k \in W'$. As $|W'| = |W_o - \{w_j\}|$, according to
	Corollary \ref{coro:accuracy_increaes}, $Pr(W_o-\{w_j\}, c_l) > Pr(W', c_l)$.
	However, as $w_j$ is added to $W_o$, it means $Pr(W_o-\{w_j\}, c_l) < q_i$.
	It conflicts with the assumption that $Pr(W', c_l) \geq q_i$. Thus, set $W'$ cannot exist.
\end{proof}

\noindent \textbf{The Time Complexity of BBS.} To analyze the time
complexity of the batch-based scheduling (BBS) approach, called {\sf
	GreedyBatch}, as shown in Algorithm \ref{alg:greedy_assign}, we
assume that each task $t_i$ needs to be answered by $h$ workers. The
time complexity of calculating the delay probability of a task is
given by $O(m\cdot h)$ (lines 2-3). Since each iteration solves one task,
there are at most $m$ iterations (lines 4-13). In each iteration,
selecting one task $t_i$ with the highest delay probability requires
$O(m)$ cost (line 5). The time complexity of the {\sf
	MinWorkerSetSelection} procedure is given by $O(h)$ (line 7). The
time complexity of assigning workers to the selected task $t_i$ is
$O(h)$ (lines 8-12). Thus, the overall time complexity of the BBS
approach is given by $max(O(m^2), O(m\cdot h))$.

\section{The Notification Module}
\label{sec:notification}

In this section, we introduce the detailed model of the notification
module in our PROG framework (as mentioned in Section
\ref{sec:frame}), which is in charge of sending invitation
notifications to offline workers in order to maintain enough online
workers doing tasks. Since it is not a good idea to broadcast to all
offline workers, our notification module only sends notifications to
those workers with high probabilities of accepting invitations.

\subsection{Kernel Density Estimation for Worker Availability}

In this subsection, we will model the availability of those
(offline) workers from historical records. The intuition is that,
for each worker, the patten of availability on each day is
relatively similar. For example, a worker may have the spare time to
do tasks, when he/she is on the bus to the school (or company) at
about 7 am every morning. Thus, we may obtain their historical data
about the timestamps they conducted tasks.

However, the number of historical records (i.e., sample size) for
each worker might be small. In order to accurately estimate the
probability of any timestamp that a worker is available,  we use a non-parametric approach, called \textit{kernel
	density estimation} (KDE) \cite{rosenblatt1956remarks}, based on
random samples (i.e., historical timestamps that the worker is
available).

Specifically, for a worker $w_j$, let $E_j=\{e_1, e_2, ..., e_n\}$
be a set of $n$ active records that worker $w_j$ did some tasks,
where event $e_i$ ($1\leq i\leq n$) occurs at timestamp $ts_i$.
Then, we can use the following KDE estimator to compute the
probability that worker $w_j$ is available at timestamp $ts$: 
\begin{equation}
f(ts | E_j, h) =
\frac{1}{nh}\sum_{i=1}^{n}K\left(\frac{ts-ts_i}{h}\right),\notag
\label{eq:kde}
\end{equation}

\noindent where $e$ is the event that worker $w_j$ is available and
will accept the invitation at a given time\-stamp $ts$, $K(\cdot)$
is a kernel function (here, we use Gaussian kernel function $K(u) =
\frac{1}{\sqrt{2\pi}}e^{-u^2/2}$), and $h$ is a scalar bandwidth
parameter for all events in $E$. The bandwidth of the kernel is a
free parameter and exhibits a strong influence on the estimation.
For simplicity, we set the bandwidth following a rule-of-thumb
\cite{silverman1986density} as follows: 
\begin{equation}
h = \Big( \frac{4 \hat{\sigma}^{5}}{3n}\Big)^{\frac{1}{5}} =
1.06\hat{\sigma}n^{-1/5}, \label{eq:bandwidth}
\end{equation}

\noindent where $\hat{\sigma}$ is the standard deviation of the
samples. The rule works well when density is close to being normal,
which is however not true for estimating the probability of workers
at a given time\-stamp $ts$. However, adapting the kernel
bandwidth $h_i$ to each data sample $e_i$ may overcome this issue
\cite{breiman1977variable}.

Inspired by this idea, we select $k$ nearest neighbors of event
$e_i$ (here, we consider neighbors by using time as measure, instead
of distance), and calculate the adaptive bandwidth $h_i$ of event
$e_i$ with $(k+1)$ samples using Eq.~(\ref{eq:bandwidth}), where $k$
is set to $\beta \cdot n$ ($\beta$ is a ratio parameter).
Afterwards, we can define the adaptive bandwidth KDE as follows:
\begin{equation}
f(ts| E_j) =
\frac{1}{n}\sum_{i=1}^{n}K\left(\frac{ts-ts_i}{h_i}\right).
\label{eq:adaptive_kde}
\end{equation}

\subsection{Smooth Estimator}

Up to now, we have discussed the adaptive kernel density approach to
estimate the probability that a worker is available, based on one's
historical records (samples). However, some workers may just
register or rarely accomplish tasks, such that his/her historical
events are not available or enough to make accurate estimations,
which is the ``cold-start'' problem that often happens in the
recommendation system \cite{schein2002methods}.

Inspired by techniques \cite{schein2002methods} used to solve such a
cold-start problem in recommendation systems and the influence among
friends \cite{cho2011friendship} (i.e., friends tend to have similar
behavior patterns, such as the online time periods), we propose a
\textit{smooth KDE model} (SKDE), which combines the individual's
kernel density estimator with related scale models. That is, for
each worker, we can use historical data of his/her friends to
supplement/predict his/her behaviors.

Here, our FROG platform is assumed to have the access to the
friendship network of each worker, according to his/her social
networks (such as Facebook, Twitter, and WeChat). In our experiments
of this paper, our FROG platform used data from the WeChat network.

Specifically, we define a smooth kernel density estimation model as
follows:
\begin{equation}
P_{SKDE}(ts | E_j, E) = \sum_{s=1}^{S}\alpha_s f(ts |E^s)
\label{eq:skde}
\end{equation}
\noindent where $\alpha_1, \alpha_1, ..., \alpha_s$ are non-negative
\textit{smoothing factors} with the property of
$\sum_{s=1}^S\alpha_s = 1$, $E$ is the entire historical events of
all the workers, and $f(ts|E^s)$ is the $s$-th \textit{scaling
	density estimator} calculated on the subset events $E^s$.

For a smooth KDE model with $S$ ($>2$) scaling density estimators,
the first scaling density estimator can be the basic individual
kernel density estimator with $E^1=E_j$ and the $S$-th scaling
density estimator can be the entire population density estimator
with $E^s = E$. Moreover, since our FROG platform can obtain the
friendship network of each worker (e.g., Facebook, Twitter, and
WeChat), after one registers with social media accounts, we can find
each worker's $k$-step friends. This way, for the intermediate
scaling density estimators $s=2, ..., S-1$, we can use different
friendship scales, such as the records of the 1-step friends, 2-step
friends, ..., $(S-2)$-step friends of worker $w_i$.
According to the famous Six degrees of separation theory
\cite{barabasi2002linked}, $S$ is not larger than 6. However, in practice, we in fact can only use 1-step or 2-step friends, as the intermediate scaling density estimators may involve too many workers of when $S$ is too large. Alternatively, other relationship can also be used to smooth the KDE model, such as the location information of workers. One possible variant is to classify the workers based on their locations, as workers in close locations may work or study together such that their time schedules may be similar with each other.

To train the SKDE model, we need to set proper values for smoothing
factors $\alpha_s$. We use the latest event records as validation
data $E^v$ (here $|E^v|=n$), and other history records as the
training data $E^h$. Specifically, for each event $e_k$ in $E^v$, we
have the estimated probability as follows:
\begin{equation}
P(ts_k|E^h, \alpha) = P_{SKDE}(ts_k | E^h, \alpha) = \sum_{s = 1}^{S}\alpha_s f(ts_k|E^s)\notag
\end{equation}

\noindent where $S$ is the number of scaling density estimators.
Then, to tune the smoothing factors, we use the Maximum Likelihood
Estimation (MLE) with log-likelihood as follows:
\begin{equation}
\hat{\alpha} = argmax_{\alpha} \log \big(\prod_{k=1}^{n} P(ts_k|E^h, \alpha)\big)
\label{eq:mle}
\end{equation}

However, Eq.~(\ref{eq:mle}) is not trivial to solve, thus, we use EM
algorithm to calculate its approximate result.

We initialize the smoothing factors as $\alpha_s = 1/S$ for $s=1, 2,
..., S$. Next, we repeat Expectation-step and Maximization-step,
until the smoothing factors converge.

\noindent \textbf{Expectation Step.} We add a latent parameter
$Z=\{z_1, z_2, ..., z_S\}$, and its distribution on $ts_k$ is
$Q_k(Z)$, then we can estimate $Q_k(Z)$ as follows:
\begin{eqnarray}
Q^t_k(z_s|\alpha^t) &=& P(z_s|ts_k, E^h, \alpha^t) \notag\\
&=& \frac{\alpha^t_s \cdot f(ts_k|E^s)}{\sum_{i=1}^{S}\alpha^t_i\cdot
	f(ts_k|E^i)},\notag \label{eq:exp_step}
\end{eqnarray}

\noindent where $f(\cdot)$ is calculated with Eq. (\ref{eq:adaptive_kde}).

\noindent \textbf{Maximization Step.} Based on the expectation
result of the latent parameter $Z$, we can calculate the next
smoothing factor values $\alpha^{t+1}$ with MLE as follows:
\begin{eqnarray}
\alpha^{t+1} &=& argmax_{\alpha} \sum_{k=1}^{n}\sum_{s=1}^{S} Q^t_k(z_s)\log\Big(\frac{P(st_k, z_s|\alpha^t)}{Q^t_k(z_s)}\Big) \notag\\
&=&argmax_{\alpha} \sum_{k=1}^{n}\sum_{s=1}^{S}
Q^t_k(z_s)\log\Big(\frac{\alpha^t_s \cdot
	f(ts_k|E^s)}{Q^t_k(z_s)}\Big),\notag \label{eq:max_step}
\end{eqnarray}
\noindent where $f(\cdot)$ is calculated with Eq. (\ref{eq:adaptive_kde}).

\subsection{Solving of the Efficient Worker Notifying Problem}

As given in Definition \ref{definition:notification}, our EWN
problem is to select a minimum set of workers with high
probabilities to accept invitations, to whom we will send
notifications.

Formally, given a trained smooth KDE model and a timestamp $ts$,
assume that we want to recruit $u$ more workers for the FROG
platform. In the EWN problem (in Definition
\ref{definition:notification}), the acceptance probability
$P_{ts}(w_j)$ of worker $w_j$ can be estimated by
Eq.~(\ref{eq:skde}).

Next, with Definition \ref{definition:worker_dominance}, we can 
sort workers, $w_j$, based on their ranking scores $R(w_j)$
(e.g., the number of workers dominated by each worker)
\cite{yiu2007efficient}. Thereafter, we will notify top-$v$ workers
with the highest ranking scores.

The pseudo code of selecting worker candidates is shown in Algorithm
\ref{alg:worker_notify_selection}. We first initialize the selected
worker set, $W_n$, with an empty set (line 1). Next, we calculate
the ranking scores of each worker (e.g., the number of other workers can be dominated with the Definition \ref{definition:worker_dominance}) (lines 2-3). Then, we iteratively
pick workers with the highest ranking scores until the selected
workers are enough or all workers have been selected (lines 4-8).
Finally, we return the selected worker candidates to send invitation
notifications (line 9).
\begin{algorithm}[h]
	\KwIn{A set, $W=\{w_1, w_2, ..., w_n\}$, of offline workers, the expected number, $u$, of acceptance workers, and  the current timestamp $ts$}
	\KwOut{A set, $W_n$, of workers to be invited}
	$W_n = \emptyset$\;
	\ForEach{worker $w_j$ \textbf{in} $W$} {
		calculate the ranking score $R(w_j)$ of $w_j$\;
	}
	
	\While{$u > 0$ \textbf{and} $|W| >0$}{
		select one worker $w_j$ with the highest ranking score in $W$\;
		$W = W-\{w_j\}$\;
		$W_n$.add($w_j$)\;
		$u=u-P_{ts}(w_j)$
	}
	
	\Return{$W_n$}\;
	\caption{WorkerNotify($W$, $T$)}
	\label{alg:worker_notify_selection}
\end{algorithm}

\noindent \textbf{The Time Complexity.} To compute the ranking
scores, we need to compare every two workers, whose time complexity
is $O(n^2)$. In each iteration, we select one candidate, and there
are at most $n$ iterations. Assuming that $n$ workers are sorted by
their ranking scores, lines 4-8 have the time complexity $O(n \cdot
\log(n))$. Thus, the time complexity of Algorithm
\ref{alg:worker_notify_selection} is given by $O(n^2)$.

\noindent \textbf{Discussions on Improving the EWN Efficiency.} To improve the efficiency of calculating the ranking scores of workers, we may utilize a 3D grid index to accelerate the computation, where 3D includes the acceptance probability, response time, and accuracy. 
Each worker is in fact a point in a 3D space w.r.t. these 3 dimensions. If a worker $w_j$ dominates a grid cell $gc_x$, then all workers in cell $gc_x$ are dominated by $w_j$. Similarly, if worker $w_j$ is dominated by the cell $gc_x$, then all the workers in $gc_x$ cannot be dominated by $w_j$. Then, we can compute the lower/upper bounds of the ranking score for each worker, and utilize them to enable fast pruning \cite{yiu2007efficient}.

\section{Experimental Study}
\label{sec:exper}

\subsection{Experimental Methodology}

\noindent \textbf{Data Sets for Experiments on Task Scheduler
	Module.} We use both real and synthetic data to test our task
scheduler module. We first conduct a set of comparison experiments
on the real-world crowdsourcing platform, gMission \cite{chen2014gmission}, where workers do tasks and are notified via WeChat \cite{wechat}, and
evaluate our task scheduler module on 5 data sets \cite{dataeveryone}. Tasks in each
data set belong to the same category. For each experiment on the real platform, we use 100
tasks for each data set (category). We manually label the ground
truth of tasks. To subscribe one category, each worker is required
to take a qualification test consisting of 5 testing questions. We uniformly
generate quality threshold for each task within the range [0.8,
0.85]. 
Below, we give brief descriptions of the 5 real data sets.

\noindent 1) \textit{Disaster Events Detection (DED)}: DED contains
a set of tasks, which ask workers to determine whether a tweet
describes a disaster event. For example, a task can be ``Just
happened a terrible car crash'' and workers are required to select
``Disaster Event'' or ``Not Disaster Event''.

\noindent 2) \textit{Climate Warming Detection (CWD)}: CWD is to
determine whether a tweet considers the existence of global
warming/climate change or not. The possible answers are ``Yes'', if the
tweet suggests global warming is occurring; otherwise, The possible answers are ``No''. One tweet example is
``Global warming. Clearly.'', and workers are expected to answer
``Yes''.

\noindent 3) \textit{Body Parts Relationship Verification (BPRV)}:
In BPRV, workers should point out if certain body parts are part of
other parts. Questions were phrased like: ``[Part 1] is a part of
[part 2]''. For example, ``Nose is a part of spine'' or ``Ear is a
part of head.'' Workers should say ``Yes'' or ``No'' for this
statement.

\noindent 4) \textit{Sentiment Analysis on Apple Incorporation
	(SAA)}: Workers are required to analyze the sentiment about Apple,
based on tweets containing ``\#AAPL, @apple, etc''. In each task,
workers are given a tweet about Apple, and asked whether the user is
positive, negative, or neutral about Apple. We used records with
positive or negative attitude about Apple, and asked workers to
select ``positive'' or ``negative'' for each tweet.

\noindent 5) \textit{App Search Match (ASM)}: In ASM, workers are
required to view a variety of searches for mobile Apps, and
determine if the intents of those searches are matched. For example,
one short query is ``music player''; the other one is a longer one
like ``I would like to download an App that plays the music on the
phone from multiple sources like Spotify and Pandora and my
library.'' If the two searches have the same intent, workers should
select ``Yes''; otherwise, they should select ``No''.

For synthetic data, we simulate crowd workers based on the
observations from real platform experiments. Specifically, in
experiments on the real platform, we measure the average response
time, $\bar{r_{jl}}$, of worker $w_j$ on category $c_l$, the variance of
the response time $\sigma_{jl}^2$, and the category accuracy
$\alpha_{jl}$. Then, to generate a worker $w'_j$ in the synthetic data set, we first randomly select one worker $w_j$ from the workers in the real platform experiments, and produce his/her
response speed $r'_{jl}$ on category $c_l$ following a Gaussian distribution $r'_{jl} \sim \mathcal{N} (\bar{r_{jl}},\sigma_{jl}^2)$, where $\bar{r_{jl}}$ and $\sigma_{jl}^2$ are the average and variance of the response time of worker $w_j$. In addition, we initial the category accuracy $\alpha'_{jl}$ of worker $w'_j$ as that of the worker $w_j$. Moreover, we uniformly generate required number of tasks by sampling from the real tasks. 
Table \ref{tab:experiment} depicts the parameter settings in our
experiments on synthetic datasets, where default values of parameters are in bold font. In
each set of experiments, we vary one parameter, while setting other
parameters to their default values.

\begin{table}[t!]
	\parbox[b]{\linewidth}{
		\caption{ Experimental Settings.} \label{tab:experiment}
		\begin{tabular}{l|l}
			{\bf \qquad \qquad \quad Parameters} & {\bf \qquad \qquad \qquad Values} \\ \hline \hline
			the number of categories $l$& 5, 10, \textbf{20}, 30, 40\\
			the number of tasks $m$ & 1000, 2000, \textbf{3000}, 4000, 5000 \\
			the number of workers $n$ & 100, 200, \textbf{300}, 400, 500 \\
			the range of quality &[0.75, 0.8], \textbf{[0.8, 0.85]}, [0.85, 0.9], [0.9, 0.95]\\threshold $[q^-, q^+]$ \\
			\hline
		\end{tabular}
	}
\end{table}

\noindent \textbf{Data Sets for Experiments on Notification Module.}
To test our notification module in the FROG framework, we utilize
\textit{Higgs Twitter Dataset} \cite{de2013anatomy}. The Higgs
Twitter Dataset is collected for monitoring the spreading process on
the Twitter, before, during, and after the announcement of the
discovery of a new particle with features of the elusive Higgs boson
on July 4th, 2012. The messages posted on the Twitter about this
discovery between July 1st and 7th, 2012 are recorded. There are
456,626 user nodes and 14,855,842 edges (friendship connections) between them. In addition,
the data set contains 563,069 activities. Each activity happens
between two users and can be retweet, mention, or reply. We
initialize the registered workers on our platform with users in the
Higgs Twitter Dataset (and their relationship on the Twitter). What
is more, the activities in the data set is treated as online records
of workers on the platform. The reason is that only when a user is free, he/she can make activities on Twitter.

\noindent \textbf{Competitors and Measures.} For the task scheduler
module, we conduct experiments to test our two adaptive scheduling
approaches, request-based (RBS) and batch-based scheduling (BBS)
approaches. We select the task assigner of iCrowd framework \cite{fan2015icrowd} as a competitor (iCrowd), which iteratively resolves a task with a set of $k$ available workers having the maximum average accuracy in the current situation. Here $k$ is a requester-specified parameter and we configure it to 3 following its setting in \cite{fan2015icrowd}. In addition, we compare them with a random method, namely RANDOM,
which randomly routes tasks to workers, and a fast-worker greedy method, namely fGreedy, which greedily pick the fastest workers to finish the task with the highest delay possibility value. We hire 70 workers from the WeChat platform to conduct the experiments. Table \ref{tab:statistics} shows the statistics of category accuracies and category response times of top 5 workers, who conducted  the most tasks.

For the notification module, we conduct experiments to compare our
smooth KDE model with our KDE model without smoothing and a random method, namely Random, which
randomly selects workers. Moreover, we also compare our approach with a simple
method, namely \textit{Nearest Worker Priority} (NWP), which selects
workers with the most number of historical records within the
$\theta$-minute period before or after the given timestamp in previous
dates. Here, we use $\theta =15$, as it is sufficient for a worker to response the invitation. For each predicted worker, if he/she has activities within the time period  from the the target timestamp to 15 minutes later, we treat that it is a correct prediction. At timestamp $ts$, we denote $N_c(ts)$ as the number of correct predictions, $N_t(ts)$ as the number of total predictions and $N_a(ts)$ as the number of activities that really happened.

\begin{table}[t!]
	\parbox[b]{\linewidth}{
		\caption{ Statistics of Workers.} \label{tab:statistics}
		
		\begin{tabular}{c|l|l|l|l|l}
			\multirow{2}{*}{{\bf ID} }& \multicolumn{5}{c|}{{\bf Category Accuracy / Response Time}} \\ 
			& \multicolumn{1}{c}{DED} & \multicolumn{1}{c}{CWD} &\multicolumn{1}{c}{BPRV} &\multicolumn{1}{c}{SAA}&\multicolumn{1}{c|}{ASM}\\\hline \hline
			42 & 0.90/17.78 & 0.91/13.12 & 0.96/4.56 & 0.96/11.45 & 0.87/10.33\\\hline
			57 & 0.94/21.25 & 0.94/14.52& 0.99/4.41 & 0.97/13.82 & 0.92/12.48\\\hline
			134 & 0.78/15.79 & 0.83/10.51& 0.94/5.15& 0.87/11.97 & 0.89/11.29\\\hline
			153 & 0.65/24.06 & 0.74/12.08& 0.63/8.53 & 0.91/16.75 & 0.87/ 9.79\\\hline
			155 & 0.83/19.97 & 0.95/13.04 & 0.92/5.03 & 0.88/7.37 & 0.93/14.38\\\hline
			\hline
		\end{tabular}
	}
\end{table}

For experiments on the task scheduler module, we report maximum
latencies of tasks and average task accuracies, for both our
approaches and the competitor method. We also evaluate the final results through the Dawid and Skene's expectation maximization method \cite{dawid1979maximum, ipeirotis2010quality}. Due to space limitation, please refer to \textbf{Appendix B} of our supplemental materials for more details. For experiments on the
notification module, we present the precision ($=\frac{N_c(ts)}{N_t(ts)}$) and recall ($=\frac{N_c(ts)}{N_a(ts)}$) of all tested
methods. Our experiments were run on an Intel Xeon X5675 CPU
with 32 GB RAM in Java.

\begin{figure}[t!]
	\centering
	\subfigure[][{ Maximum Latency}]{
		\scalebox{0.2}[0.2]{\includegraphics{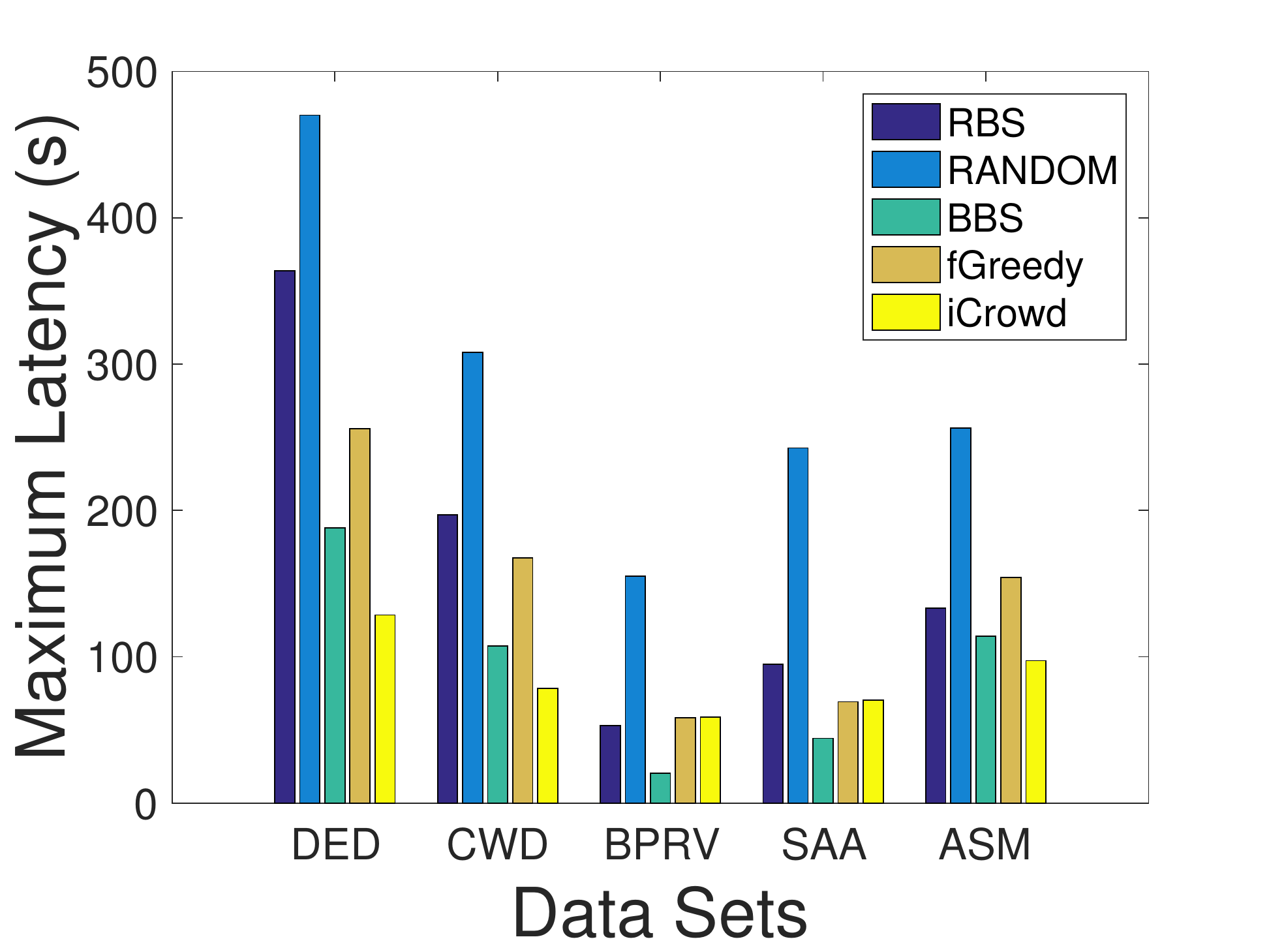}}
		\label{subfig:real_delay}}
	\subfigure[][{ Average Accuracy}]{
		\scalebox{0.2}[0.2]{\includegraphics{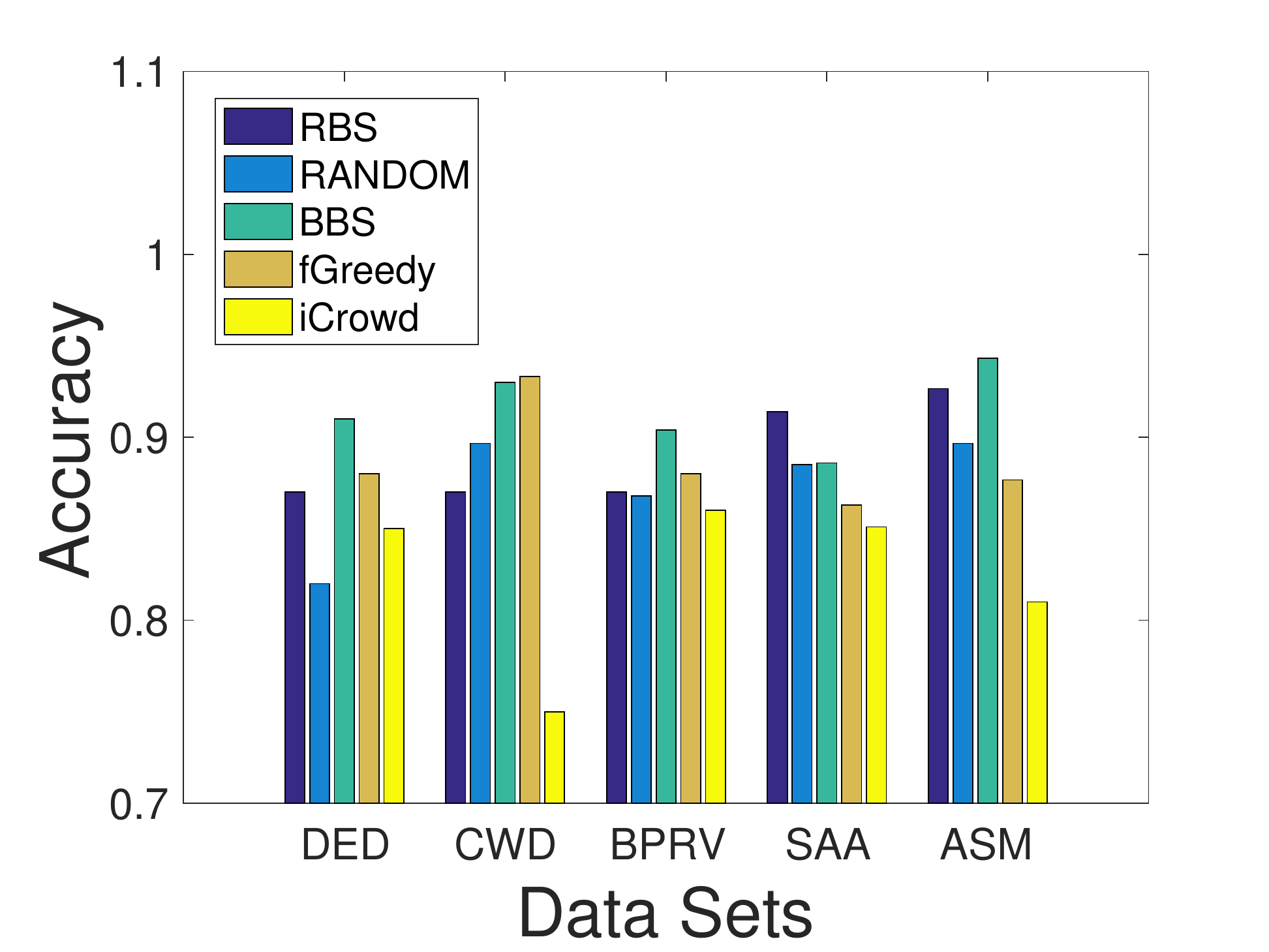}}
		\label{subfig:real_accuracy}}
	\caption{ The Performance of Task Scheduler Module on Real Data.}
	\label{fig:real_fig}
\end{figure}

\subsection{Experiments on Real Data}
\label{sec:exp_real}

\begin{figure}[t!]
	\centering
	\subfigure[][{ Recall}]{
		\scalebox{0.2}[0.2]{\includegraphics{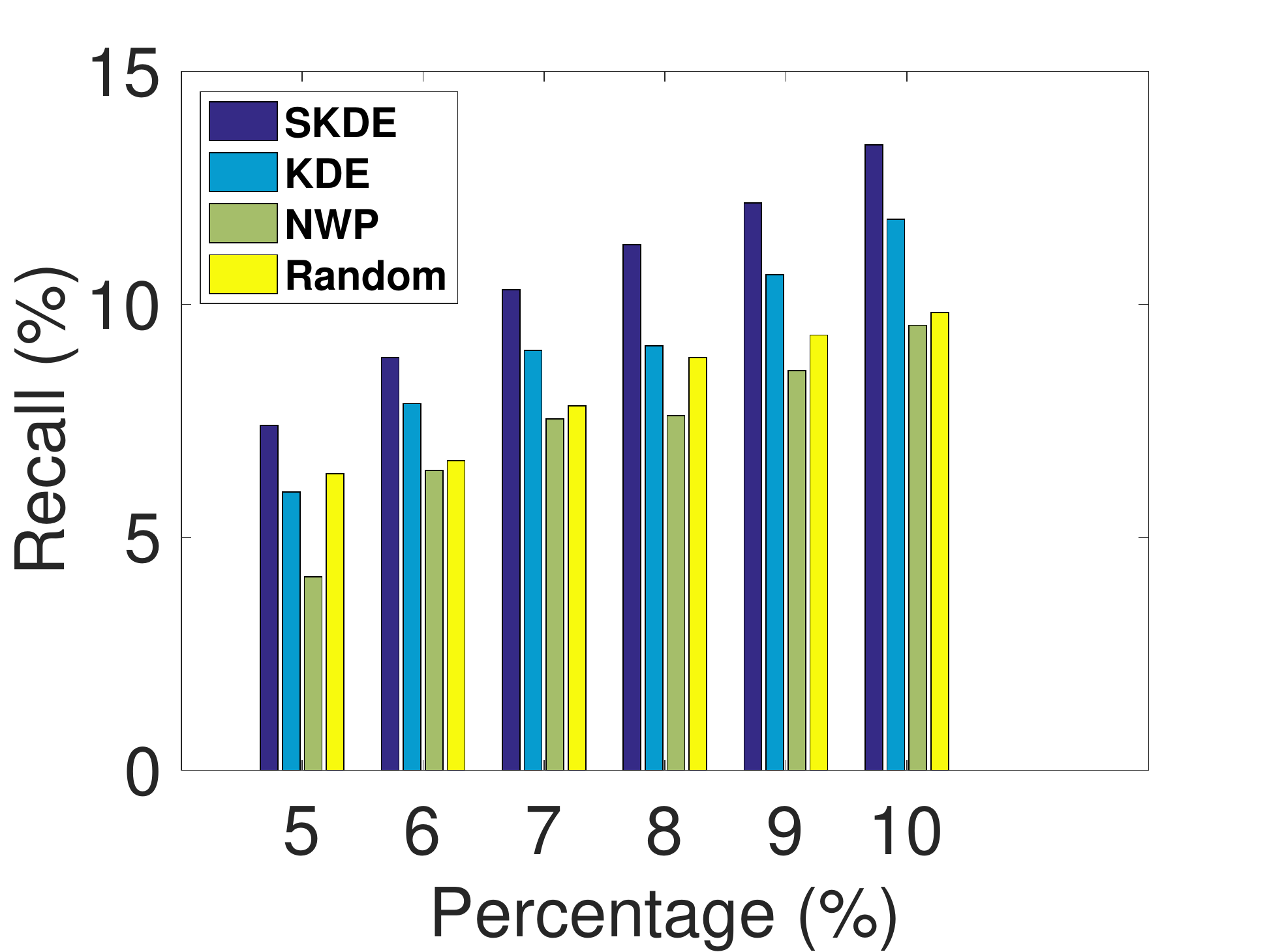}}
		\label{subfig:notify_recall}}
	\subfigure[][{ Precision}]{
		\scalebox{0.2}[0.2]{\includegraphics{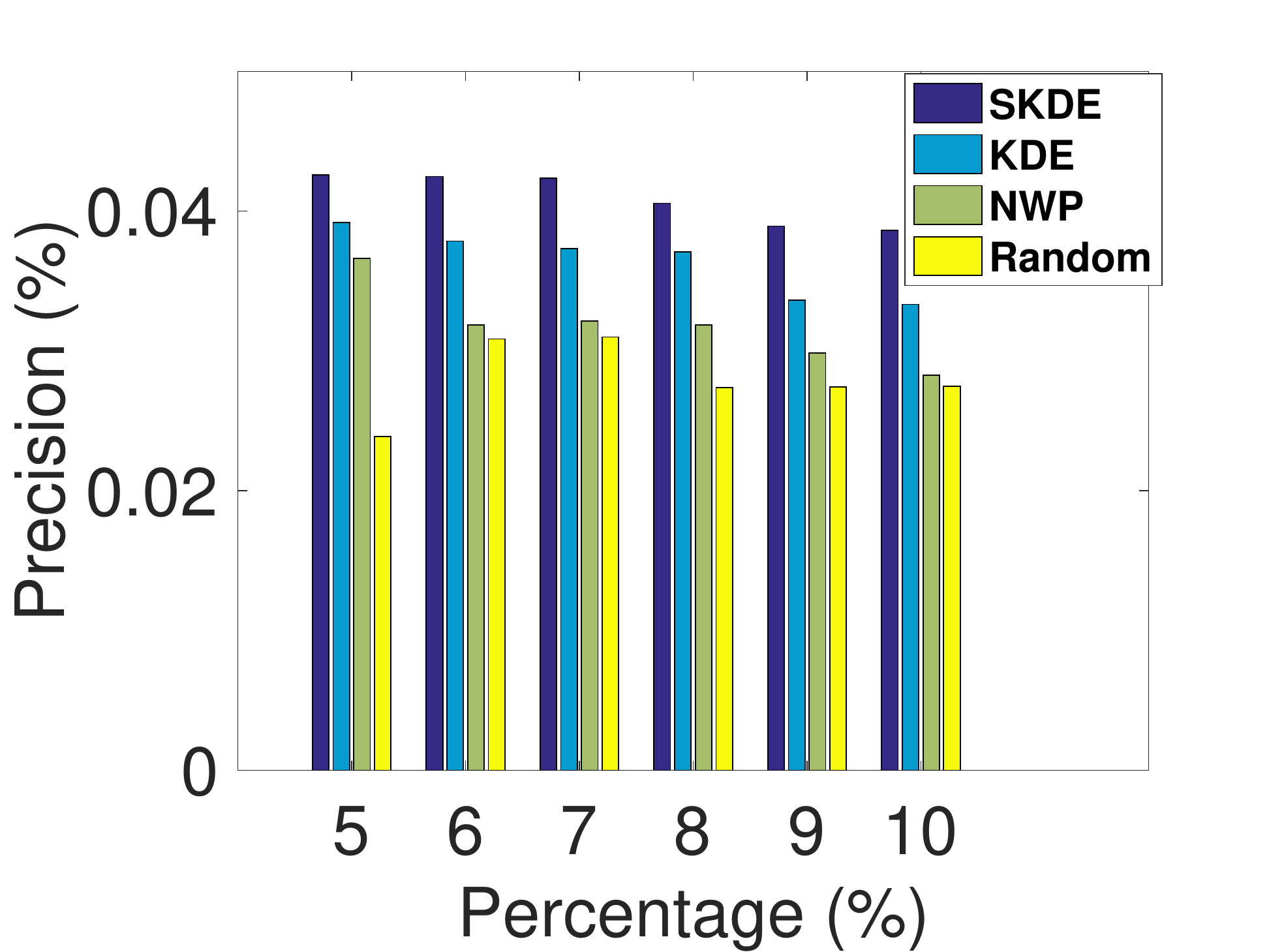}}
		\label{subfig:notify_precision}}
	\caption{ The Performance of the Notification Module on Real Data.}
	\label{fig:notify_fig}
\end{figure}

\noindent \textbf{The Performance of the Task Scheduler Module on Real Data.}
Figure \ref{fig:real_fig} shows the results of experiments on our real platform about the task scheduler module of our framework. For the maximum latencies shown in Figure \ref{subfig:real_delay}, our two approaches can maintain lower latencies than the baseline approach, RANDOM. Specifically, BBS can achieve a much lower latency, which is at most half of that of RANDOM. fGreedy is better than RANDOM, however, still needs more time to finish tasks than our BBS. As iCrowd assigns $k$ (=3) workers to each task, it can achieve lower latencies than out BBS in DED, CWD and ASM but higher latencies than our BBS in BPRV and SAA. For the accuracies shown in Figure \ref{subfig:real_accuracy}, our two approaches achieve higher accuracies than RANDOM. Moreover, the accuracy of BBS is higher than that of RBS. The reason is that, BBS can complete the
most urgent tasks with minimum sets of workers, achieving the
highest category accuracies. In contrast, RBS is not concerned with
the accuracy, and just routes available workers to tasks with the
highest delay probabilities. Thus, RBS is not that effective,
compared with BBS, to maintain a low latency. As the required accuracies are satisfied when assigning tasks to workers, BBS, RBS, RANDOM and fGreedy achieve close accuracies to each other. However, iCrowd just achieves relatively low accuracy in CWD as it assigns only $k$ (=3) workers to each task  and the average accuracy of workers in CWD is low. 

\noindent \textbf{The Performance of Notification Module on Real Data.} To show the effectiveness of our smooth KDE model, we present the recall and precision of our model compared with {\sf KDE}, {\sf NWP} and  {\sf Random}, by varying the number of prediction samples from 5\% to 10\% of the entire population. As shown in Figure \ref{subfig:notify_recall}, our smooth KDE model can achieve higher recall scores than the other three baseline methods. In addition, when we predict with more samples, the advantage of our smooth KDE model is more obvious w.r.t. the recall scores. The reason is that our smooth KDE model can utilize the influence of the friends, which is more effective when we predict with more samples. Similarly, in Figure \ref{subfig:notify_precision}, smooth KDE model can obtain the highest precision scores among all tested methods.

\subsection{Experiments on Synthetic Data}
\label{sec:exp_synthetic}

\noindent \textbf{Effect of the Number, $m$, of Tasks.} Figure
\ref{fig:task_fig} shows the maximum latency and average accuracy of
five approaches, RBS, BBS, iCrowd, RANDOM and fGreedy, by varying the number, $m$,
of tasks from $1K$ to $5K$, where other parameters are set to their
default values. As shown in Figure \ref{subfig:task_latency}, with
more tasks (i.e., larger $m$ values), all the five approaches
achieve higher maximum task latency. This is because, if there are
more tasks, each task will have relatively fewer workers to assign,
which prolongs the latencies of tasks.
RANDOM always has higher latency than our RBS approach, followed by
BBS. fGreedy can achieve lower latency than RBS approach, but still higher than BBS, as fGreedy is still a batch-based algorithm but greedily picking fastest workers. Here, the maximum latency of BBS remains low, and only slightly
increases with more tasks. The reason has been discussed in Section \ref{sec:exp_real}. In addition, iCrowd achieves low latency when the number of tasks is lower than 2K but achieve much higher latency than fGreedy, BBS and RBS when $m$ increases to 3K and above.

\begin{figure}[t!]
	\centering
	\subfigure[][{ Maximum Latency}]{
		\scalebox{0.18}[0.18]{\includegraphics{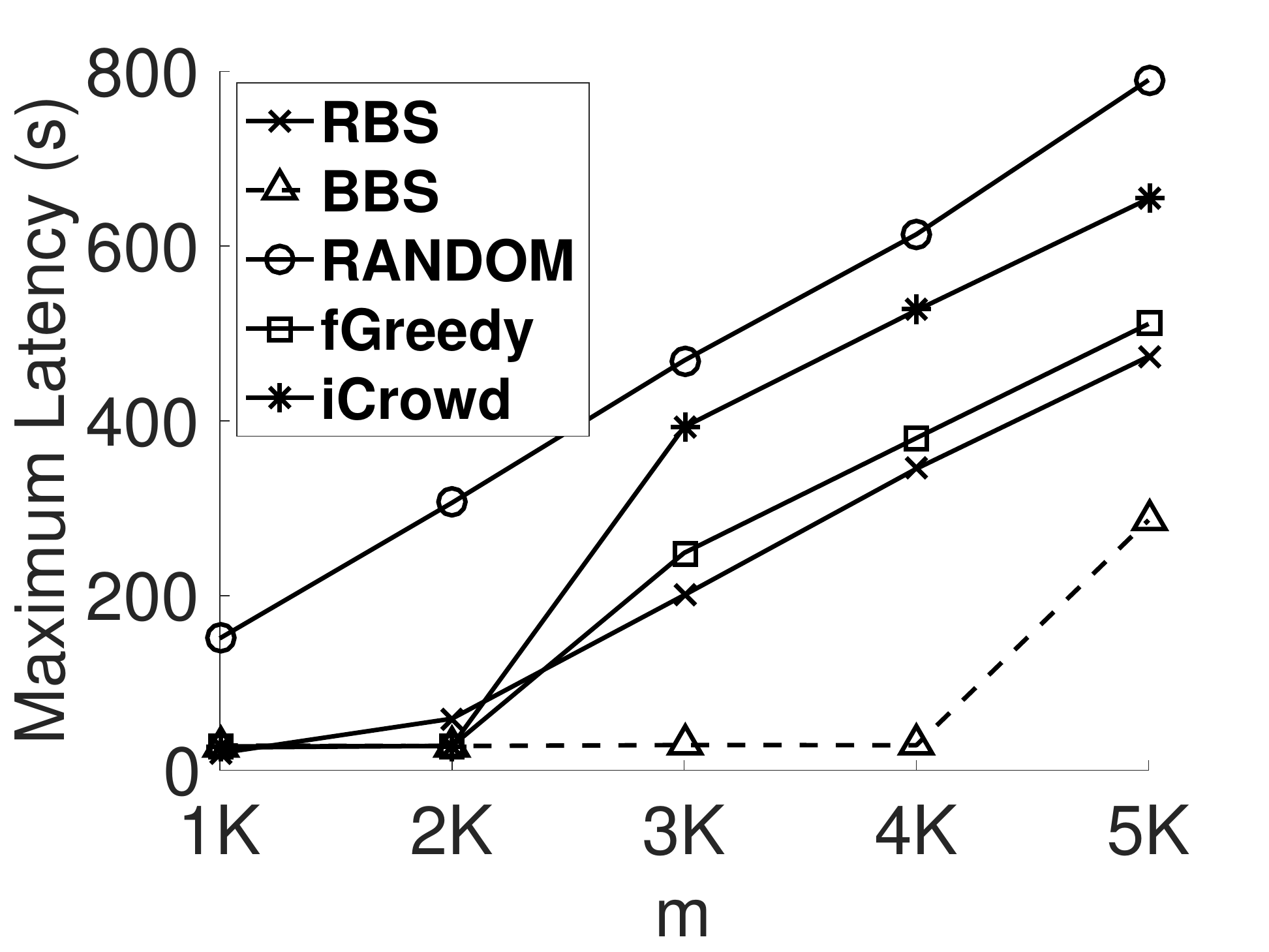}}
		\label{subfig:task_latency}}
	\subfigure[][{ Average Accuracy}]{
		\scalebox{0.18}[0.18]{\includegraphics{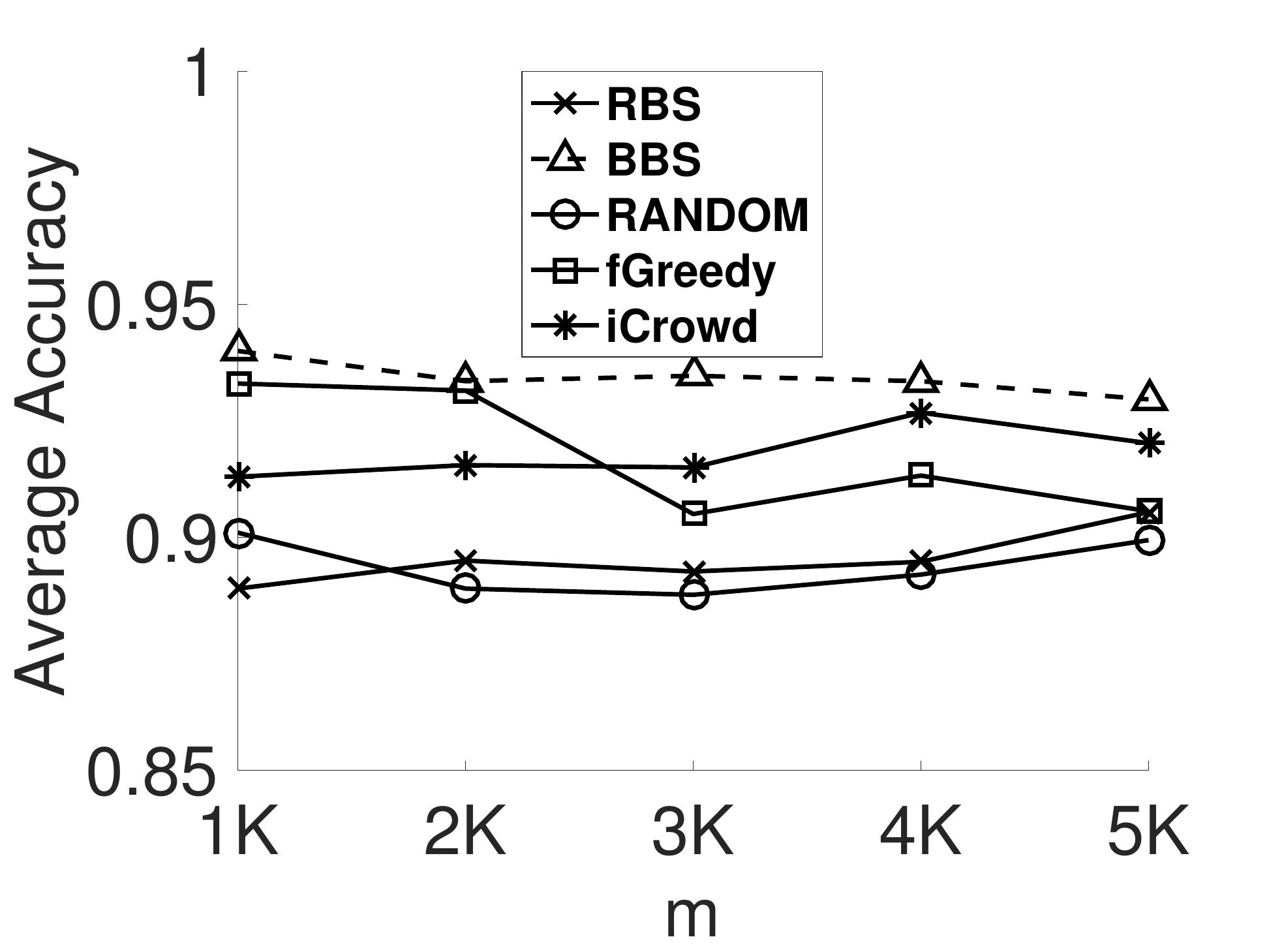}}
		\label{subfig:task_accuarcy}}
	\caption{ Effect of the number of tasks $m$.}
	\label{fig:task_fig}
\end{figure}

\begin{figure}[t!]
	\centering
	\subfigure[][{ Maximum Latency}]{
		\scalebox{0.18}[0.18]{\includegraphics{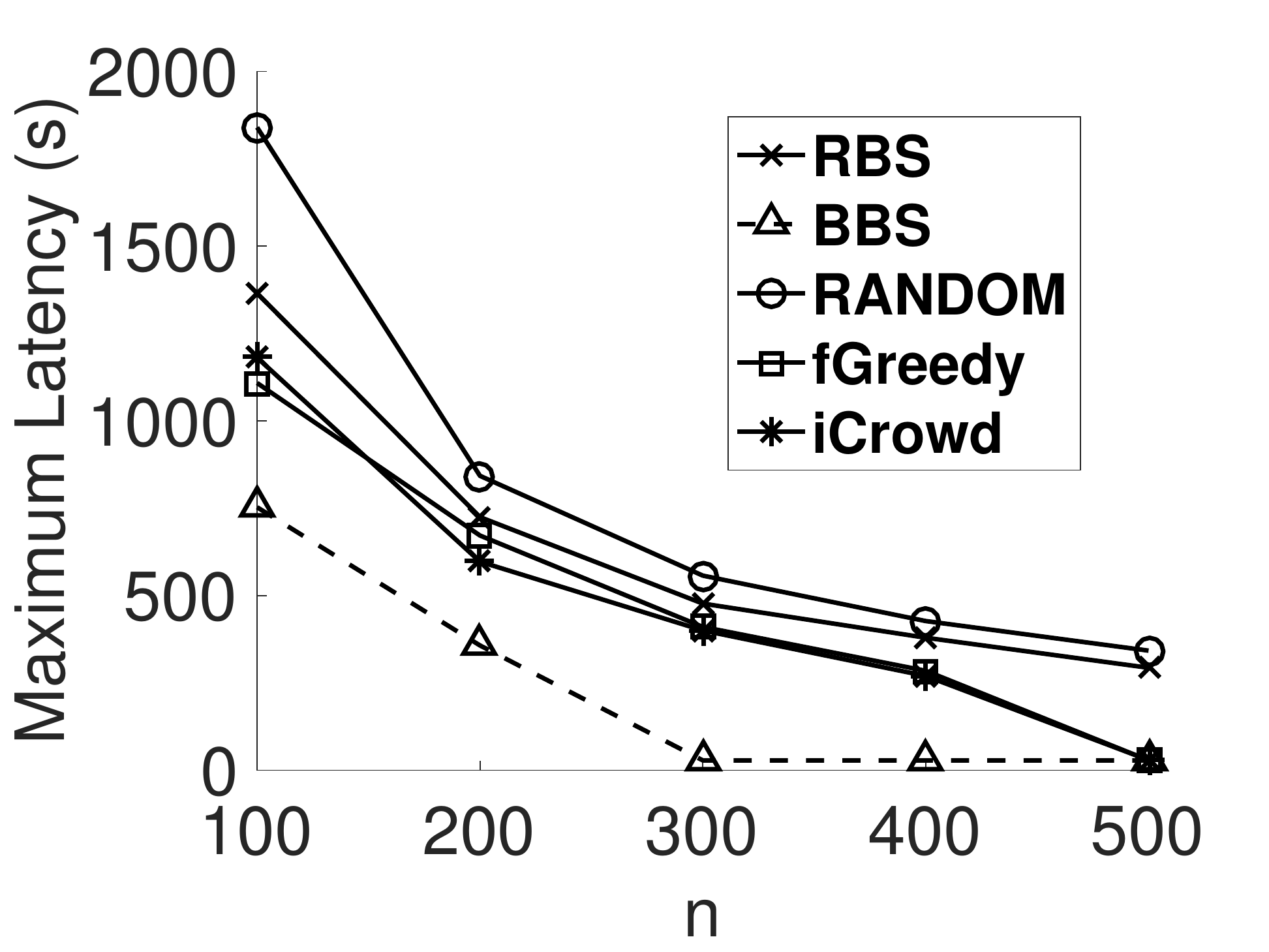}}
		\label{subfig:worker_latency}}
	\subfigure[][{ Average Accuracy}]{
		\scalebox{0.18}[0.18]{\includegraphics{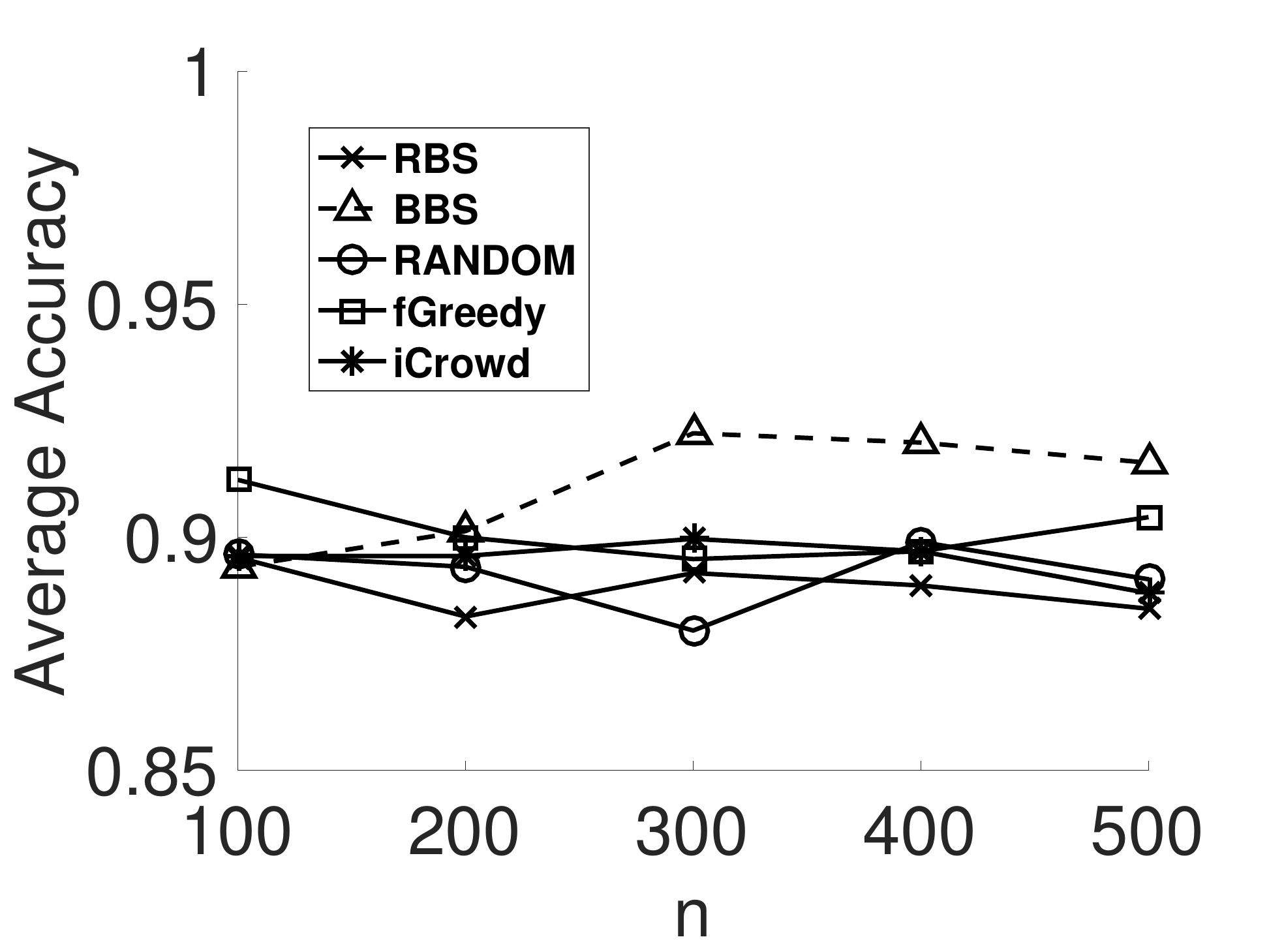}}
		\label{subfig:worker_accuracy}}
	\caption{ Effect of the number of workers $n$.}
	\label{fig:worker_fig}
\end{figure}

Figure \ref{subfig:task_accuarcy} illustrates the average accuracies
of five approaches, with different $m$ values. Since BBS always
chooses a minimum set of workers with the highest category
accuracies, in most cases, the task accuracies of BBS are higher
than the other three approaches. fGreedy can achieve slightly higher accuracy than RBS, as fGreedy can select a set of workers that meets the required accuracy threshold of the task with the highest delay probability while RBS can only determine to assign the current available worker to a suitable task. Nonetheless,
from the figure, RBS and BBS approaches can achieve high task
accuracies (i.e., $89\% \sim 94\%$). We also conducted the experiments with iCrowd having different parameter $k$ on varying number of tasks. Due to space limitation, please refer to \textbf{Appendix C} of supplementary materials for the details.

\noindent \textbf{Effect of the Number, $n$, of Workers.} Figure
\ref{fig:worker_fig} shows the experimental results, where the
number, $n$, of workers changes from 100 to 500, and other
parameters are set to their default values. For the maximum
latencies shown in Figure \ref{subfig:worker_latency}, when the
number, $n$, of worker increases, the maximum latencies of five
algorithms decrease. This is because, with more workers, each task
can be assigned with more workers (potentially with lower
latencies). Since the quality thresholds of tasks are not changing,
with more available workers, the maximum latencies thus decrease.
Similarly, BBS can maintain a much lower maximum latency than the
other four algorithms.  For the average accuracies in Figure
\ref{subfig:worker_accuracy}, our BBS algorithm can achieve
high average accuracies (i.e., $87\% \sim 93\%$).

\noindent \textbf{Effect of the Range of the Quality Threshold
	$[q^-, q^+]$.} Figure \ref{fig:quality_fig} shows the performance
of five approaches, where the range, $[q^-, q^+]$,
of quality thresholds, $q_i$, increases from $[0.75, 0.8]$ to
$[0.9, 0.95]$, and other parameters are set to their default
values. Specifically, as depicted in Figure
\ref{subfig:quality_latency}, when the range of the quality
threshold increases, the maximum latencies of BBS, RBS, RANDOM and fGreedy also increase. The reason is that, with higher quality
threshold $q_i$, each task needs more workers to be satisfied (as
shown by Corollary \ref{coro:worker_increaes}). Similarly, BBS can
achieve much lower maximum latencies than that of RBS, fGreedy and RANDOM.
Further, RBS is better than RANDOM but worse than fGreedy, w.r.t. the maximum latency. Since iCrowd always assigns $k$ (=3) workers to each task, the specific quality thresholds do not affect it w.r.t. the maximum latency.

In Figure \ref{subfig:quality_accuracy}, when the range of $q_i$
increases, the average accuracies of BBS, RBS, RANDOM and fGreedy also
increase. This is because, when $q_i$ increases, each task needs more workers to satisfy its quality threshold (as shown by Corollary \ref{coro:worker_increaes}), which makes the average accuracies of tasks increase. Similar to previous results, our two approaches, BBS and RBS, can achieve higher average accuracies than RANDOM. fGreedy can achieve close accuracy to BBS. Similarly, the specific quality thresholds do not affect it w.r.t. the average accuracies.

\begin{figure}[t!]
	\centering
	\subfigure[][{ Maximum Latency}]{
		\scalebox{0.18}[0.18]{\includegraphics{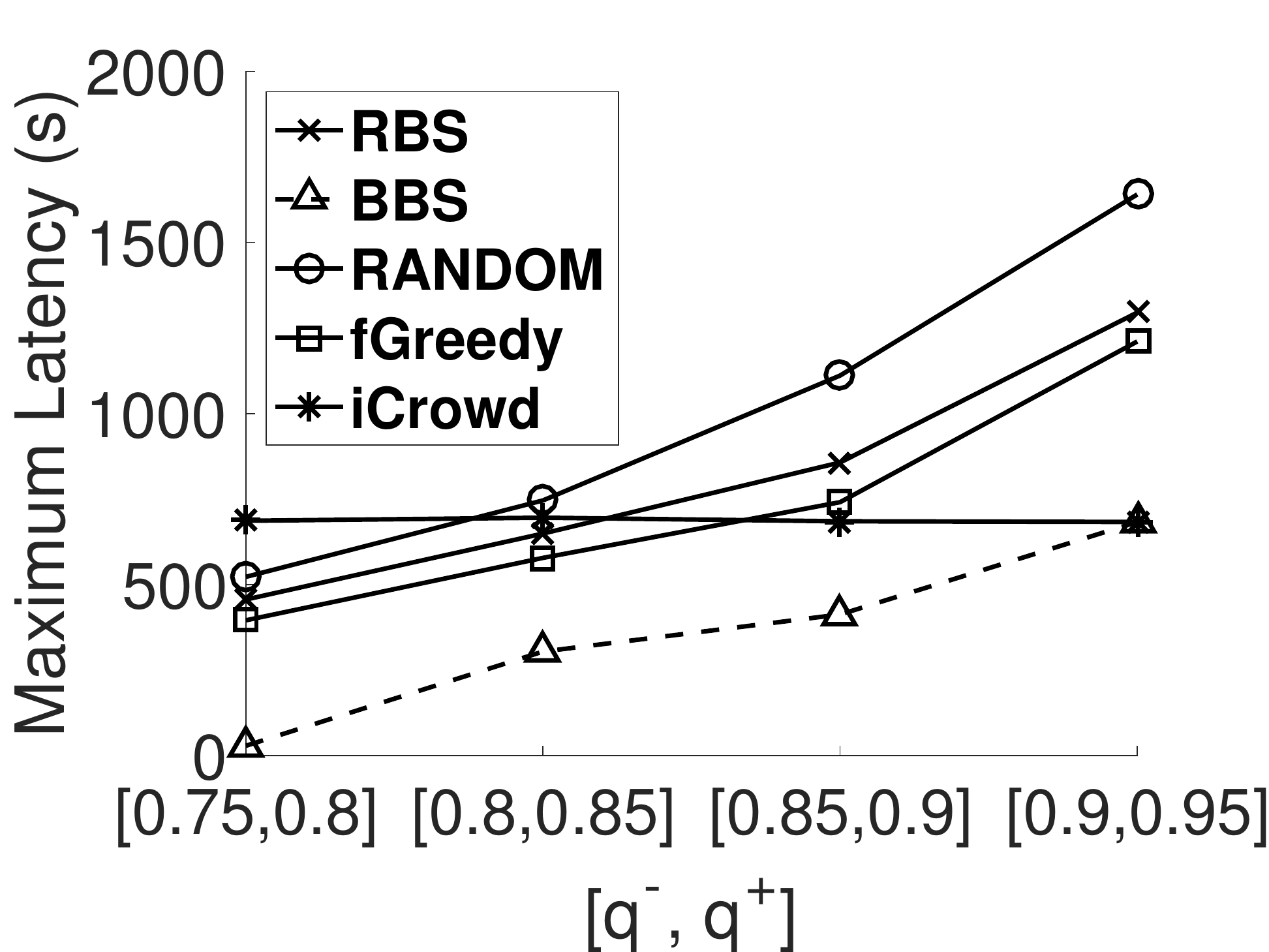}}
		\label{subfig:quality_latency}}
	\subfigure[][{ Average Accuracy}]{
		\scalebox{0.18}[0.18]{\includegraphics{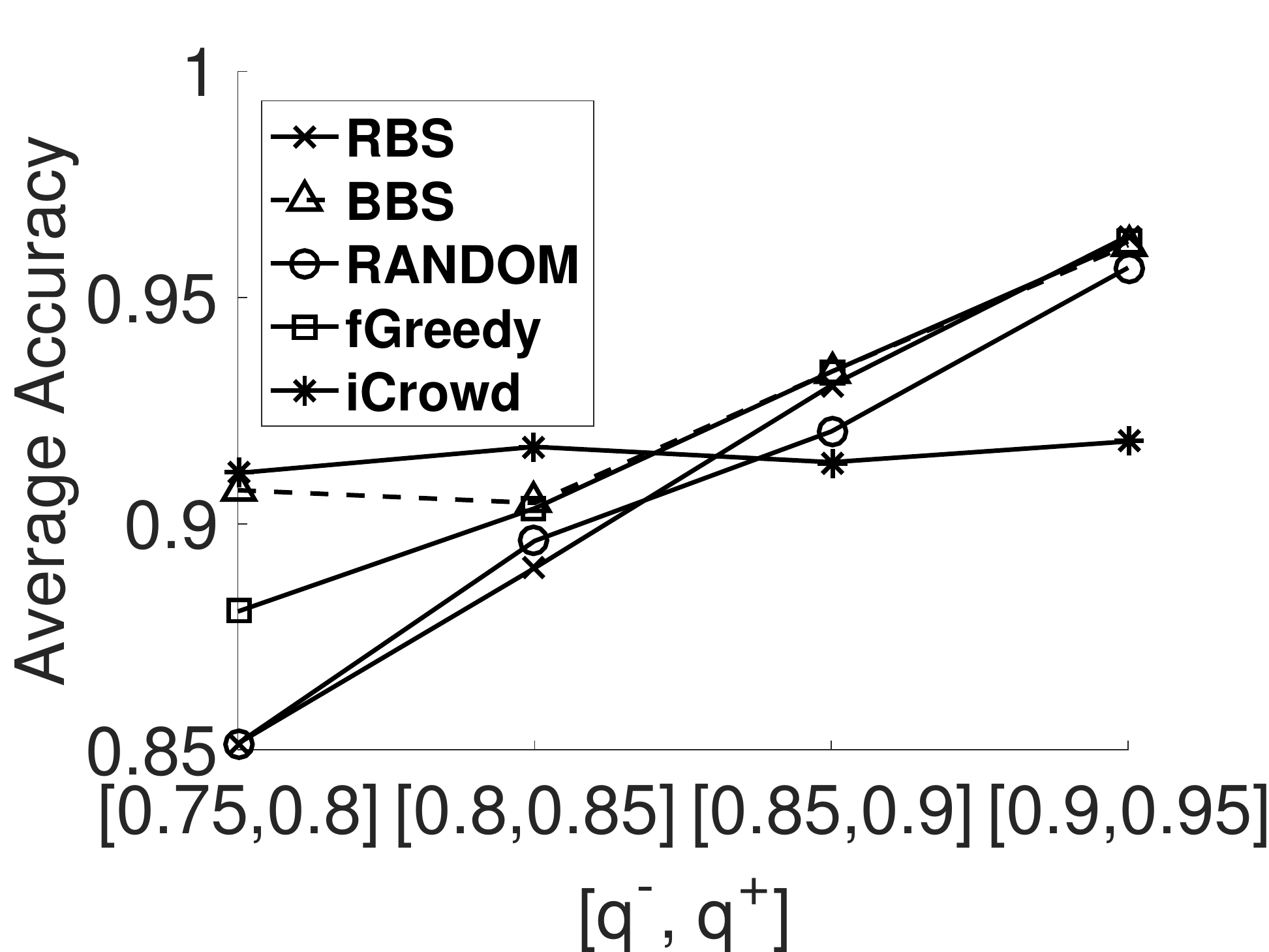}}
		\label{subfig:quality_accuracy}}
	\caption{ {Effect of the specific quality value range $[q^-, q^+]$.}}
	\label{fig:quality_fig}
\end{figure}

\noindent \textbf{Effect of the Number, $l$, of Categories.} 
Figure \ref{fig:category_fig} varies the number, $l$, of categories from 5
to 40, where other parameters are set by default. From Figure
\ref{subfig:c_latency}, we can see that, our RBS and BBS approaches
can both achieve low maximum latencies, with different $l$ values.
Similar to previous results, BBS can achieve the lowest maximum
latencies among three approaches, and RBS is better than RANDOM.
Moreover, in Figure \ref{subfig:c_accuracy}, with different $l$
values, the accuracies of BBS remain high (i.e.,
$92\%\sim94\%$), and are better than that of the other four
algorithms.

\begin{figure}[t!]
	\centering
	\subfigure[][{ Maximum Latency}]{
		\scalebox{0.18}[0.18]{\includegraphics{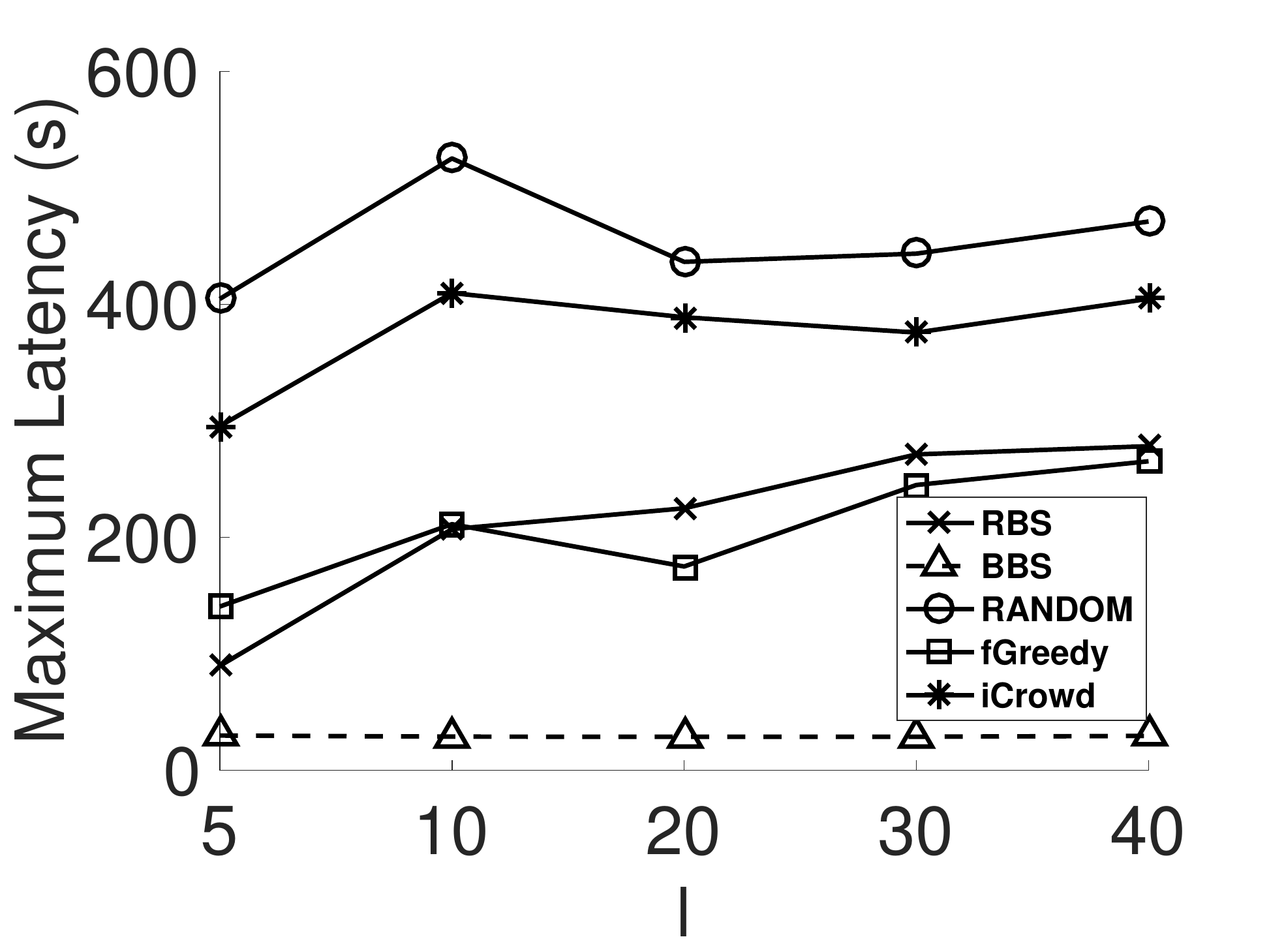}}
		\label{subfig:c_latency}}
	\subfigure[][{ Average Accuracy}]{
		\scalebox{0.18}[0.18]{\includegraphics{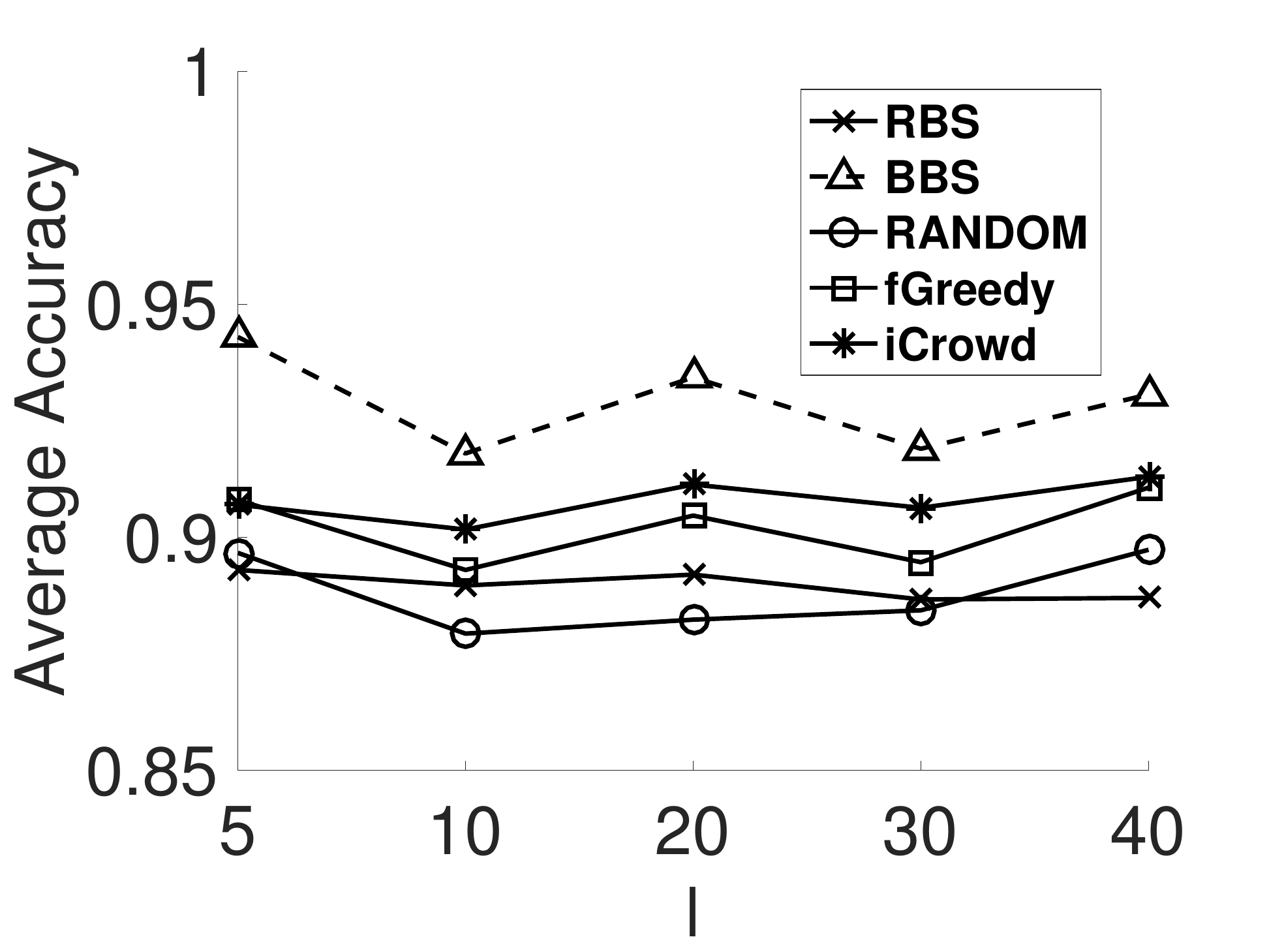}}
		\label{subfig:c_accuracy}}
	\caption{ Effect of the number of categories $l$.}
	\label{fig:category_fig} 
\end{figure}

In summary, our task scheduler module can achieve results with low latencies and high accuracies on both real and synthetic datasets. Especially, our BBS approach is the best one among all the tested scheduling approaches. Moreover, verified through the experiments on the tweet dataset, our smooth KDE model can accurately predict the acceptance probabilities of workers, and achieve higher precision and recall scores than three baseline methods: KDE, Random and NWP.

\section{Related Work}
\label{sec:related}

Crowdsourcing has been well studied by different research
communities (e.g., the database community), and widely used to solve
problems that are challenging for computer (algorithms), but easy
for humans (e.g., sentiment analysis
\cite{mohammad2013crowdsourcing} and entity resolution
\cite{wang2012crowder}). In the databases area, CrowdDB
\cite{franklin2011crowddb} and Qurk \cite{marcus2011crowdsourced}
are designed as crowdsourcing incorporated databases; CDAS
\cite{liu2012cdas} and iCrowd \cite{fan2015icrowd} are systems
proposed to achieve high quality results with crowds; gMission
\cite{chen2014gmission} and MediaQ \cite{kim2014mediaq} are general
spatial crowdsourcing systems that extend crowdsourcing to the real
world. Due to intrinsic error rates of humans,
crowdsourcing systems always focus on achieving high-quality results
with minimum costs. To guarantee the quality of the results, each
task can be answered by multiple workers, and the final result is
aggregated from answers with voting \cite{fan2015icrowd,
	cao2012whom} or learning \cite{liu2012cdas,
	karger2011iterative} methods. To manage the budget, existing studies \cite{karger2014budget, wang2013quality} focused on designing budget-optimal task allocation methods to finish tasks with minimum number of workers with accuracy guarantees and proposed quality-based pricing mechanisms for workers with heterogeneous quality, which do not take the speeds of workers and the latencies of tasks into consideration directly.

Due to the diversity of the workers and their autonomous
participation style in existing crowdsourcing markets (e.g., Amazon
Mechanical Turk (AMT) \cite{amt} and Crowdflower
\cite{crowdflower}), the quality and completion time of
crowdsourcing tasks cannot always be guaranteed. For example, in
AMT, the latency of finishing tasks may vary from minutes to days
\cite{franklin2011crowddb, kittur2008crowdsourcing}. Some difficult
tasks are often ignored by workers, and left uncompleted for a long
time. Recently, several studies \cite{gao2014finish,
	haas2015clamshell, parameswaran2014optimal, verroios2015entity}
focused on reducing the completion time of tasks. In
\cite{parameswaran2014optimal, verroios2015entity}, the authors
designed algorithms to reduce the latencies of tasks for specific
jobs, such as rating and filtering records, and resolve the entities
with crowds. The proposed techniques for specific tasks, however,
cannot be used for general crowdsourcing tasks, which is the target
of our FROG framework. In addition, the cognitive bias of workers may affect the aggregated results, which is considered in the idea selection problem \cite{xu2012reference} by providing the reference ideas. The cognitive bias of workers can be another dimension to handle in our future work. For example, we can model the cognitive bias of workers from their historical answers, then use the cognitive bias to more accurately infer the final results.

Gao et al. \cite{gao2014finish} leveraged the pricing model from
prior studies, and developed algorithms to minimize the total elapsed
time with user-specified monetary constraint or to minimize the
total monetary cost with user-specified deadline constraint. They
utilized the decision theory (specifically, Markov decision
processes) to dynamically modify the prices of tasks. Daniel et al.
\cite{haas2015clamshell} proposed a system, called CLAMShell, to
speed up crowds in order to achieve consistently low-latency data
labeling. They analyzed the sources of labeling latency. To tackle
the sources of latency, they designed several techniques (such as
straggler mitigation to assign the delayed tasks to multiple
workers, and pool maintenance) to improve the average worker speed
and reduce the worker variance of the retainer pool.

Recently,  Goel, Rajpal and Mausam \cite{goel2017octopus} utilize the machine learning methods to study the relationship of the speed, budget and quality of crowdsourcing tasks. They proposed a learning-to-optimizing protocol to simultaneously optimize the budget allocation during each round while minimizing the task latency of a batch of binary tasks (having 0/1 response) and to maximizing the qualities of tasks in open labor markets (e.g., AMT). 

Different from the existing studies \cite{gao2014finish,
	parameswaran2014optimal, verroios2015entity, bernstein2011crowds,
	haas2015clamshell, bernstein2012analytic}, our FROG framework is designed for general
crowdsourcing tasks (rather than specific tasks), and focuses on
both reducing the latencies of all tasks and improving the accuracy
of tasks(instead of either latency or accuracy). In our FROG framework, the task scheduler
module actively assigns workers to tasks with high reliability and
low latency, which takes into account response times and category
accuracies of workers, as well as the difficulties of tasks (not
fully considered in prior studies). We also design two novel
scheduling approaches, request-based and batch-based scheduling.
Different from prior studies \cite{fan2015icrowd, haas2015clamshell} that simply filtered out
workers with low accuracies, our work utilizes all possible worker
labors, by scheduling difficult/urgent tasks to high-accuracy/fast
workers and routing easy and not urgent tasks to low-accuracy
workers.

Moreover, Bernstein et al. \cite{bernstein2011crowds} proposed the
retainer model to hire a group of workers waiting for tasks, such
that the latency of answering crowdsourcing tasks can be
dramatically reduced. Bernstein et al. \cite{bernstein2012analytic}
also theoretically analyzed the optimal size of the retainer model
using queueing theory for realtime crowdsourcing, where
crowdsourcing tasks come individually. These models may either
increase the system budget or encounter the scenario where online
workers are indeed not enough for the assignment during some period.
In contrast, with the help of smart devices, our FROG framework has
the capability to invite offline workers to do tasks, which can
enlarge the public worker pool, and enhance the throughput of the
system. In particular, our notification module in FROG can contact
workers who are not online via smart devices, and intelligently send
invitation messages only to those available workers with high
probabilities. Therefore, with the new model and different goals in
our FROG framework, we cannot directly apply techniques in previous
studies to tackle our problems (e.g., FROG-TS and EWN).

\section{Conclusion}
\label{sec:conclusion}

The crowdsourcing has played an important role in many real
applications that require the intelligence of human workers (and
cannot be accurately accomplished by computers or algorithms), which
has attracted much attention from both academia and industry. In
this paper, inspired by the accuracy and latency problems of
existing crowdsourcing systems, we propose a novel \textit{fast and
	reliable crowdsourcing} (FROG) framework, which actively assigns
workers to tasks with the expected high accuracy and low latency
(rather than waiting for autonomous unreliable and high-latency
workers to select tasks). We formalize the FROG task scheduling
(FROG-TS) and efficient worker notifying (EWN) problems, and
proposed effective and efficient approaches (e.g., request-based,
batch-based scheduling, and smooth KDE) to enable the FROG
framework. Through extensive experiments, we demonstrate the
effectiveness and efficiency of our proposed FROG framework on both
real and synthetic data sets.

\section{Acknowledgment}
\label{sec:ack}

Peng Cheng, Xun Jian and Lei Chen are supported in part by the Hong Kong RGC Project
16207617, Science and Technology Planning Project of Guangdong
Province, China, No. 2015B010110006, NSFC Grant No.
61729201, 61232018, Microsoft Research Asia Collaborative Grant
and NSFC Guang Dong Grant No. U1301253. Xiang Lian is supported by Lian Start Up No. 220981, Kent State University.

\bibliographystyle{abbrv}
\bibliography{all}

\section*{Supplementary Materials}

\textbf{Appendix A. Expected Accuracy of Multi-choices Task}
\label{apd:expected_accuracy}

\noindent \textbf{Majority voting with multiple choices.}
Given a task $t_i$ in category $c_l$ and a set of $k$ workers $W_i$ assigned to it, when we use majority voting with $R$ choices, the expected accuracy of tasks $t_i$ can be calculated as follows:
	\begin{align}
	&\Pr(W_i, c_l) \notag\\&= \sum_{x = \lceil\frac{k}{R} \rceil, |W_{i,x}| \text{is max}}^{k}\sum_{W_{i,x}}\Big(\prod_{w_j \in W_{i,x}}\alpha_{jl}\prod_{w_j \in W_i - W_{i,x}}(1-\alpha_{jl})\Big), \notag
	\end{align}
\noindent where $W_{i,x}$ is a subset of $W_i$ with $x$ elements.

\noindent \textbf{Weighted Majority voting with multiple choices.}
Given a task $t_i$ in category $c_l$ and a set of $k$ workers $W_i$ assigned to it, when we use weighted majority voting with $R$ choices, the expected accuracy of tasks $t_i$ can be calculated as follows:
	\begin{align}
	&\Pr(W_i, c_l) \notag\\&= 
	\sum_{x = \lceil\frac{k}{R} \rceil, We(W_{i,x}) \text{is max}}^{k}\sum_{W_{i,x}}\Big(\prod_{w_j \in W_{i,x}}\alpha_{jl} \prod_{w_j \in W_i - W_{i,x}}(1-\alpha_{jl})\Big), \notag
	\end{align}
\noindent where $W_{i,x}$ is a subset of $W_i$ with $x$ elements, and $We(W)$ is the weight of a given worker set $W$.

\noindent \textbf{Half voting with multiple choices.} Half voting only return the results selected by more than half workers.
Given a task $t_i$ in category $c_l$ and a set of $k$ workers $W_i$ assigned to it, when we use half voting with $R$ choices, the expected accuracy of tasks $t_i$ can be calculated as follows:
	\begin{equation}
	\Pr(W_i, c_l) = \sum_{x = \lceil\frac{k}{2} \rceil}^{k}\sum_{W_{i,x}}\Big(\prod_{w_j \in W_{i,x}}\alpha_{jl}\prod_{w_j \in W_i - W_{i,x}}(1-\alpha_{jl})\Big), \notag
	\end{equation}
\noindent where $W_{i,x}$ is a subset of $W_i$ with $x$ elements. Half voting is effective when there are more than two choices and the expected accuracy of each task is calculated by same equation same with that of majority voting with two choices.

\noindent \textbf{Bayesian voting with multiple choices.} Bayesian voting returns the results based on the priors of the choices and the accuracy of the workers. 
Given a task $t_i$ in category $c_l$ and a set of $k$ workers $W_i$ assigned to it, when we use Bayesian voting with $R$ choices, the expected accuracy of tasks $t_i$ can be calculated as follows:
	\begin{align}
	&\Pr(W_i, c_l)\notag\\ &= 
	\sum_{BP(W^r_{i}) \text{is max}}\sum_{W^r_{i}}\Big(BP(W^r_{i})\prod_{w_j \in W_i - W^r_{i}}(1-\Pr(r))(1-\alpha_{jl})\Big), \notag
	\end{align}
\noindent where $W^r_{i}$ is a subset of $W_i$ who select the $r$-th choice, $\Pr(r)$ is the prior probability that the true answer is the $r$-th choice, and $BP(W^r_{i}) = \prod_{w_j \in W^r_i} Pr(r)\alpha_{jl}$ is the Bayesian probability of the $r$-th choice. In Bayesian voting, the returned answer is the $r$-th choice if $BP(W^r_{i})$ is the maximum value among $R$ choices.

\noindent\textbf{Appendix B. Quality of the Results Estimated with the Expectation Maximization Method.}

\begin{figure}[h!]
	\centering
	\subfigure[][{\scriptsize Maximum Latency}]{
		\scalebox{0.18}[0.18]{\includegraphics{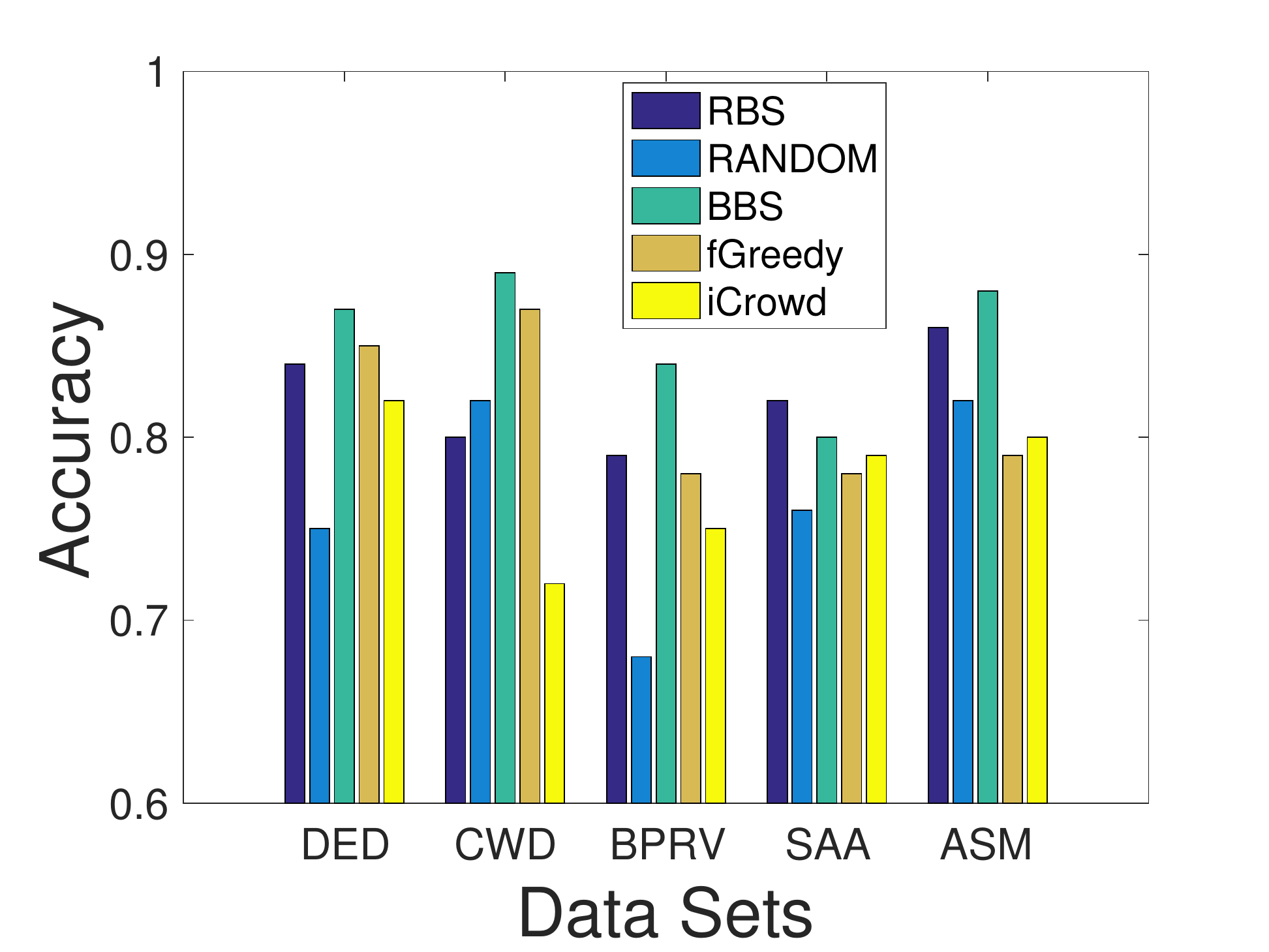}}
		\label{subfig:real_em}}
	\subfigure[][{\scriptsize Average Accuracy}]{
		\scalebox{0.18}[0.18]{\includegraphics{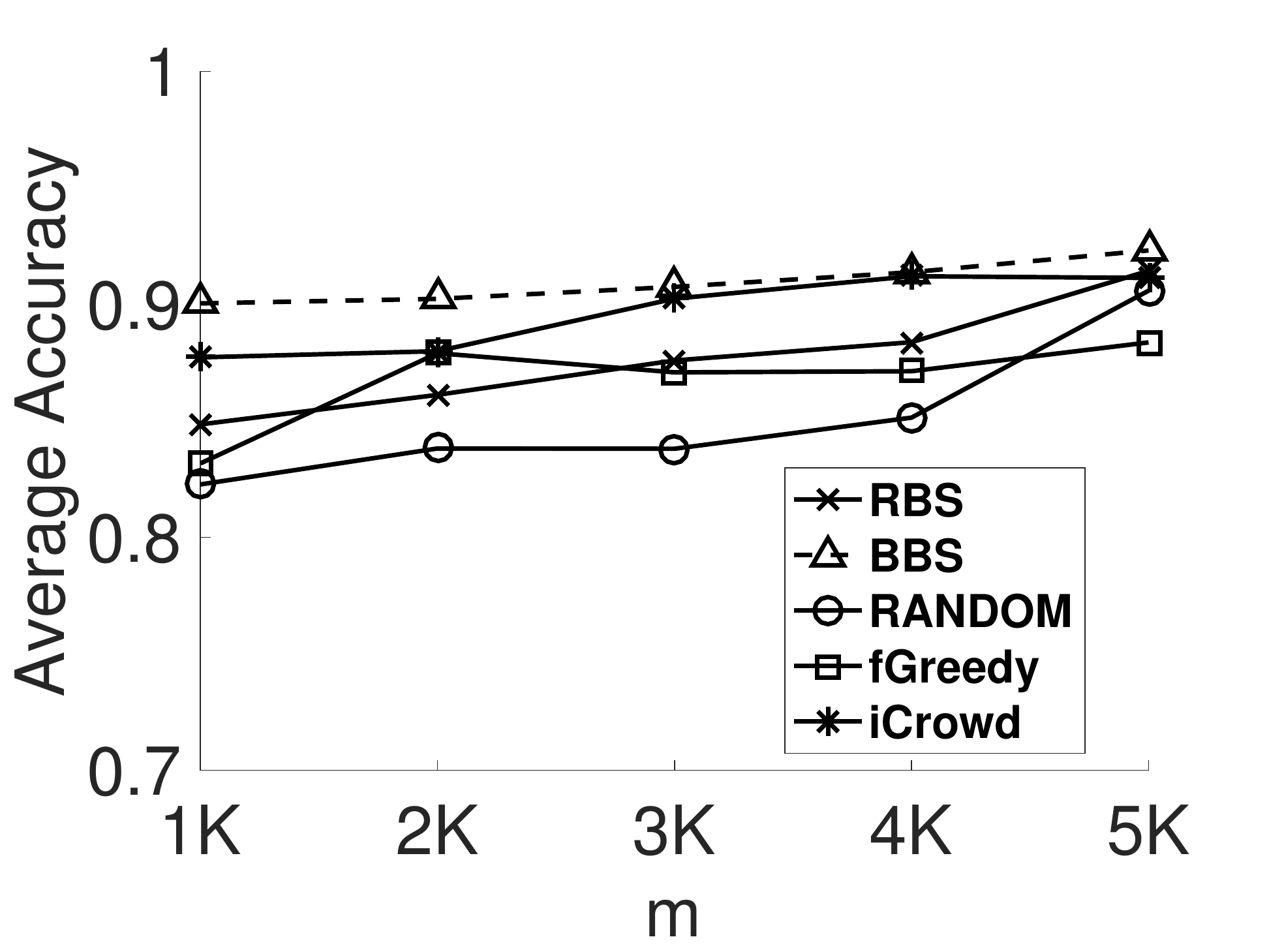}}
		\label{subfig:task_em}}
	\subfigure[][{\scriptsize Average Accuracy}]{
		\scalebox{0.18}[0.18]{\includegraphics{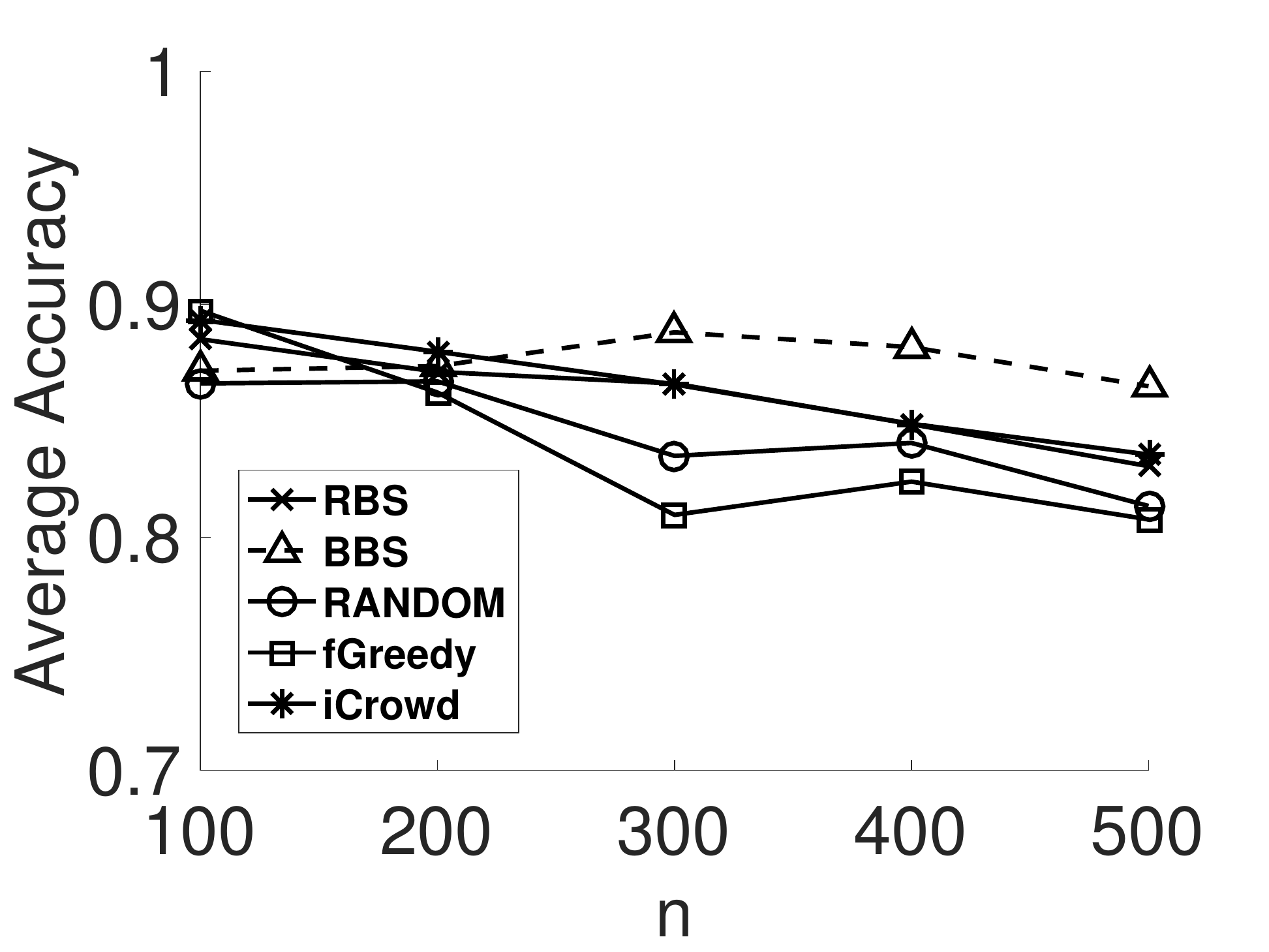}}
		\label{subfig:worker_em}}
	\subfigure[][{\scriptsize Average Accuracy}]{
		\scalebox{0.18}[0.18]{\includegraphics{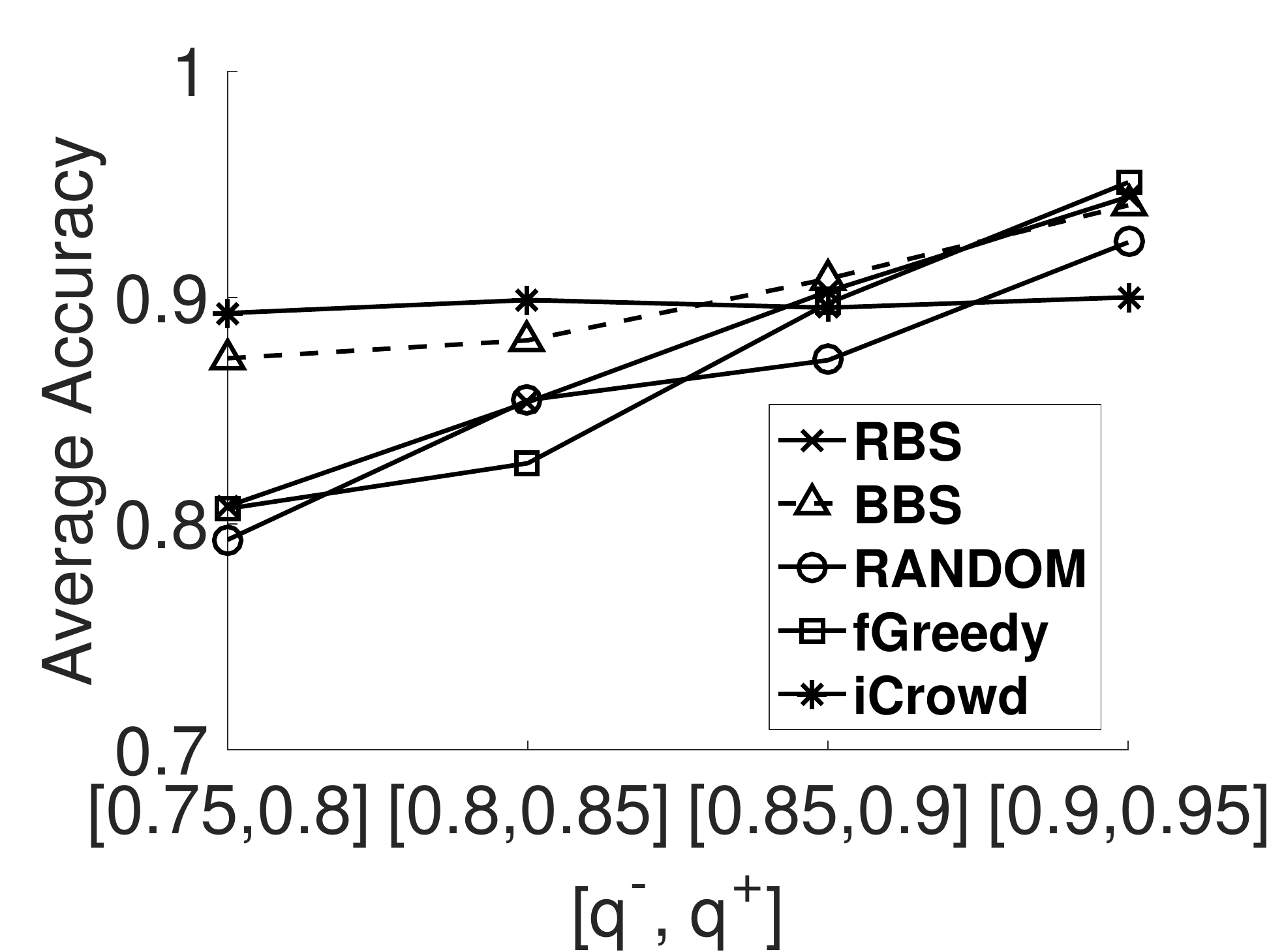}}
		\label{subfig:quality_em}}
	\subfigure[][{\scriptsize Average Accuracy}]{
		\scalebox{0.18}[0.18]{\includegraphics{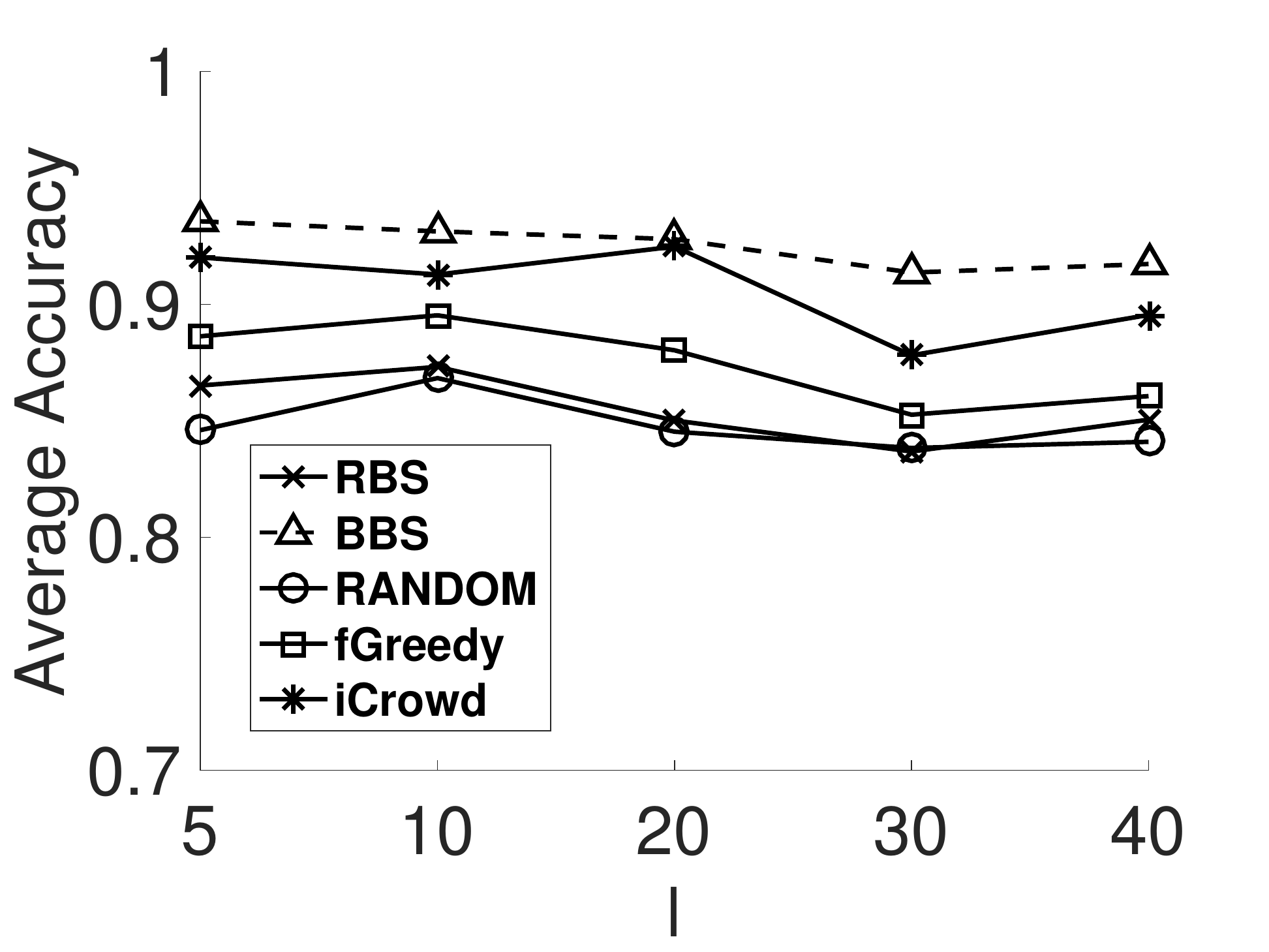}}
		\label{subfig:category_em}}
	\caption{\small Quality of the Results Estimated with the Expectation Maximization Method.}
	\label{fig:em_quality}
\end{figure}

We utilize the expectation maximization algorithm proposed by Dawid and Skene [15], using maximum likelihood, for inferring the error rates of workers who contribute to the micro-tasks and the accuracies of the tasks. Ipeirotis et al. [24]  implements Dawid and Skene's algorithm and open source it (source code: \url{https://github.com/ipeirotis/Get-Another-Label}), which is used to estimate the average accuracies in our experiments about the task scheduler module. In Figure \ref{fig:em_quality}, we present the average accuracies of tested algorithm in the experiments on both real and synthetic data sets.

Similar to the results estimated with the majority voting method on the real data sets, in Figure \ref{subfig:real_em}, our BBS can achieve the highest accuracies on DED, CWD, BPRV and ASM. On SAA, RBS can achieve the highest accuracy. iCrowd achieves higher accuracies than fGreedy on SAA and ASM. 

For the experiments on synthetic data sets, in Figures \ref{subfig:task_em}, \ref{subfig:worker_em} and \ref{subfig:category_em}, RBS, BBS, iCrowd and fGreedy can achieve high and close average accuracies. Specifically, our BBS can achieve slightly higher average accuracies than other tested algorithm. In Figure \ref{subfig:quality_em}, when the required expected qualities of workers are lower than [0.85, 0.9], iCrowd can achieve the highest average accuracies. When the required expected qualities of tasks are in the range of [0.85, 0.9], the tested algorithms except RANDOM can achieve close average accuracies, which are higher than that of RANDOM. However, when the required expected qualities of tasks are in the range of [0.9, 0.95], iCrowd achieves the lowest average accuracy. The reason is that iCrowd constantly assigns $k$ workers to each task while other algorithms dynamically assign workers to tasks to satisfy the required qualities.

\begin{figure}[h!]
	\centering
	\subfigure[][{\scriptsize Maximum Latency}]{
		\scalebox{0.18}[0.18]{\includegraphics{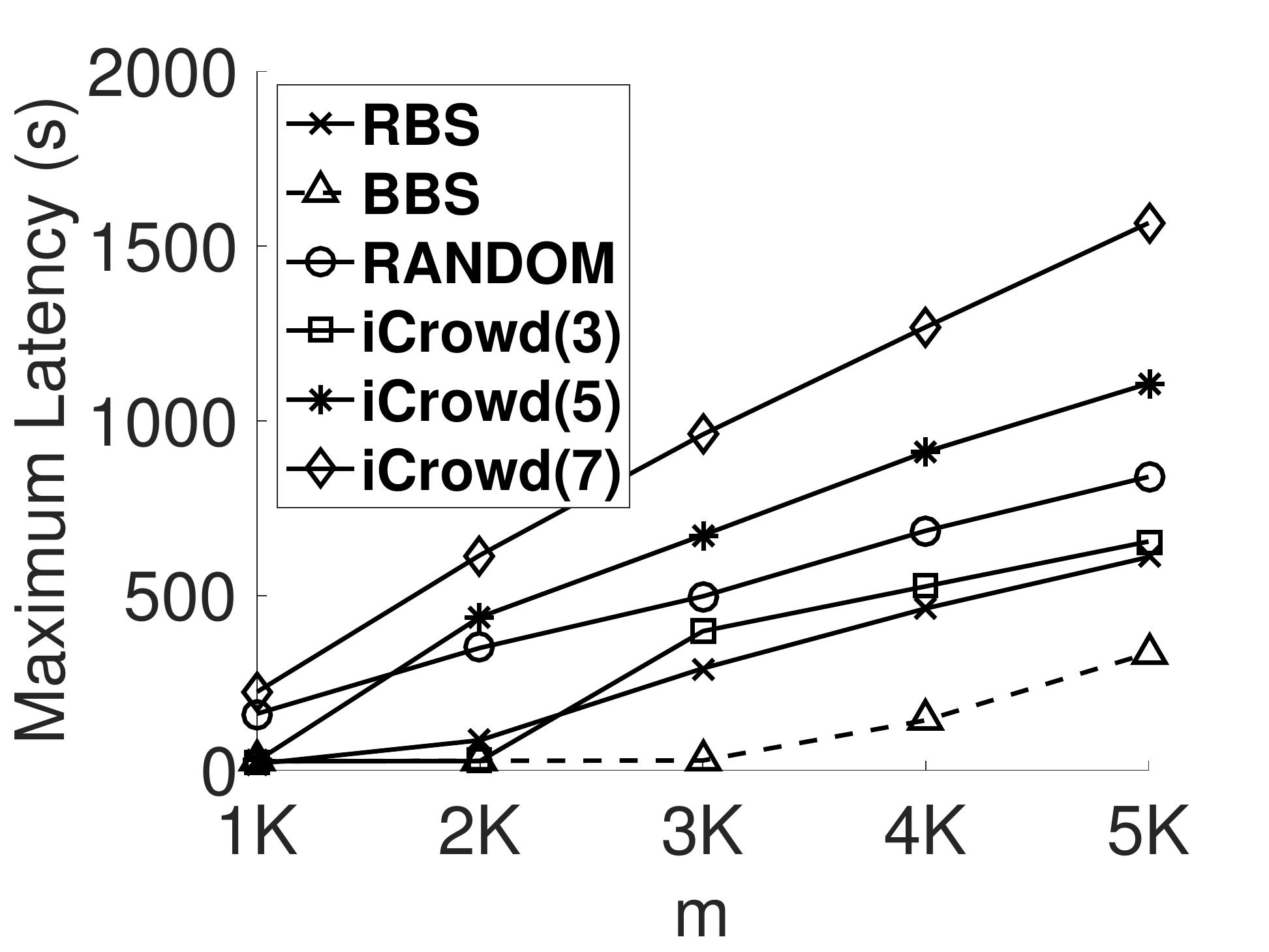}}
		\label{subfig:task_latency_icrowd}}
	\subfigure[][{\scriptsize Average Accuracy}]{
		\scalebox{0.18}[0.18]{\includegraphics{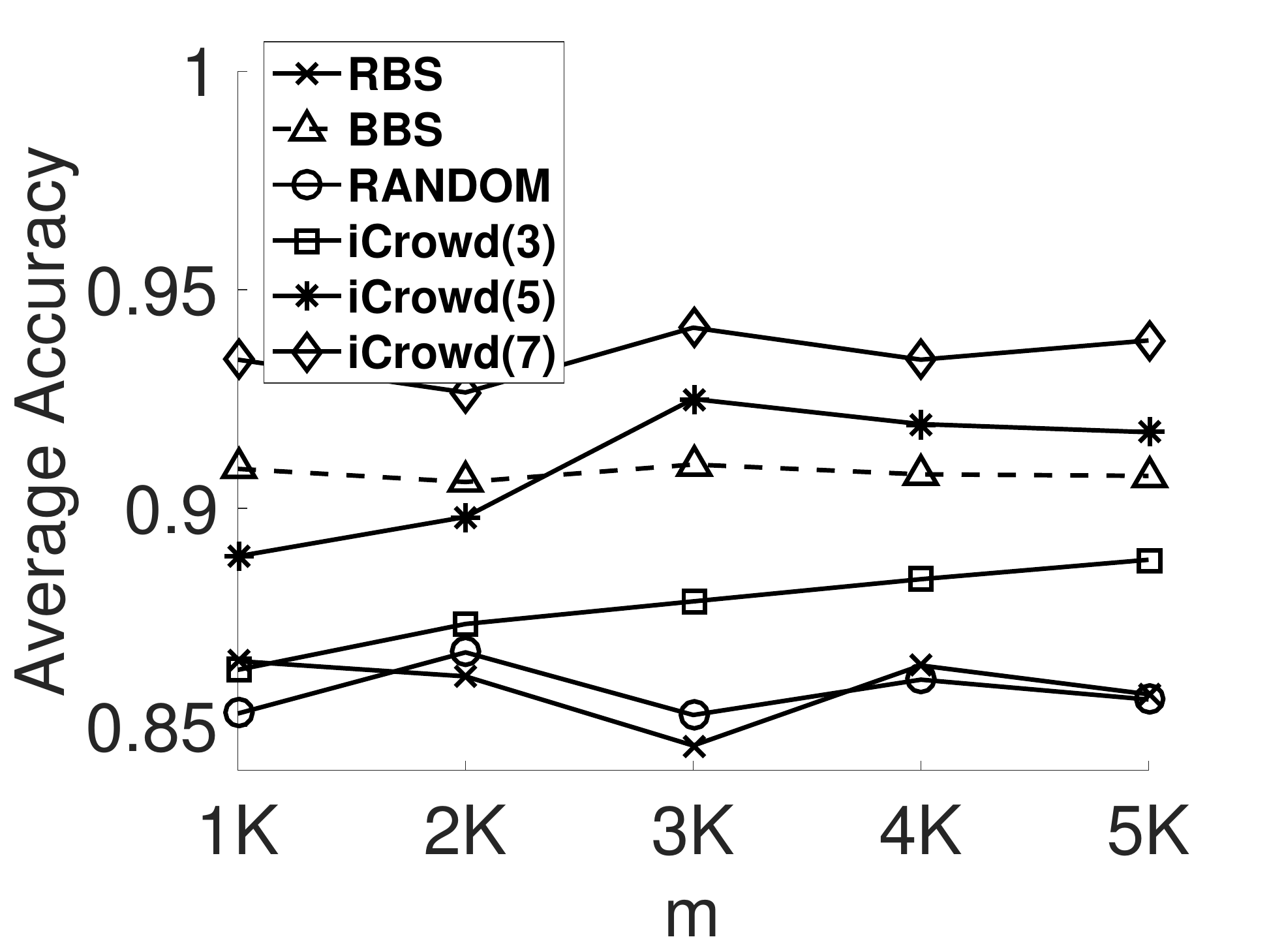}}
		\label{subfig:task_accuarcy_icrowd}}\vspace{-2ex}
	\caption{\small Effect of parameter $k$ of iCrowd on varying the number $m$ of tasks.}
	\label{fig:task_fig_icrowd}
\end{figure}
\noindent\textbf{Appendix C. The Experimental Results with iCrowd Using Different Parameter $k$ on Varying Number $m$ of Tasks.}
We show the results achieve by iCrowd algorithm using different parameter $k$ (the number of the workers assigned to each task) comparing with our task scheduler methods. We increase the value $k$ to 5 and 7. To better present the results, we show their results through varying the number of tasks.  In Figure \ref{fig:task_fig_icrowd}, the values of $k$ are recorded in the brackets behind ``iCrowd''. We present the maximum latencies of tested algorithms in Figure \ref{subfig:task_latency_icrowd}. We can see that when the parameter $k$ increases from 3 to 7, the maximum latencies of iCrowd increase obviously. Specifically, when $k=3$, the maximum latencies of iCrowd are lower than that of RANDOM but higher than that of our RBS and BBS. When $k>=5$, the maximum latencies of iCrowd are higher than that of RANDOM. As for the average accuracies shown in Figure \ref{subfig:task_accuarcy_icrowd}, when the parameter $k$ increases from 3 to 5, the average accuracies achieved by iCrowd also increase. The reason is that when the number $k$ of workers assigned to each task increases, the accuracies will increase according to Corollary 3.2. Specifically, when $k>=5$ and the number $m$ of tasks is larger than 2K, the average accuracies achieved by iCrowd are higher than that of our RBS and BBS. 

In conclude,  larger $k$ will lead to higher average accuracies achieved by iCrowd, however, and will also increase the maximum latencies of iCrowd.

\vspace{-4ex}
\begin{IEEEbiography}[{\includegraphics[width=1in,height=1.25in,clip,keepaspectratio]{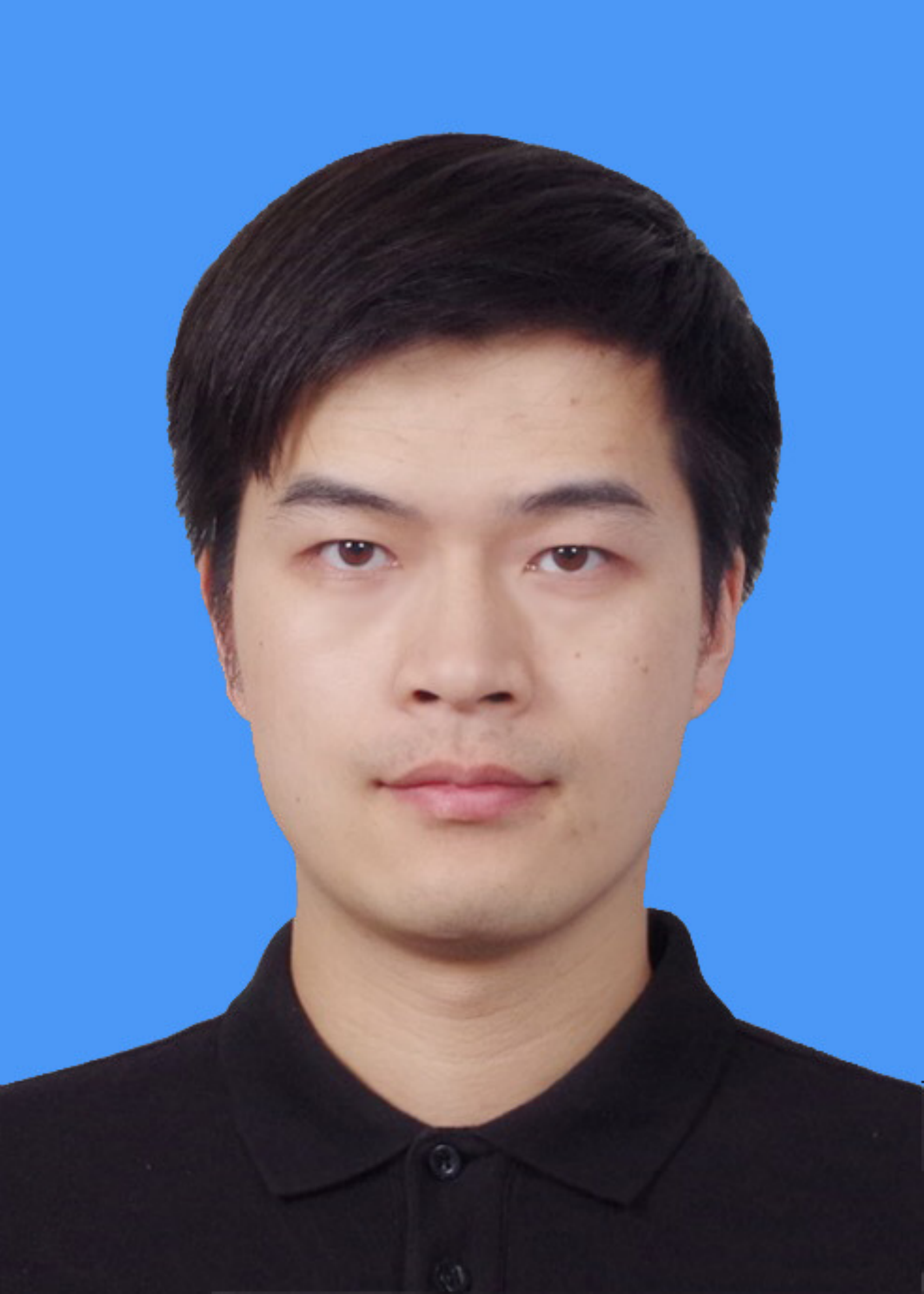}}]
	{Peng Cheng} received his BS degree and MA degree in Software
	Engineering in 2012 and 2014, from Xi'an Jiaotong University, China, and the PhD degree in Computer Science and Engineering from the Hong Kong University of Science and Technology  (HKUST) in 2017.
	He is now a Post-Doctoral Fellow in the Department of Computer Science and
	Engineering in HKUST. His
	research interests include crowdsourcing, spatial crowdsourcing and ridesharing.
\end{IEEEbiography}\vspace{-4ex}

\begin{IEEEbiography}[{\includegraphics[width=1in,height=1.25in,clip,keepaspectratio]{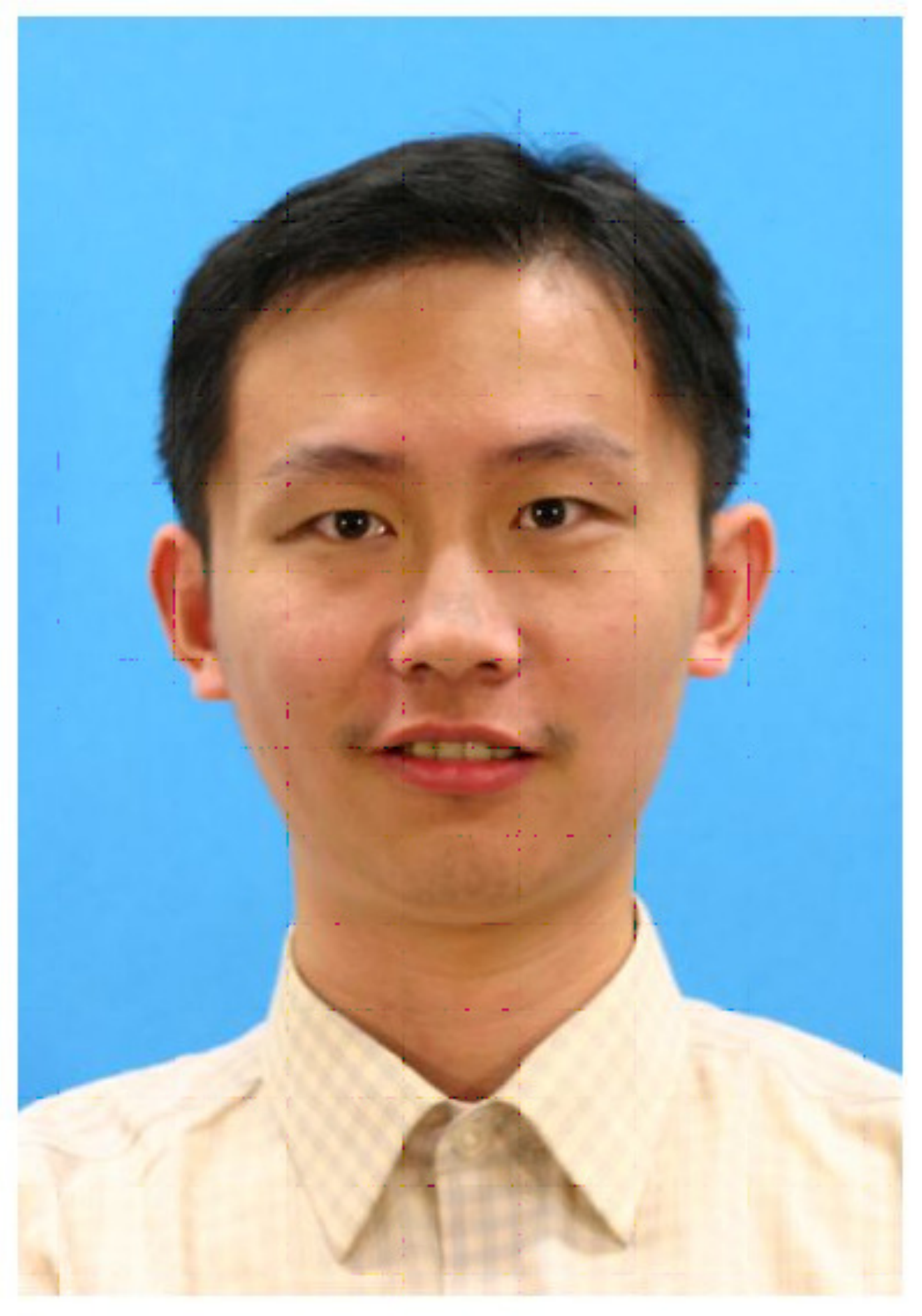}}]
	{Xiang Lian} received the BS degree from the Department of Computer
	Science and Technology, Nanjing University, and the PhD degree in
	computer science from the Hong Kong University of Science and
	Technology. He is now an assistant professor in the Computer Science
	Department at the Kent State University. His
	research interests include probabilistic/uncertain/inconsistent,
	uncertain/certain graph, time-series, and spatial databases.
\end{IEEEbiography}\vspace{-4ex}
\begin{IEEEbiography}[{\includegraphics[width=1in,height=1.25in,clip,keepaspectratio]{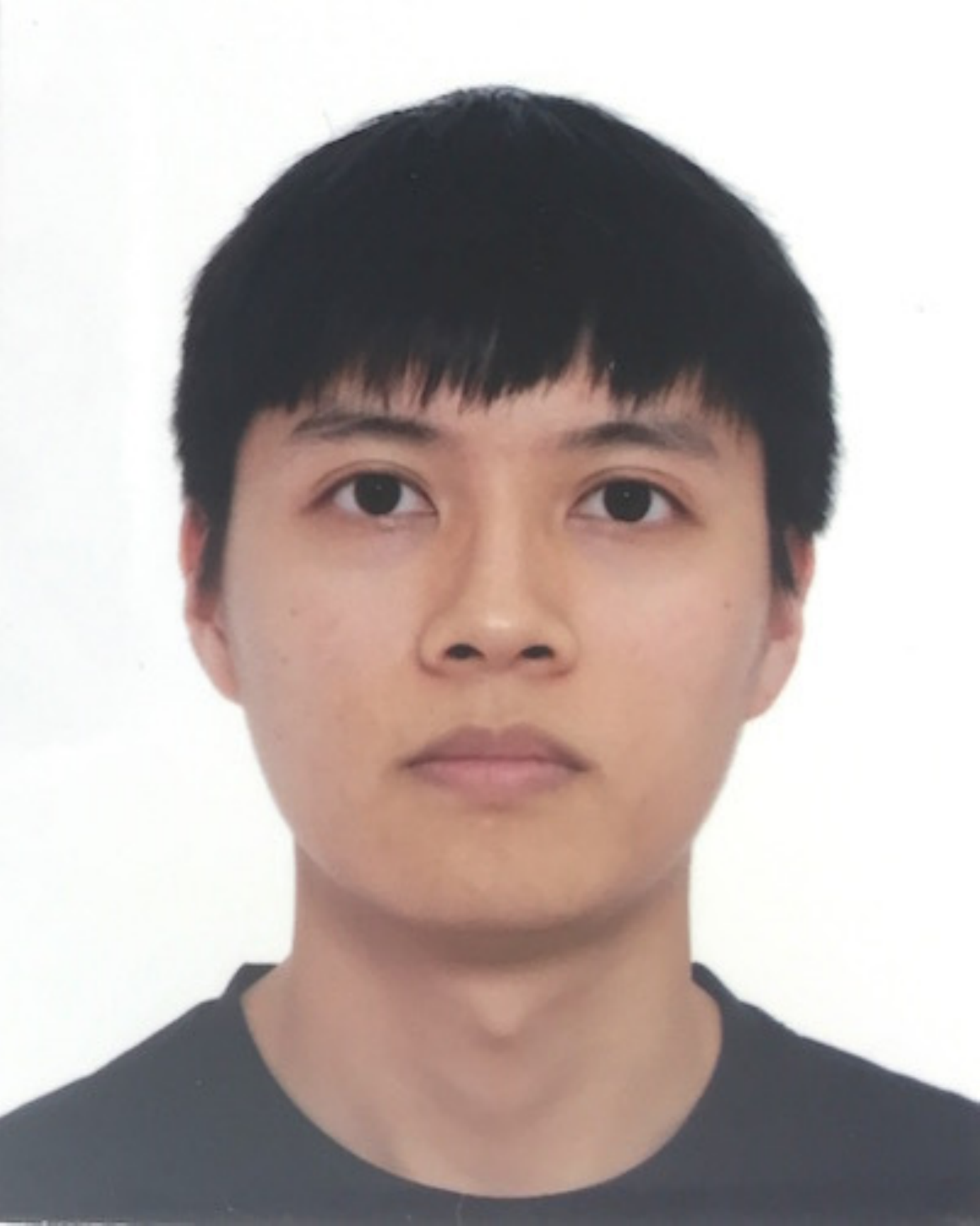}}]
	{Xun Jian} received his B.Eng. degree in Software Engineering in 2014 from Beihang University. Then he received his M.Sc. degree in Information Technology in 2016 from The Hong Kong University of Science and Technology(HKUST). Now he is a Ph.D. student in the Department of Computer Science at HKUST. His research interests include crowdsourcing and algorithms on graph.
\end{IEEEbiography}
\begin{IEEEbiography}[{\includegraphics[width=1in,height=1.25in,clip,keepaspectratio]{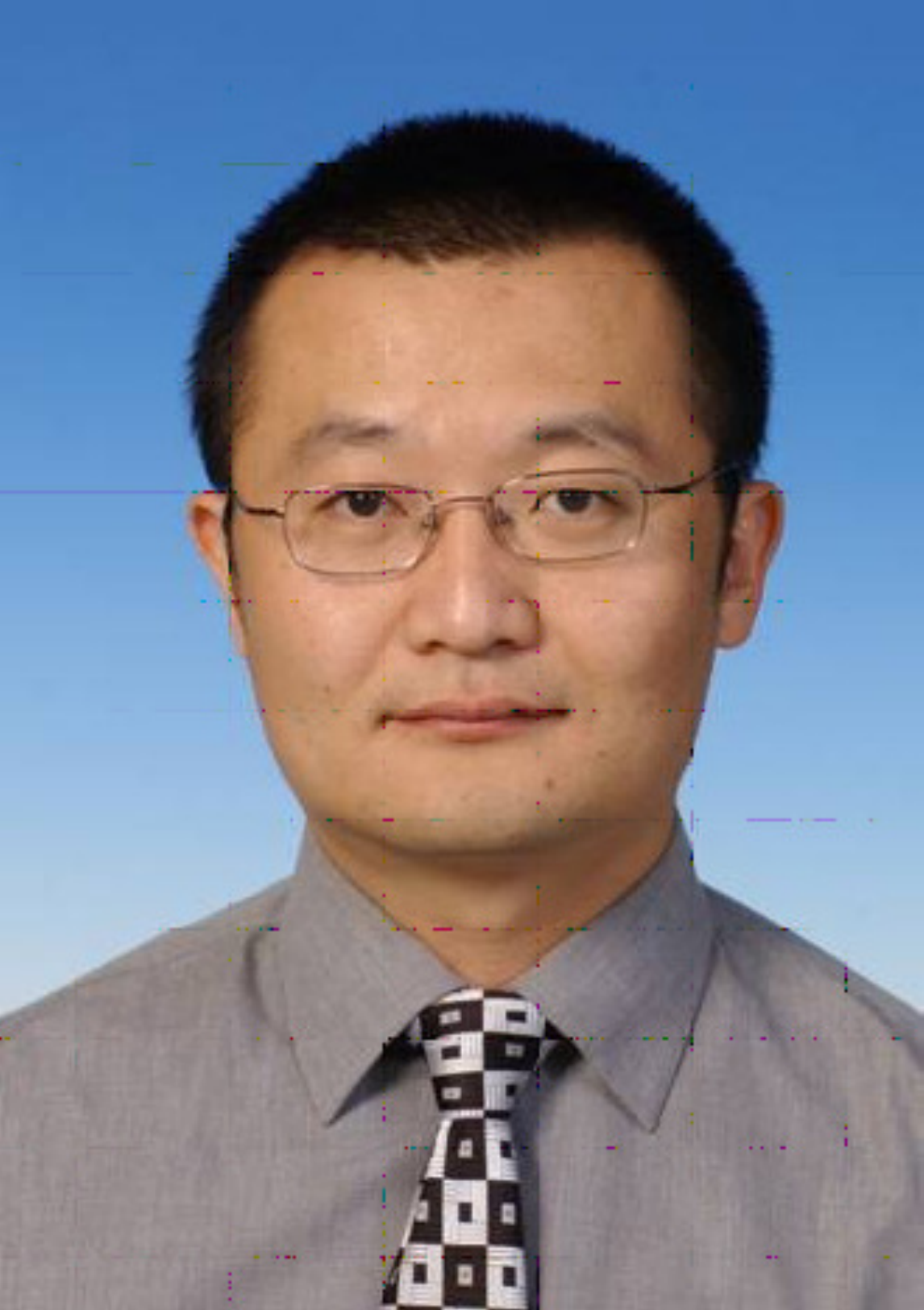}}]
	{Lei Chen} received his BS degree in Computer Science and
	Engineering from Tianjin University, China, in 1994, the MA degree
	from Asian Institute of Technology, Thailand, in 1997, and the PhD
	degree in computer science from University of Waterloo, Canada, in
	2005. He is now a professor in the Department of Computer
	Science and Engineering at Hong Kong University of Science and
	Technology. His research interests include crowdsourcing, uncertain
	and probabilistic databases, multimedia and
	time series databases, and privacy. He is a member of the IEEE.
\end{IEEEbiography}

\end{document}